\theoremstyle{plain}
\newtheorem{propn}{Proposition}[section]
\newtheorem{thm}[propn]{Theorem}
\newtheorem{lemma}[propn]{Lemma}
\newtheorem{cor}[propn]{Corollary}
\theoremstyle{definition}
\newtheorem{defn}[propn]{Definition}
\newtheorem{example}[propn]{Example}
\newtheorem{ass}[propn]{Assumption}
\theoremstyle{remark}
\newtheorem*{rem}{Remark}
\newtheorem*{rems}{Remarks}
\newtheorem*{notn}{Notation}
\newtheorem*{motivation}{Motivation}
\newtheorem*{notation}{Notation and conventions}
\newcommand{\indf}[1]{1_{#1}}
\newcommand{\smallgaussintegrand}[3]%
{\left[\begin{smallmatrix}
#1 & #3 \\[0.5ex]
#2 & 0 \end{smallmatrix}\right]}
\newcommand{\smallQFintegrand}[3]%
{\left[\begin{smallmatrix}
#1 & #3 \\[0.5ex]
#2 & 0 \end{smallmatrix}\right]}
\newcommand{\smallQSintegrand}[4]{\left[\begin{smallmatrix}
#1 & #3 \\[0.5ex]
#2 & #4 \end{smallmatrix}\right]}
\newcommand{\smallmat}[4]{\left[\begin{smallmatrix}
#1 & #3 \\[0.5ex]
#2 & #4 \end{smallmatrix}\right]}
\newcommand{\smallcol}[2]{\left[\begin{smallmatrix}
#1 \\[0.5ex]
#2 \end{smallmatrix}\right]}
\newcommand{\smallrow}[2]{\left[\begin{smallmatrix}
#1 &
#2 \end{smallmatrix}\right]}
\newcommand{\gaussintegrand}[3]%
{\begin{bmatrix}
#1 & #3 \\[0.5ex]
#2 & 0 \end{bmatrix}}
\newcommand{\mat}[4]%
{\begin{bmatrix}
#1 & #3 \\[0.5ex] 
#2 & #4 \end{bmatrix}}
\newcommand{\col}[2]%
{\begin{bmatrix}
#1 \\[0.5ex]
#2 \end{bmatrix}}
\newcommand{\row}[2]%
{\begin{bmatrix}
#1 &
#2 \end{bmatrix}}
\newcommand{\phiinitrho}{\phi^\init_\rho}
\newcommand{\phibarinitrho}{\ol{\phi}^\init_\rho}
\newcommand{\Bchil}{B_c}
\newcommand{\Bchilbar}{\Bchil}
\newcommand{\Bcparticle}{\Bchil}
\newcommand{\chil}{c}
\newcommand{\chilbar}{\chil}
\newcommand{\Hint}{H_{\mathsf{I}}}
\newcommand{\Hpar}{H_{\mathsf{P}}}
\newcommand{\Hsys}{H_{\mathsf{S}}}
\newcommand{\Htot}{H_{\mathsf{T}}}
\newcommand{\BVCP}{B_{V,C,P}}
\newcommand{\BVCPprime}{B_{V',C',P'}}
\newcommand{\BVCPinverse}{B_{V^*,-VCV^*,VPV^*}}
\newcommand{\MVCP}{M^{V,C,P}}
\newcommand{\SigmaAVCP}{\Sigma^{A,V,C,P}}
\newcommand{\Vac}{\Omega}
\newcommand{\vac}{\omega}
\newcommand{\Weyl}{W}
\newcommand{\aminusplus}{a^\mp}
\newcommand{\aplusminus}{a^\pm}
\newcommand{\aplus}{a^{+}}
\newcommand{\aminus}{a^{-}}
\newcommand{\WeylH}{\Weyl_\Hil}
\newcommand{\WeylHbar}{\Weyl_\Hilbar}
\newcommand{\WeylSigma}{\Weyl_\Sigma}
\newcommand{\VacH}{\Vac_\Hil}
\newcommand{\VacHbar}{\Vac_{\ol{\Hil}}}
\newcommand{\vphiH}{\vphi_\Hil}
\newcommand{\vphiSigma}{\vphi_\Sigma}
\newcommand{\IH}{I_\Hil}
\newcommand{\CCRH}{CCR(\Hil)}
\newcommand{\WeylHplusHbar}{\Weyl_{\Hil \op \Hilbar}}
\newcommand{\VacHplusHbar}{\Vac_{\Hil \op \Hilbar}}
\newcommand{\RH}{R_\Hil}
\newcommand{\RHbar}{R_{\Hilbar}}
\newcommand{\RSigma}{R_\Sigma}
\newcommand{\aplusH}{\aplus_\Hil}
\newcommand{\aminusH}{\aminus_\Hil}
\newcommand{\aplusminusH}{\aplusminus_\Hil}
\newcommand{\aminusplusHbar}{\aminusplus_\Hilbar}
\newcommand{\aplusminusHplusHbar}{\aplusminus_{\Hil\op\Hilbar}}
\newcommand{\aminusplusHplusHbar}{\aminusplus_{\Hil\op\Hilbar}}
\newcommand{\aplusHplusHbar}{\aplus_{\Hil\op\Hilbar}}
\newcommand{\aminusHplusHbar}{\aminus_{\Hil\op\Hilbar}}
\newcommand{\aplusSigma}{\aplus_\Sigma}
\newcommand{\aminusSigma}{\aminus_\Sigma}
\newcommand{\aplusminusSigma}{\aplusminus_\Sigma}
\newcommand{\Domain}{\mathcal{D}}
\newcommand{\Torus}{\mathbb{T}}
\newcommand{\aqf}{\mathfrak{a}}
\newcommand{\aqfSigma}{\mathfrak{a}_\Sigma}
\newcommand{\Hilbar}{{\ol{\Hil}}}
\newcommand{\hilbar}{{\ol{\hil}}}
\newcommand{\fbar}{\ol{f}}
\newcommand{\xbar}{\ol{x}}
\newcommand{\ybar}{\ol{y}}
\newcommand{\zbar}{\ol{z}}
\newcommand{\Abar}{\ol{A}}
\newcommand{\Lbar}{\ol{L}}
\newcommand{\Sbar}{\ol{S}}
\newcommand{\Gbox}{G^{\square}}
\newcommand{\Gsbox}{G_s^{\square}}
\newcommand{\Hsbox}{H_s^{\square}}
\newcommand{\SigmaA}{\Sigma_A}
\newcommand{\SigmaAB}{\Sigma_{A,B}}
\newcommand{\Sigmaiota}{\Sigma \circ \iota}
\newcommand{\Stepk}{\Step_\noise}
\newcommand{\StepK}{\Step_\Kil}
\newcommand{\Exp}{\mathcal{E}}
\newcommand{\ExpK}{\Exp_\Kil}
\newcommand{\Khat}{{\wh{\Kil}}}
\newcommand{\xhat}{{\wh{x}}}
\newcommand{\yhat}{{\wh{y}}}
\newcommand{\fhat}{{\wh{f}}}
\newcommand{\ghat}{{\wh{g}}}
\newcommand{\gbar}{{\ol{g}}}
\newcommand{\uef}{u\ve(f)}
\newcommand{\veg}{v\ve(g)}
\newcommand{\uefgbar}{u\ve(f, \gbar)}
\newcommand{\I}{\mathbb{I}}
\newcommand{\ExpectVac}{\mathbb{E}_\Omega}
\newcommand{\shiftK}{\sigma^\Kil}
\newcommand{\YF}{Y^F}
\newcommand{\YFstar}{Y^{F^*}}
\newcommand{\YFtilde}{Y^{\Ftilde}}
\newcommand{\Utilde}{\wt{U}}
\newcommand{\Ftilde}{\wt{F}}
\newcommand{\Ltilde}{\wt{L}}
\newcommand{\Ktilde}{\wt{K}}
\newcommand{\Htilde}{\wt{H}}
\newcommand{\Gtilde}{\wt{G}}
\newcommand{\Wtilde}{\wt{W}}
\newcommand{\Qtilde}{\wt{Q}}
\newcommand{\Rtilde}{\wt{R}}
\newcommand{\Ttilde}{\wt{T}}
\newcommand{\Atilde}{\wt{A}}
\newcommand{\Sigmatilde}{\wt{\Sigma}}
\newcommand{\pitilde}{\wt{\pi}}
\newcommand{\rhotilde}{\wt{\rho}}
\newcommand{\Lindbladian}{\mathcal{L}}
\newcommand{\jhat}{\wh{\jmath}}
\newcommand{\jnoise}{j^\noise}
\newcommand{\jKil}{j^\Kil}
\newcommand{\kbar}{\ol{\noise}}
\newcommand{\AplusSigma}{A^+_\Sigma}
\newcommand{\AminusSigma}{A^-_\Sigma}
\newcommand{\pT}{\mathsf{T}}
\newcommand{\pc}{\mathsf{c}}
 \newcommand{\tauscale}{\mathit{s}_\tau}
\newcommand{\ejk}{e_{j,k}}
\newcommand{\Huv}{H^u_v}
\newcommand{\ri}{\mathrm{i}}
\newcommand{\sinhAbar}{\ol{\sinh A}}
\newcommand{\LambdaSigma}{\Lambda^\Sigma}
\newcommand{\particle}{\mathfrak{p}}
\newcommand{\half}{\tfrac{1}{2}}
\newcommand{\kaa}{\kil_{\alpha \alpha}}
\newcommand{\kab}{\kil_{\alpha \beta}}
\newcommand{\ka}{\kil_\alpha}
\newcommand{\kb}{\kil_\beta}
\newcommand{\Pa}{P_{\alpha}}
\newcommand{\Pab}{P_{\alpha \beta}}
\newcommand{\Pba}{P_{\beta \alpha}}
\newcommand{\eia}{e^i_\alpha}
\newcommand{\ejb}{e^j_\beta}
\newcommand{\eijab}{e^{ij}_{\alpha \beta}}
\newcommand{\ve}{\varepsilon}
\newcommand{\vp}{\varpi}
\newcommand{\vphi}{\varphi}
\newcommand{\vrho}{\varrho}
\newcommand{\Hil}{\mathsf{H}}
\newcommand{\hil}{\mathsf{h}}
\newcommand{\hilone}{\hil_1}
\newcommand{\hiltwo}{\hil_2}
\newcommand{\Kil}{\mathsf{K}}
\newcommand{\kil}{\mathsf{k}}
\newcommand{\init}{\mathfrak{h}}
\newcommand{\noise}{\mathsf{k}}
\newcommand{\khat}{{\wh{\noise}}}
\newcommand{\Fock}{\mathcal{F}}
\newcommand{\Step}{\mathbb{S}}
\newcommand{\rd}{\mathrm{d}}
\newcommand{\Real}{\mathbb{R}}
\newcommand{\Rplus}{{\Real_+}}
\newcommand{\Comp}{\mathbb{C}}
\newcommand{\Zplus}{\mathbb{Z}_+}
\newcommand{\ip}[3][]{#1\langle #2, #3 #1\rangle}
\newcommand{\norm}[1]{\lVert #1 \rVert}
\newcommand{\bra}[1]{\langle #1 \vert}
\newcommand{\ket}[1]{\vert #1 \rangle}
\newcommand{\dyad}[2]{\ket{#1}\bra{#2}}
\newcommand{\sa}{{\text{\tu{sa}}}}
\newcommand{\wh}{\widehat}
\newcommand{\wt}{\widetilde}
\newcommand{\ol}{\overline}
\newcommand{\ot}{\otimes}
\newcommand{\otalg}{\, \, \underline{\ot}\, \,}
\newcommand{\uwot}{\overline{\ot}\,}
\newcommand{\op}{\oplus}
\newcommand{\les}{\leqslant}
\newcommand{\ges}{\geqslant}
\newcommand{\tu}{\textup}
\DeclareMathOperator{\tr}{tr}
\DeclareMathOperator{\Dom}{Dom}
\DeclareMathOperator{\Ran}{Ran}
\DeclareMathOperator{\Ranbar}{\overline{\Ran}}
\DeclareMathOperator{\Lin}{Lin}
\DeclareMathOperator{\Linbar}{\overline{\Lin}}
\DeclareMathOperator{\Ker}{Ker}
\DeclareMathOperator{\id}{id}
\DeclareMathOperator{\re}{Re}
\DeclareMathOperator{\im}{Im}
\DeclareMathOperator{\spec}{spec}
\DeclareMathOperator{\st.}{st.}
\newenvironment{alist}
{

\begin{enumerate}\addtolength{\itemsep}{0.5ex}}
{\end{enumerate}}
\newenvironment{rlist}
{

\begin{enumerate}\addtolength{\itemsep}{0.5ex}}
{\end{enumerate}}
\numberwithin{equation}{section}
\begin{document}

\title[Quasifree stochastic cocycles and quantum random walks]
{Quasifree stochastic cocycles\\and quantum random walks}

\author[Belton, Gnacik, Lindsay and Zhong]{Alexander C.~R.~Belton$^1$, Micha\l\ Gnacik$^2$, J.~Martin Lindsay$^1$ \\
and Ping Zhong$^{1, 3}$}
\address{$^1$Department of Mathematics and Statistics, Lancaster
	University, Lancaster LA1 4YF, United Kingdom;
	email: {\tt a.belton@lancaster.ac.uk},
	{\tt j.m.lindsay@lancaster.ac.uk}}

\address{$^2$School of Mathematics and Physics, Lion Gate Building, Lion Terrace, University of Portsmouth,
	Portsmouth PO1 3HF, United Kingdom;
	email: {\tt michal.gnacik@port.ac.uk}}


\address{$^3$Current address: Department of Mathematics and Statistics,
	University of Wyoming, Dept. 3036, 1000 E. University Avenue,
    Laramie, WY 82071-3036, USA; email: {\tt pzhong@uwyo.edu}}

\keywords{Quantum stochastic calculus;
quasifree representation; heat bath; repeated quantum interactions;
toy Fock space; noncommutative Markov chain; quantum stochastic
dilation; quantum dynamical semigroup; quantum Langevin equation}

\subjclass[2010]{%
81S25 (primary);   
46L53,             
46N50,             
60F17,             
82C10 (secondary). 
}

\begin{abstract}
The theory of quasifree quantum stochastic calculus for
infinite-dimensional noise is developed within the framework of
Hudson--Parthasarathy quantum stochastic calculus. The question of
uniqueness for the covariance amplitude with respect to which a given
unitary quantum stochastic cocycle is quasifree is addressed, and
related to the minimality of the corresponding stochastic
dilation. The theory is applied to the identification of a wide class
of quantum random walks whose limit processes are driven by quasifree
noises.
\end{abstract}

\maketitle

\section{Introduction}
Quantum stochastic calculus for gauge-invariant quasifree
representations of the canonical commutation (and anticommutation)
relations was originally developed in the 1980s; see \cite{BSW},
[$\text{HL}_{1,2}$] and \cite{Lfermi}. The possibilities afforded for
semigroup dilation using such a calculus were further developed
in~\cite{App} and~\cite{LiM}, with the latter treatment using a theory
of integral-sum kernel operators. One-dimensional squeezed noise is
analysed in~\cite{HHKKR}, where additive and multiplicative cocycles
over a finite-dimensional quantum probability space are studied and an
It\^o table is generated. Recently, quasifree stochastic calculus has
been extended to the cases of squeezed states and infinite-dimensional
noise [$\text{LM}_{1,2}$]. A key ingredient of the latter theory is a
partial transpose defined on a class of unbounded operators affiliated
to the noise algebra, which defies the failure of complete boundedness
for the transpose.

Use of quasifree stochastic calculus may be preferred to the standard
theory founded by Hudson and Parthasarathy \cite{HuP,Partha} for both
physical and mathematical reasons \cite{HL2}.  On the one hand, it
describes systems which are more physically realistic, at non-zero
temperatures for example. On the other hand, the quasifree theory
boasts a fully satisfactory martingale representation theorem
\cite{HL1,LM1}, in contrast to the standard theory, whose
representation theorem is restricted by regularity assumptions which
seem hard to overcome [$\text{PS}_{1,2}$].

The purpose of this article is twofold. The first is to develop
quasifree stochastic calculus in a simplified form within the
standard theory, restricting to quasifree states with bounded
covariance amplitudes and unitary quantum stochastic cocycles with
norm-continuous vacuum-expectation semigroups
(Sections~\ref{sec: QF SC} and~\ref{sec: uniqueness}). The second is
to give a deeper explanation of the continuous limit of the Hamiltonian
description of a repeated-interactions model at non-zero temperature.
Various limits in a similar setting were investigated by Attal and
Joye in [$\text{AJ}_{1,2}$]. In particular, the paper \cite{AtJ}
describes how the quantum Langevin equation, obtained as
limit of a repeated-interactions model with particles in a thermal
state, is driven by noises satisfying quasifree It\^o product
relations (Section~\ref{sec: QRW}). Those parts relating to the first
objective are written so as to facilitate the second. Our main results
are Theorems~\ref{thm: 5.11} and~\ref{thm: 5.15},
which may be summarised as follows.
From a faithful, normal state $\rho$ on $B(\particle)$,
with the latter viewed as the
particle observable algebra, and a total Hamiltonian $\Htot(\tau)$ of
repeated-interaction form, acting on the tensor product
$\particle \ot \init$ for a system space $\init$, we derive a
gauge-invariant covariance amplitude $\Sigma(\rho)$ and a quantum
stochastic cocycle $Y$ with the following properties: $Y$ satisfies a
quantum Langevin equation of a particular form, with respect to
$\Sigma(\rho)$-quasifree noise, and the scaled quantum random walks
generated by $\Htot(\tau)$ converge to $Y$ as the time-step parameter
$\tau$ converges to $0$.

The quasifree CCR representations that we employ are of Araki--Woods
type, determined by two maps: the doubling map
\[
\iota = \begin{bmatrix} I_{\noise} \\ -k \end{bmatrix}: 
\noise \to \noise \op \kbar; \ x \mapsto \binom{x}{-\xbar},
\]
where $(\kbar, k)$ is the Hilbert space conjugate to the quasifree
noise-dimension space $\noise$, and an operator
\[
\Sigma = \mat{\Sigma^0_0}{\Sigma^1_0}{\Sigma^0_1}{\Sigma^1_1} \in %
B( \noise \op \kbar ) = %
\mat{ B( \noise ) }{ B( \noise; \kbar ) }%
{ B( \kbar; \noise ) }{ B( \kbar ) }
\]
for which the real-linear map $\Sigma \circ \iota$ is symplectic. The
corresponding Weyl operators $W_\Sigma(f)$ act on the double Boson
Fock space
\[
\Gamma\bigl( L^2 ( \Rplus; \noise \op \kbar ) \bigr) = %
\Gamma\bigl( L^2 ( \Rplus; \noise ) \bigr) \ot %
\Gamma\bigl( L^2 ( \Rplus;  \kbar ) \bigr)
\]
in the following manner:
\[
W_\Sigma(f) :=
W( \Sigma \iota(f) ) =
W( \Sigma^0_0 f - \Sigma^0_1 \fbar ) \ot W( \Sigma^1_0 f - \Sigma^1_1 \fbar )
\quad
\text{ for all $f \in L^2( \Rplus; \noise)$},
\]
where $W(g)$ denotes the Fock--Weyl operator with test function $g$,
and the operators $\iota$ and $\Sigma$ are extended to act on
functions pointwise; for example, $( \Sigma^0_1 f )(t) := \Sigma^0_1 f(t)$ for
all $t\in\Rplus$. The symplectic hypothesis ensures that $W_\Sigma$
defines a CCR representation. This class of representations is
sufficiently general to include a range of interesting examples, while
being concrete enough to render the resulting stochastic calculus
straightforward to employ with a minimum of technicalities. Details of
this representation theory are given in Section~\ref{sec: CCR}.

Section~\ref{sec: QSC} collects the relevant results from standard
quantum stochastic analysis, chosen in light of the requirements for
the passage to quasifree stochastic calculus in
Section~\ref{sec: QF SC}. We motivate the definition of quasifree
stochastic integrals by combining the It\^o-type quantum stochastic
integration of simple processes with the realisation of quasifree
creation and annihilation operators in terms of creation and
annihilation operators for the Fock representation,
for the case of finite degrees of freedom. It is notable that
quasifree stochastic integrability is unaffected by squeezing the
state; indeed, the resulting transformation of quasifree integrands
may be viewed as a change-of-variables formula for quasifree
stochastic calculus (Theorem~\ref{thm: 4.1}). Our approach
demonstrates the central r\^ole in the theory played by a partial
conjugation, which constrains the class of admissible integrands when
the noise is infinite dimensional. This corresponds to the
partial-transpose operation at the heart of the general quasifree
stochastic analysis in [LM$_{1,2}$].  Viewing quasifree
integrals as particular cases of standard quantum stochastic integrals
allows us to employ the existing modern quantum stochastic theory
\cite{Lgreifswald} and to avoid any application of
Tomita--Takesaki theory.  While maintaining strict mathematical
rigour, the simplicity of our approach makes it very suitable for
applications.

Some uniqueness questions are addressed in
Section~\ref{sec: uniqueness}. We first show that the
change-of-variables effect of squeezing on quasifree integrals means
that, for present purposes, we may restrict to gauge-invariant
quasifree states. Then the stochastic generators of quasifree
Hudson--Parthasarathy cocycles on an initial Hilbert space $\init$ are
parameterised by triples of operators~$(A, H, Q)$, where
$A \in B(\noise)$ is non-negative, $H \in B(\init)$ is self-adjoint,
and~$Q \in B(\init; \noise \ot \init)$ is
$\noise$-\emph{conjugatable}; see Definition~\ref{defn: 3.30}. The
set of triples that generate the same cocycle is parameterised by a
class of self-adjoint operators in $B(\noise)$.  Uniqueness for
quasifree Hudson--Parthasarathy cocycles inducing a given inner
Evans--Hudson flow~$j$ (Definition~\ref{defn: EH}) is related to the
minimality of $j$, as a stochastic dilation of its vacuum-expectation
semigroup, in the sense of~\cite{Bhat}.

The final section, Section~\ref{sec: QRW}, concerns quantum random
walks and the repeated-interactions model \cite{AtP}. After a brief
summary of the relevant results from the standard theory
of quantum random walks
\cite{B1,BGL}, we extend the example of Attal and Joye in two
directions: to allow infinite-dimensional noise, and
to incorporate an enlarged class of interaction Hamiltonians.
We show that their example is part of the following more general
phenomenon. If the particles in the repeated-interactions model are
in a faithful normal state with density matrix $\vrho$ then the
quantum Langevin equation which governs the limit cocycle $U$ is
driven by a gauge-invariant quasifree noise with covariance amplitude
determined by the state. This is proved under the assumptions that
$\vrho$ enjoys exponential decay of its eigenvalues, and the interaction
Hamiltonian is conjugatable (with respect to the Hilbert space
$\particle$ on which $\vrho$ acts) and has no diagonal part with respect
to the eigenspaces of $\vrho$ (Theorem~\ref{thm: 5.15}). The result also
includes sufficient further conditions, on the matrix components of the
interaction Hamiltonian, for the quasifree noise to be the unique one
within the class for which $U$ is quasifree. The GNS space given by the
particle state splits naturally into mutually conjugate upper-triangular
and lower-triangular parts; this splitting may be viewed as being the
origin of the double Fock space arising in the relevant CCR
representation.

We expect our results to be of interest to researchers in
quantum optics and related fields; the importance of quantum
stochastic calculus to quantum control engineering, for example, is
clearly demonstrated in many of the papers contained in the
collection~\cite{Gough}.
In future work, we intend to explore quantum control theory within
this quasifree framework. For initial results on quasifree filtering,
which show the potential benefit of using squeezed fields for state
restoration, see~\cite{Bouten}.

\begin{notation}
Throughout, the symbol $\hil$, sometimes adorned with primes or
subscripts, stands for a generic Hilbert space; with this
understanding, we usually refrain from saying ``let $\hil$ and $\hil'$
be Hilbert spaces'', \emph{et cetera}. All Hilbert spaces considered
are complex and separable, with inner products linear in their second
argument. The space of bounded operators from $\hil$ to $\hil'$ is
denoted $B(\hil; \hil')$, and $B ( \hil )_{\sa}$, $B(\hil)_+$,
$U( \hil )$ and $B(\hil)^\times$ denote respectively the sets of
self-adjoint and non-negative operators in
$B( \hil ) := B(\hil; \hil)$, and the groups of unitary operators on $\hil$
and operators in $B(\hil)$ with bounded inverse.

A \emph{conjugate} Hilbert space of~$\hil$ is a pair $(\hilbar, k)$
consisting of an anti-unitary operator~$k$ from $\hil$ to
a Hilbert space $\hilbar$; this is unique up to isomorphism in the
natural sense. For any~$x \in \hil$ and $A \in B(\hil)$, the vector
$k x \in \hilbar$ and the operator $k A k^{-1}\in B( \hilbar)$ are
abbreviated to $\xbar$ and~$\Abar$ respectively. The closed linear
span of a subset $S$ of a Hilbert space is denoted $\Linbar S$; the
range of a bounded operator $T$ and its closure are denoted
$\Ran T$ and $\Ranbar T$ respectively. The domain of an unbounded
operator $T$ is denoted $\Dom T$.  We employ the Dirac-inspired
\emph{bra} and \emph{ket} notation
\[
\bra{x} : \hil \to \Comp; \ y \mapsto \ip{x}{y} %
\qquad \text{ and } \qquad %
\ket{x} : \Comp \to \hil; \ \lambda \mapsto \lambda x,
\]
for any vector $x \in \hil$.

Algebraic, Hilbert-space and ultraweak tensor products are denoted
$\otalg$, $\ot$ and $\uwot$, respectively. The indicator function of a
set $S$ is denoted $\indf{S}$. The group of complex numbers with unit
modulus is denoted $\Torus$. The integer part of a real number~$r$ is
denoted $\lfloor r \rfloor$.
\end{notation}

\section{CCR representations}
\label{sec: CCR}

In this section, we collect some key facts on CCR representations
and quasifree states. In particular, we introduce the squeezing
matrices and AW~amplitudes that determine the class of quasifree
states that are relevant to us.

Recall that every real-linear operator $T: \hil \to \hil'$ is uniquely
decomposable as $L + A$, where $L$ is complex linear and $A$ is
conjugate linear; $L$ and $A$ are referred to as the \emph{linear} and
\emph{conjugate-linear} parts of $T$. Explicitly,
\begin{equation}
\label{eqn: LA}
Lx := \half \big( Tx - \ri \, T (\ri x) \big)
\quad \text{ and } \quad
Ax := \half \big( Tx + \ri \, T (\ri x) \big)
\qquad
\text{ for all } x \in \hil.
\end{equation}

\begin{defn}
\label{defn: sympl}
A real-linear operator $Z: \hil \to \hil'$ is \emph{symplectic}
if it satisfies
\[
\im \ip{ Zx }{ Zy } = \im \ip{ x }{ y }
\quad
\text{ for all } x,y\in\hil.
\]
We denote the space of
symplectic operators from $\hil$ to $\hil'$ by
$S(\hil; \hil')$,
or
$S(\hil)$ when $\hil' = \hil$,
and the group of symplectic automorphisms of $\hil$ by
$S(\hil)^\times$.
\end{defn}

For a complex linear map $T$ from $\hil$ to $\hil'$, it is easily verified that $T$ is
isometric if and only if it is symplectic. In
particular, $U(\hil)$ is the subgroup of $S(\hil)^\times$
consisting of its complex-linear elements.

It is shown in the appendix that symplectic automorphisms of $\hil$
are automatically bounded. Thus $S ( \hil )^\times$ is a subgroup
of the group of bounded invertible real-linear operators on~ $\hil$.

A parameterisation $B = \BVCP$ for the elements of
$S(\hil)^\times$ is also given in the appendix.

For the rest of this section, we fix a Hilbert space $\Hil$ and let
$(\Hilbar, K)$ be its conjugate Hilbert space.

\subsection*{Fock space}
As emphasised by Segal \cite{Segal}, the Boson Fock space over $\Hil$
has two interpretations, particle and wave:
\[
\Gamma( \Hil ) =
\bigoplus_{n=0}^\infty \Hil^{\vee n} =
\Linbar \{ \ve (x): x \in \Hil \}.
\]
Here $\Hil^{\vee n}$ denotes the $n$th symmetric tensor power of
$\Hil$, with $\Hil^{\vee 0} := \Comp$, and $\ve(x)$ is the exponential
vector corresponding to the test vector $x$:
\[
\ve(x) = ( 1, x, x^{\ot 2} / \sqrt{2!}, \cdots ).
\]
The normalised exponential vector
$\exp ( - \half \norm{x}^2 ) \ve(x)$
is denoted $\vp(x)$,
and the distinguished vector $\ve(0) = \vp(0)$
is denoted $\Omega_\Hil$ and called the Fock \emph{vacuum vector}.
For all $x,y \in \Hil$,
\begin{align*}
\ip{ \ve(x) }{ \ve(y) } = \exp \ip{ x }{ y },
\end{align*}
and the map $\lambda \mapsto \ve(x + \lambda y )$ is holomorphic from
$\Comp$ to $\Gamma( \Hil )$.  As well as being total in
$\Gamma(\Hil)$, the exponential vectors are linearly independent.

For any orthogonal decomposition $ \Hil = \Hil_1 \op \Hil_2$, the
Boson Fock space $\Gamma( \Hil )$ is identified with the tensor
product $\Gamma( \Hil_1 ) \ot \Gamma( \Hil_2 )$ via the natural
isometric isomorphism which sends the exponential vector
$\ve(x_1, x_2)$ to~$\ve(x_1) \ot \ve( x_2)$ for all $x_1 \in \Hil_1$
and~$x_2 \in \Hil_2$.

For any $x \in \Hil$, the
\emph{Fock--Weyl operator} $\WeylH(x)$ is the unique unitary operator
on $\Gamma(\Hil)$ such that
\begin{equation}\label{eqn: Fdef}
\WeylH(x) \vp(y) = \exp( -\ri \im \ip{x}{y} ) \vp(x+y)
\quad \text{ for all } y \in \Hil.
\end{equation}
For all $x,y \in \Hil$,
\begin{subequations}
\label{eqn: Fock}
\begin{align}
&
\label{eqn: Freg}
\text{the map } t \mapsto \WeylH(tx) \vp(y)
\text{ is continuous from $\Rplus$ to }\Gamma(\Hil),\\[1ex]
&
\label{eqn: Ftot}
\Linbar \big\{ \WeylH ( z  ) \VacH: \, z \in \Hil \big\} = \Gamma(\Hil),
\\[1ex]
\text{and} \quad &
\label{eqn: Fvac}   
\ip{ \VacH }{ \WeylH (x) \VacH } = \exp( - \half \norm{x}^2 ).
\end{align}
\end{subequations}

\subsection*{CCR representations}
We let $\CCRH$ denote the universal $C^*$-algebra generated by
unitary elements $\{ w_x : x \in \Hil \}$ satisfying the
canonical commutation relations in Weyl form:
\[
w_x w_y = \exp( -\ri\im \ip{x}{y} ) w_{x+y}
\qquad \text{for all } x, y \in \Hil.
\]
Its existence, uniqueness and simplicity were
established in~\cite{Slawny}. By universality, each
operator $B \in S(\Hil)^\times$,
determines a unique automorphism $\alpha_B$ of $\CCRH$ such that
\[
\alpha_B( w_x ) = w_{B x} \qquad \text{for all } x \in \Hil;
\]
see~\cite{BrR,Petz}. The \emph{gauge transformations} of $\CCRH$ are
the automorphisms induced by the unitary operators on $\Hil$
of the form $x \mapsto \lambda x$, where $\lambda \in \Torus$.

If $W$ is a map from $\Hil$ to $U(\hil)$ satisfying the Weyl form of
the canonical commutation relations, then $W = \pi \circ w$ for a
unique representation $\pi$ of $\CCRH$ on $\hil$.  We therefore often
refer to~$W$ itself as the representation.  A representation $W$ of
$\CCRH$ is \emph{regular} if, for all $x \in \Hil$, the unitary
group $( W(tx) )_{t \in \Real}$ is strongly continuous; in this case,
the Stone generator $R(x)$ of the group is called the \emph{field
operator} corresponding to the test vector~$x$ for the regular
representation~$W$.

\subsection*{Fock representation}
It follows from the definition~(\ref{eqn: Fdef}) and
properties~(\ref{eqn: Freg}) and~(\ref{eqn: Ftot}) that the map
$x \mapsto \WeylH(x)$ defines a regular representation of $\CCRH$ with
cyclic vector $\VacH$; this is called the \emph{Fock representation}.
If $\{ \RH(y) : y \in \Hil\}$ is the corresponding set of field
operators then, for any $x \in \Hil$, the \emph{creation operator}
$\aplusH (x)$ and \emph{annihilation operator} $\aminusH (x)$ are
defined by setting
\[
\aplusH( x ) := %
\half ( \RH( \ri x) + \ri \, \RH( x) ) %
\quad \text{and} \quad %
\aminusH( x ) :=
\half ( \RH( \ri x) - \ri \, \RH( x) ).
\]
They are closed and mutually adjoint operators with common domain
$\Dom \RH(ix) \cap \Dom \RH(x)$, on which the following canonical
commutation relations hold \cite{BrR}:
\[
\norm{ \aplusH(x) \xi }^2 = \norm{ \aminusH(x) \xi }^2 + %
\norm{ x }^2 \norm{ \xi }^2.
\]
For any dense subspace
$\Domain$ of $\Hil$, the subspace
$\Lin \{ \ve(z): z \in \Domain \}$
is a common core for all Fock creation and annihilation operators,
on which their actions are as follows:
\[
\aplusH(x) \ve(z) = \frac{\rd}{\rd t}\ve(z + tx) \Big|_{t=0}
\quad \text{ and } \quad
\aminusH(x) \ve(z) = \ip{x}{z} \ve(z) %
\qquad \text{for all } x,z \in \Hil.
\]

\subsection*{Quasifree states and representations}
Let $\aqf$ be a non-negative real quadratic form on $\Hil$, and
suppose
\begin{equation}
\label{eqn: a sigma} 
\aqf[x] \, \aqf[y] \ges \bigl( \im \ip{x}{y} \bigr)^2
\qquad \text{for all } x, y \in \Hil.
\end{equation}
Then there is a unique state $\vphi$ on $\CCRH$ such that
\begin{equation}
\label{eqn: phi a} 
\vphi( w_x) = \exp \bigl( - \half \aqf[x] \bigr)
\qquad \text{for all } x \in \Hil;
\end{equation}
see~\cite{BrR,Petz}. Being non-negative, the form $\aqf$ polarises to
a symmetric bilinear form \cite{Kur}; in other words, the following
map is real linear in each argument:
\[
\Hil \times \Hil \to \Real; \ (x,y) \mapsto %
\tfrac{1}{4} \big( \aqf[x+y] - \aqf[x-y] \big).
\]
In particular, the following regularity property holds:
for all $x, y \in \Hil$, the map $t \mapsto \aqf[ x + t y ]$ is
continuous on~$\Real$.
If $\dim \Hil < \infty$ then $\aqf$ is bounded and therefore
there exists a bounded non-negative real-linear operator $T$ on $\Hil$
such that $\aqf[x] = \re \ip{ x }{ Tx }$ for all $x \in \Hil$.

\begin{defn}
A state $\vphi$ on $\CCRH$ is said to be (mean zero) \emph{quasifree} if
it satisfies~(\ref{eqn: phi a}) for some
non-negative real quadratic form $\aqf$
satisfying~(\ref{eqn: a sigma}); then $\aqf$ is called the
\emph{covariance of} $\vphi$, and any real-linear operator
$Z: \Hil \to \hil$ such that $\norm { Z x }^2 = \aqf[x]$ for all
$x \in \Hil$ is called a \emph{covariance amplitude} for $\vphi$.

A state $\vphi$ on $\CCRH$ is \emph{gauge invariant} if it is
invariant under each gauge transformation, so that
$\vphi ( w_{\lambda x } ) = \vphi ( w_{ x } )$ for all
$\lambda \in \Torus$ and $x \in \Hil$.
\end{defn}

\begin{rem}
Covariances of gauge-invariant quasifree states on $\CCRH$ are
precisely the complex quadratic forms $\aqf$ on $\Hil$ such that
\begin{equation}
\label{eqn: a norm} 
\aqf[x] \ges \norm{ x }^2
\qquad \text{for all } x \in \Hil.
\end{equation}
\end{rem}

\begin{example}
\label{example: Fock vac state}
The \emph{Fock vacuum state}
$\vphiH$ on $\CCRH$,
given by the identity
\[
\vphiH( w_x ) =
\ip{ \VacH }{ \WeylH(x) \VacH } \qquad \text{for all } x \in \Hil,
\]
is the basic example of a
gauge-invariant quasifree state,
in view of~(\ref{eqn: a norm}) and the identity~(\ref{eqn: Fvac}).
\end{example}

\begin{lemma}
\label{lemma: Zphi}
Let $Z \in S(\Hil; \hil)$. Then $Z$ is a covariance amplitude for a
quasifree state $\vphi$ on~$\CCRH$.  Moreover, if $Z$ is complex
linear then $\vphi$ is gauge invariant.
\end{lemma}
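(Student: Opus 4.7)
The plan is to read off the claimed covariance form from $Z$ itself. I would define
\[
\aqf[x] := \norm{Zx}^2 \qquad \text{for all } x \in \Hil,
\]
and then check the three properties needed: (i) $\aqf$ is a non-negative real quadratic form on $\Hil$; (ii) $\aqf$ satisfies the covariance inequality~(\ref{eqn: a sigma}); (iii) under the complex-linearity hypothesis, $\aqf[\lambda x] = \aqf[x]$ for all $\lambda \in \Torus$.

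For (i), non-negativity is automatic. Real quadraticity amounts to the parallelogram identity
\[
\aqf[x+y] + \aqf[x-y] = 2\aqf[x] + 2\aqf[y],
\]
which follows immediately from the real linearity of $Z$ together with the parallelogram law in $\hil$: since $Z(x \pm y) = Zx \pm Zy$, expanding the norms gives the claim. This ensures that $\aqf$ polarises to a symmetric real-bilinear form, as required by the definition of a quasifree covariance.

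For (ii), this is where the symplectic hypothesis enters. Because $Z \in S(\Hil; \hil)$,
\[
\im \ip{x}{y} = \im \ip{Zx}{Zy} \qquad \text{for all } x, y \in \Hil,
\]
and applying the Cauchy--Schwarz inequality in $\hil$ yields
\[
\bigl( \im \ip{x}{y} \bigr)^2 = \bigl( \im \ip{Zx}{Zy} \bigr)^2 \les
\bigl| \ip{Zx}{Zy} \bigr|^2 \les \norm{Zx}^2 \norm{Zy}^2 = \aqf[x]\,\aqf[y].
\]
With (i) and (ii) in hand, the existence and uniqueness result quoted above Definition~\ref{defn: sympl} (the paragraph containing~(\ref{eqn: phi a})) produces a unique state $\vphi$ on $\CCRH$ with $\vphi(w_x) = \exp( -\half \aqf[x] )$, and $Z$ is by construction a covariance amplitude for $\vphi$. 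Finally, if $Z$ is complex linear then for any $\lambda \in \Torus$ and $x \in \Hil$ one has $\aqf[\lambda x] = \norm{\lambda Zx}^2 = |\lambda|^2 \norm{Zx}^2 = \aqf[x]$, so $\vphi( w_{\lambda x}) = \vphi( w_x )$, giving gauge invariance.

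The argument is essentially bookkeeping; there is no genuine obstacle. The only point that requires any care is tracking the definitions, in particular noting that ``real quadratic form'' here means polarisability to a real-bilinear form (hence the parallelogram computation) rather than merely the homogeneity $\aqf[tx] = t^2 \aqf[x]$ for $t \in \Real$, which is transparent from real linearity of $Z$.
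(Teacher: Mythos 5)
Your proposal is correct and follows essentially the same route as the paper: the key step in both is the Cauchy--Schwarz estimate combined with the symplectic identity to verify the covariance inequality~(\ref{eqn: a sigma}), with gauge invariance under complex linearity being immediate. The only difference is that you spell out the parallelogram-law verification that $x \mapsto \norm{Zx}^2$ is a real quadratic form, which the paper leaves implicit.
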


\begin{proof}
The first part follows since
\[
\norm{ Zx } \norm{ Zy } \ges | \ip{ Zx }{ Zy } | \ges %
| \im \ip{ Zx }{ Zy } | = | \im \ip{ x }{ y } |
\qquad \text{for all } x, y \in \Hil.
\]
The second part is immediate.
\end{proof}

\begin{rem}
Proposition~\ref{propn: gi} below shows that a covariance amplitude of
a quasifree state need not be complex linear for the state to be gauge
invariant.
\end{rem}

\begin{defn}
The \emph{doubling map for $\Hil$}
is the following
bounded real-linear operator
defined in terms of its conjugate Hilbert space $(\ol{\Hil}, K)$:
\begin{equation*}
\iota =
\begin{bmatrix} I \\ -K \end{bmatrix} :
\Hil \to \Hil \op \Hilbar,
\quad
x \mapsto \binom{x}{-\xbar}.
\end{equation*}
\end{defn}

Note that the range of the doubling map is total, since
\[
\binom{ x }{ \zbar } = \half %
\big( \iota(x-z) - \ri \, \iota( \ri x + \ri z ) \big) %
\qquad \text{ for all } x, z \in \Hil.
\]

Now set
\begin{equation}
\label{eqn: AWzero}
AW_0(\Hil) := \Big\{ \Sigma = \smallmat{C}{0}{0}{\Sbar}: \
S, C \in B(\Hil)_+, \ S^2 + I_\Hil =  C^2 \Big\} \subseteq %
B( \Hil \op \Hilbar )_+,
\end{equation}
and note that
$AW_0(\Hil) = \big\{ \SigmaA: A \in B( \Hil )_+ \big\}$,
where
\begin{equation*}
\SigmaA :=\begin{bmatrix}
\cosh A & 0 \\0 & \sinhAbar
\end{bmatrix} \in B( \Hil \op \Hilbar )_+.
\end{equation*}

\begin{propn}\label{propn: gi}
Let $\Sigma \in AW_0(\Hil)$. The bounded real-linear operator
$\Sigma \circ \iota$ is symplectic, and the quasifree state on $\CCRH$
with covariance amplitude $\Sigma \circ \iota$ is gauge invariant.

Conversely, let $\vphi$ be a gauge-invariant quasifree state
on~$\CCRH$, the covariance of which is a bounded complex quadratic
form on~$\Hil$. Then $\vphi$ has a covariance amplitude of the
form~$\Sigmaiota$ for a unique operator $\Sigma \in AW_0(\Hil)$.
\end{propn}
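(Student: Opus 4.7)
The plan is to handle the two implications separately, reducing everything to the defining relation $C^2 - S^2 = I_\Hil$ for $AW_0(\Hil)$ and to positive-operator functional calculus.

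For the forward direction, I would first unpack
\[
(\Sigma \circ \iota)(x) = \binom{Cx}{-\Sbar\xbar} = \binom{Cx}{-\overline{Sx}}
\]
and then compute $\ip{(\Sigma\iota)(x)}{(\Sigma\iota)(y)}$; using that the conjugation $K$ reverses inner products, the lower components contribute $\overline{\ip{x}{S^2 y}}$, so the full expression becomes $\ip{x}{C^2 y} + \overline{\ip{x}{S^2 y}}$. Taking imaginary parts and invoking $C^2 - S^2 = I_\Hil$ yields $\im\ip{(\Sigma\iota)(x)}{(\Sigma\iota)(y)} = \im\ip{x}{y}$, so $\Sigma\iota$ is symplectic and Lemma~\ref{lemma: Zphi} supplies a corresponding quasifree state $\vphi$, with covariance $\aqf[x] = \norm{Cx}^2 + \norm{Sx}^2$. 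This is manifestly $|\lambda|^2$-homogeneous, hence invariant under $x \mapsto \lambda x$ for $\lambda \in \Torus$; gauge invariance of $\vphi$ follows from the identity~(\ref{eqn: phi a}).

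For the converse, I would exhibit $\Sigma$ via the functional calculus for a positive operator. Since $\aqf$ is a bounded complex quadratic form, it admits the Riesz representation $\aqf[x] = \ip{x}{Lx}$ for a unique bounded self-adjoint $L \in B(\Hil)$, and the gauge-invariant covariance bound~(\ref{eqn: a norm}) forces $L \ges I_\Hil$. Setting
\[
C := \bigl( \tfrac{1}{2}(L + I_\Hil) \bigr)^{1/2}
\quad\text{and}\quad
S := \bigl( \tfrac{1}{2}(L - I_\Hil) \bigr)^{1/2},
\]
each lies in $B(\Hil)_+$ with $C^2 + S^2 = L$ and $C^2 - S^2 = I_\Hil$; consequently $\Sigma := \smallmat{C}{0}{0}{\Sbar}$ belongs to $AW_0(\Hil)$ and satisfies $\norm{(\Sigma\iota)(x)}^2 = \ip{x}{(C^2+S^2)x} = \aqf[x]$, so $\Sigma \iota$ is a covariance amplitude for $\vphi$.

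Uniqueness is essentially built in: any $\Sigma' = \smallmat{C'}{0}{0}{\Sbar'} \in AW_0(\Hil)$ producing the same covariance must satisfy $C'^2 + S'^2 = L$ and $C'^2 - S'^2 = I_\Hil$, which determine $C'^2$ and $S'^2$ uniquely, and positivity of $C', S'$ then forces $C' = C$, $S' = S$ via uniqueness of the positive square root. I expect the main delicate step to be passing from the bounded-complex-quadratic-form hypothesis to the representation $\aqf[x] = \ip{x}{Lx}$ with $L \ges I_\Hil$ in the possibly infinite-dimensional setting, since the excerpt asserts such an operator representation only for $\dim \Hil < \infty$. Once $L$ is secured, the construction and uniqueness of $\Sigma$ reduce to routine spectral calculus.
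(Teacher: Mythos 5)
Your proposal is correct and follows essentially the same route as the paper: the same direct computation of $\im\ip{\Sigma\iota(x)}{\Sigma\iota(y)}$ using $C^2 - S^2 = I_\Hil$ for the forward direction, and for the converse the representation $\aqf[x] = \ip{x}{Lx}$ with $L \ges I_\Hil$ followed by functional calculus (the paper phrases the existence and uniqueness via the bijection $A \mapsto \cosh 2A$ from $B(\Hil)_+$ onto $\{R \ges I_\Hil\}$, which is equivalent to your positive square roots of $\tfrac12(L \pm I_\Hil)$). The step you flag as delicate is not an issue: the hypothesis that the covariance is a \emph{bounded complex} quadratic form means it polarises to a bounded sesquilinear form, so the Riesz representation applies in any dimension, exactly as the paper tacitly uses.
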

\begin{proof}
Let $\Sigma = \smallmat{C}{0}{0}{\Sbar} \in AW_0(\Hil)$, and set
$A := \sinh^{-1} S \in B(\Hil)_+$, so that $\Sigma = \SigmaA$.
Then, for all $x$, $y \in \Hil$,
\[
\ip{ \Sigma \iota(x) }{ \Sigma \iota(y) } = %
\ip{ Cx }{ Cy } + \ip{ \ol{Sx} }{ \ol{Sy} } =
\ip{x}{y} + 2 \re \ip{Sx}{Sy}.
\]
It follows that $\Sigma \circ \iota$ is symplectic, and is therefore a
covariance amplitude of a quasifree state~$\vphi$ on $\CCRH$.  The
resulting covariance $\aqfSigma: x \mapsto \norm{ \Sigma \iota(x) }^2$
satisfies
\begin{equation}\label{eqn: cosh 2}
\aqfSigma[ x ] = \norm{x}^2 + 2 \norm{ Sx }^2 = %
\ip{ x }{ \cosh 2A \, x } \qquad \text{for all } x \in \Hil,
\end{equation}
and is thereby manifestly gauge invariant.

Conversely, let $\aqf$ be the covariance of a gauge-invariant
quasifree state on $\CCRH$ and suppose that $\aqf$ is bounded. Since
$\aqf$ is bounded and such that $\aqf[x] \ges \norm{ x }^2$ for all
$x \in \Hil$, there is a unique operator $R \in B(\Hil)$ such that
$\ip{ x }{ Rx } = \aqf[x]$ for all $x \in \Hil$, and $R \ges \IH$. The
map $A \mapsto \cosh 2A$ is a bijection from $B(\Hil)_+$ onto
$\{ R \in B(\Hil)_+ : \, R \ges I_\Hil \}$, and therefore, by the
identity~(\ref{eqn: cosh 2}), it follows that $\aqf = \aqfSigma$ for a
unique operator $\Sigma = \SigmaA \in AW_0(\Hil)$.
\end{proof}

We now introduce the notion of squeezing, important in quantum optics.
For any $B \in S(\Hil)^\times$, set
\begin{equation*}
M_B := \mat{L}{-K A}{-A K^{-1}}{\Lbar},
\end{equation*}
where $L$ and $A$ are the linear and conjugate-linear parts of $B$.
Thus $M_B \in B( \Hil \op \ol{\Hil} )$.

\begin{propn}\label{propn: MiotaB}
\mbox{}\par
\begin{alist}
\item If $B\in S(\Hil)^\times$ then $M_B$ is the unique operator
$M \in B(\Hil \op \Hilbar)$ such that $M \circ \iota = \iota \circ B$.
\item The map $B \mapsto M_B$ is a faithful representation of the
group $S(\Hil)^\times$ on $\Hil \op \Hilbar$.
\item The map $( A, B ) \mapsto \Sigma_A M_B$ from
$B(\Hil)_+ \times S(\Hil)^\times$ to $B( \Hil \op \Hilbar )$
is injective.
\end{alist}
\end{propn}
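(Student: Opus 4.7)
The plan is as follows. For part (a), I would verify the identity $M_B \circ \iota = \iota \circ B$ by direct computation from the block form of $M_B$: writing $B = L + A$ with $L$ and $A$ the complex-linear and conjugate-linear parts of $B$, and using $\ol{x} = Kx$ together with $\bar{L} = K L K^{-1}$, the two components of $M_B \binom{x}{-\ol{x}}$ collapse to $Lx + Ax = Bx$ and $-KAx - \bar{L}\,\ol{x} = -K(Lx+Ax) = -\ol{Bx}$, giving $\iota(Bx)$. Uniqueness then follows from the observation recorded immediately before the proposition: the complex-linear span of $\iota(\Hil)$ equals $\Hil \op \Hilbar$, so any bounded complex-linear operator on $\Hil \op \Hilbar$ is determined by its values on the range of $\iota$.

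Part (b) is a short consequence of (a). The identities $M_{B_1} M_{B_2} \circ \iota = \iota \circ (B_1 B_2)$ and $I_{\Hil \op \Hilbar} \circ \iota = \iota \circ I_\Hil$ force, by uniqueness in (a), that $M_{B_1 B_2} = M_{B_1} M_{B_2}$ and $M_{I_\Hil} = I_{\Hil \op \Hilbar}$; in particular, each $M_B$ is invertible with inverse $M_{B^{-1}}$. Faithfulness is immediate from the injectivity of $\iota$, whose first component is the identity on $\Hil$: if $M_B = M_{B'}$ then $\iota \circ B = \iota \circ B'$, and hence $B = B'$.

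For part (c), I would apply both sides of the putative identity $\Sigma_A M_B = \Sigma_{A'} M_{B'}$ to $\iota(x)$; by (a) and the diagonal form of $\Sigma_A$, this reduces the equality to the pair of real-linear identities
\begin{equation*}
\cosh A \cdot B = \cosh A' \cdot B'
\qquad \text{and} \qquad
\sinh A \cdot B = \sinh A' \cdot B'
\end{equation*}
on $\Hil$. Setting $T := B (B')^{-1}$ and splitting $T$ into its complex-linear and conjugate-linear parts, the first identity shows that the conjugate-linear part of $T$ is annihilated by the complex-linear operator $\cosh A$; since $\cosh A \ges I_\Hil$ is invertible, $T$ is complex linear, equal to $(\cosh A)^{-1} \cosh A'$. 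Substituting into the second identity and using invertibility of $\cosh A'$ gives $\tanh A = \tanh A'$, from which functional calculus --- using that $\tanh$ is a bijection from $[0,\infty)$ onto $[0,1)$ --- yields $A = A'$; then $T = I_\Hil$, so $B = B'$.

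The step I expect to require most care is the argument in part (c) separating the complex- and conjugate-linear parts after left-multiplication by $\cosh A$ and $\sinh A$: one must keep careful track of the fact that left composition by a complex-linear operator respects the linear/conjugate-linear splitting of a real-linear operator, and that $\cosh A$ is uniformly invertible whereas $\sinh A$ and $\tanh A$ need not be. Once that is in place, the remainder of (c) is a clean functional-calculus computation.
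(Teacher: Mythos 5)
Your proposal is correct. Parts (a) and (b) follow the paper's proof essentially verbatim: the same block-matrix computation of $M_B\circ\iota$, uniqueness from totality of $\Ran\iota$, and the homomorphism property deduced from uniqueness in (a). Part (c), however, takes a genuinely different route. The paper first uses the group property from (b) to reduce to the single equation $\Sigma_{A_1}=\Sigma_{A_2}M_B$ with $B=B_2B_1^{-1}$, then compares the four block-matrix entries: the off-diagonal entries force the conjugate-linear part of $B$ to vanish, so $B$ is complex linear and hence unitary (being symplectic), and the diagonal entries give $C_1=C_2B$, whence $C_1^2=C_2^2$, $C_1=C_2$ and $B=I_\Hil$ by uniqueness of positive square roots. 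You instead compose both sides with $\iota$ to obtain the two real-linear identities $\cosh A\cdot B=\cosh A'\cdot B'$ and $\sinh A\cdot B=\sinh A'\cdot B'$, kill the conjugate-linear part of $T=B(B')^{-1}$ using invertibility of $\cosh A$ and the fact that left composition by a complex-linear operator respects the linear/conjugate-linear splitting, and then conclude via $\tanh A=\tanh A'$ and the functional calculus. Your argument nowhere uses that $B$ and $B'$ are symplectic (only that they are bounded, invertible and real linear), so it establishes the slightly stronger injectivity of $(A,B)\mapsto\Sigma_A M_B$ on $B(\Hil)_+$ times the group of bounded invertible real-linear operators; the paper's argument trades this generality for the quicker appeal to unitarity of complex-linear symplectic maps. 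Both arguments are complete and of comparable length.
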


\begin{proof}
(a) First note that
\[
M_B \circ \iota = \mat{L}{-KA}{-AK^{-1}}{\Lbar} \col{I}{-K} = %
\col{ L+A }{ -K( A+L ) } = \iota \circ B.
\]
The uniqueness part follows from the totality of $\Ran \iota$.

(b) By definition, the operator $M_{I_\Hil}$ equals $I_{ \Hil \op \Hilbar }$.
It follows from (a) that, for all $B, B' \in S(\Hil)^\times$,
\[
M_B M_{B'} \circ \iota = M_B \circ \iota \circ B' = %
\iota \circ B B' = M_{B B'} \circ \iota,
\]
and so $M_B M_{B'} = M_{B B'}$.
Thus, for each $B \in S(\Hil)^\times$, the operator
$M_B$ is invertible and $( M_B )^{-1} = M_{B^{-1}}$. Furthermore, if
$B$, $B' \in S(\Hil)^\times$ are such that $M_B = M_{B'}$,
then $\iota \circ B = \iota \circ B'$,
so $B = B'$ by the injectivity of $\iota$. Hence (b) holds.

(c) Suppose $( A_1, B_1 )$,
$( A_2, B_2 ) \in B( \Hil )_+ \times S( \Hil )^\times$ are such
that $\Sigma_{A_1} M_{B_1} = \Sigma_{A_2} M_{B_2}$. It follows from
part (b) that $\Sigma_{A_1} = \Sigma_{A_2} M_B$, where
$B = B_2 B_1^{-1}$. Set $C_i = \cosh A_i$ and $S_i = \sinh A_i$,
for~$i = 1$, $2$, and let~$L$ and $A$ be the linear and
conjugate-linear parts of $B$. Then
\[
\mat{C_1}{0}{0}{\ol{S_1}} = %
\mat{C_2}{0}{0}{\ol{S_2}}  \mat{L}{-KA}{-AK^{-1}}{\Lbar} = %
\mat{C_2 L}{-K S_2 A}{-C_2 AK^{-1}}{ \ol{ S_2 L } }.
\]
As $C_2$ and $K$ are invertible, this implies that $A = 0$, so $B$ is
complex linear and thus unitary, and $C_1 = C_2 B$. This implies that
$C_1^2 = C_2 B B^* C_2 = C_2^2$, so $C_1 = C_2$ and $C_1 = C_1 B$. As
$C_1$ is invertible, it follows that $B = I_\Hil$ and (c) holds.
\end{proof}

\begin{defn}\label{defn: AW}
Set
\begin{align*}
& M(\Hil) := \big\{ M_B : B \in S( \Hil)^\times \big\}, \\[1ex]
& AW(\Hil) := \big\{ \Sigma \, M : %
\Sigma \in AW_0(\Hil), M \in M( \Hil) \big\}, \\[1ex]
\text{and} \quad & \SigmaAB := \SigmaA M_B \qquad %
\text{ for all } A \in B(\Hil)_+ \text{ and } %
B \in S(\Hil)^\times.
\end{align*}
We refer to the elements of $M(\Hil)$, $AW(\Hil)$ and $AW_0(\Hil)$
respectively as \emph{squeezing matrices}, \emph{AW~amplitudes} and
\emph{gauge-invariant AW amplitudes} for $\Hil$.
\end{defn}

\begin{rems}
(i)
The AW abbreviation is in acknowledgement of Araki and Woods \cite{ArW}.

(ii)
Each AW amplitude for $\Hil$ is of the form $\SigmaAB$ for a
unique pair $( A, B ) \in B(\Hil)_+ \times S(\Hil)^\times$, by
Proposition~\ref{propn: MiotaB}.

(iii)
Let $\Sigma = \SigmaAB \in AW(\Hil)$. Then $\Sigma \circ \iota$
is symplectic, since it is the composition of symplectic maps
$( \SigmaA \circ \iota ) \circ B$, and so is a covariance amplitude of
a quasifree state on $\CCRH$, by Lemma~\ref{lemma: Zphi}.

(iv)
In terms of the parameterisation
$B = \BVCP := V( \cosh P - C \sinh P )$ of $B \in S(\Hil)^\times$
as in Theorem~\ref{thm: A1}, the squeezing matrices take the following form:
\begin{align}
\label{eqn: squeeze}
M_B & = \MVCP := \mat{V \cosh P }{K V C \sinh P }%
{V C \sinh P \cdot K^{-1} }{\ol{V \cosh P}}, \\[1ex]
( M_B )^{-1} & = M_{B^{-1}}= M^{V^*, -VCV^*, VPV^*} \nonumber
\end{align}
and
\begin{align}
\label{eqn: SUCP}
\SigmaAB & = \SigmaAVCP := %
\mat{\cosh A \cdot V \cosh P}{K \sinh A \cdot V C \sinh P}%
{\cosh A \cdot V C \sinh P \cdot K^{-1}}%
{\ol{ \sinh A \cdot V \cosh P}}.
\end{align}
\end{rems}

\subsection*{Araki--Woods representations}
We are interested in the class of representations $\WeylSigma$ of
$\CCRH$ of Araki--Woods type, and the corresponding quasifree states
$\vphiSigma$, determined by AW amplitudes $\Sigma = \SigmaAB$ as
follows:
\begin{align*}
& \WeylSigma := \WeylHplusHbar \circ \Sigma \circ \iota : %
x \mapsto \WeylHplusHbar \big( \Sigma \iota(x) \big)
\\[1ex]
\text{and } \qquad & \vphiSigma: w_x \mapsto %
\big\langle \VacHplusHbar , \WeylSigma(x) \VacHplusHbar \big\rangle
\qquad (x \in \Hil).
\end{align*}

\begin{rem}
Let $\Sigma = \SigmaAB \in AW(\Hil)$.
On one hand, if $A$ is injective then
$\Ran \Sigma \circ \iota$ is total in $\Hil \op \ol{\Hil}$
from which it follows that $\VacHplusHbar$ is a cyclic vector for the
representation $\WeylSigma$
\cite{Skeide} (see \cite[Proposition 2.1]{Lgreifswald}).
On the other hand, if $A = 0$ then
$\WeylSigma(x) = \WeylH( Bx) \ot I_{\Gamma(\Hilbar)}$ for all
$x \in \Hil$, so
$\Linbar \{ \WeylSigma(x) \VacHplusHbar : x \in \Hil \} = %
\Gamma(\Hil) \ot \VacHbar$.
\end{rem}

These AW representations $\WeylSigma$ inherit regularity from the Fock
representation $\WeylHplusHbar$. As in the Fock case, given any
$x \in \Hil$, setting
\[
\aplusSigma(x) := %
\half \big( \RSigma(\ri x) + \ri \, \RSigma(x) \big) %
\qquad \text{and} \qquad \aminusSigma(x) := %
\half \big( \RSigma(\ri x) + \ri \, \RSigma(x) \big)
\]
defines creation and annihilation operators via the quasifree field
operators $\{ \RSigma(z): z \in \Hil \}$, which are the Stone
generators of the corresponding unitary groups
$( \WeylSigma(t z) )_{t \in \Real}$. We now relate these to Fock
creation and annihilation operators.

Let the AW~amplitude
$\Sigma \in B(\Hil \op \Hilbar )$ have the block-matrix form
$\left[ \begin{smallmatrix}
\Sigma^0_0 & \Sigma^0_1 \\ \Sigma^1_0 & \Sigma^1_1
\end{smallmatrix}\right]$.
The identification
$\Gamma( \Hil \op \Hilbar ) = %
\Gamma( \Hil ) \ot \Gamma( \Hilbar )$
gives that
\[
\WeylSigma(x) = %
\WeylHplusHbar( \Sigma^0_0 x - \Sigma^0_1 \xbar, \Sigma^1_0 x - \Sigma^1_1 \xbar ) = %
\WeylH( \Sigma^0_0 x - \Sigma^0_1 \xbar ) \ot %
\WeylHbar( \Sigma^1_0 x - \Sigma^1_1 \xbar )
\qquad \text{for all } x \in \Hil.
\]
It follows that $\RSigma(x)$ is the closure of the operator
\[
\RH( \Sigma^0_0 x - \Sigma^0_1 \xbar ) \ot I_{\Gamma(\Hilbar)} + %
I_{\Gamma(\Hil)} \ot \RHbar( \Sigma^1_0 x - \Sigma^1_1 \xbar ),
\]
by \cite[Theorem~VIII.33]{ReS}, which implies that
\begin{subequations}
\begin{align}\label{eqn: 2.11a}
\aplusSigma(x) & \supseteq %
\aplusHplusHbar( \Sigma^0_0 x, \Sigma^1_0 x ) + %
\aminusHplusHbar ( \Sigma^0_1 \xbar, \Sigma^1_1 \xbar ) \\[1ex]
\label{eqn: 2.11b}
\text{and} \qquad \aminusSigma(x) & \supseteq %
\aminusHplusHbar( \Sigma^0_0 x, \Sigma^1_0 x ) + %
\aplusHplusHbar ( \Sigma^0_1 \xbar, \Sigma^1_1\xbar ).
\end{align}
\end{subequations}
Thus, in terms of a parameterisation $\Sigma = \SigmaAVCP$,
as in~(\ref{eqn: SUCP}),
\begin{multline*}
\aplusminusSigma(x) \supseteq %
\aplusminusHplusHbar %
\big( \cosh A \cdot U \cosh P \, x, \, %
\ol{ \sinh A \cdot U C \sinh P \, x } \big) \\
 + \aminusplusHplusHbar \big( \cosh A \cdot U C \sinh P \, %
 x, \, \ol{ \sinh A \cdot U \cosh P \, x} \big) \qquad %
\text{for all } x \in \Hil.
\end{multline*}
In particular, for a gauge-invariant AW~amplitude $\Sigma = \SigmaA$,
\[
\aplusminusSigma(x) \supseteq %
\aplusminusH( \cosh A \, x )  \ot I_{\Gamma(\Hilbar)} + %
I_{\Gamma(\Hil)} \ot \aminusplusHbar ( \, \ol{ \sinh A \, x } \, ) %
\qquad \text{for all } x \in \Hil.
\]

\begin{rem}
The absence of minus signs in these relations is due to our choice of
signs in the definition of the doubling map~$\iota$, and the choice of
parameterisation of the symplectic automorphism~$B$.
\end{rem}

\section{Quantum stochastic calculus}
\label{sec: QSC}

In this section we summarise the relevant elements of standard quantum
stochastic calculus \cite{Partha,Meyer,FagPryc,Lgreifswald} in a way
which is adapted to the requirements of the quasifree stochastic
calculus developed in Section~\ref{sec: QF SC}. This section ends with
discussions of the non-uniqueness of implementing quantum stochastic
cocycles for an Evans--Hudson flow, and Bhat's minimality criterion
for quantum stochastic dilations.

For the rest of this article, we fix a Hilbert space~$\init$, which is
referred to as the \emph{initial space} or \emph{system space}. For
this section, we also fix a Hilbert space $\Kil$ as the
\emph{multiplicity space} or \emph{noise dimension space}. In later
sections, this will vary or have further structure.

\begin{notn}\label{notn: 3.1}
We use the abbreviations $\Vac$, $\Weyl$, $\aplus$, $\aminus$ and
$\Fock$ for $\VacH$, $\WeylH$, $\aplusH$, $\aminusH$ and
$\Gamma(\Hil)$, respectively, where the Hilbert space
$\Hil$ equals $L^2( \Rplus; \Kil)$. As is customary, we abbreviate the simple
tensor $u \ot \ve(f)$ to~$u \ve(f)$ whenever $u \in \init$ and
$f \in L^2( \Rplus; \Kil)$.

For each $t \in \Rplus$ we have the decomposition
$\Fock = \Fock_{t)} \ot \Fock_{[t}$, where
$\Fock_{t)} := \Gamma\bigl( L^2( [ 0, t ); \Kil ) \bigr)$
and~$\Fock_{[t} := \Gamma\bigl( L^2( [ t, \infty ); \Kil ) \bigr)$.

The space of compactly supported step functions from $\Rplus$ to
$\Kil$ is denoted~$\Step$. Although we view~$\Step$ as a subspace
of~$L^2( \Rplus; \Kil)$, we always take the right-continuous version
of each step function, thus allowing us to evaluate these functions at
any point in~$\Rplus$.
\end{notn}

Note that $\Step$ enjoys the following useful properties:
\begin{rlist}
\item
If $f \in \Step$ and $t\in \Rplus$ then $1_{[0,t)} f \in \Step$;
\item
the \emph{exponential subspace}
$\Exp := \Lin \{ \ve(f): f \in \Step \}$ is dense in $\Fock$;
\item
the subspace $\Lin\{ f( t ) : t \in \Rplus \}$ is finite dimensional,
for all $f \in \Step$.
\end{rlist}

In what follows we restrict our attention, as much as possible, to
processes composed of bounded operators.

\begin{defn}
\label{defn: 3.3}
An \emph{$\hil$-$\hil'$~process}, or \emph{$\hil$~process} if
$\hil = \hil'$, is a function
\[
X: \Rplus \to B( \hil \ot \Fock; \hil' \ot \Fock ); \ %
t \mapsto X_t
\]
which is \emph{adapted}, so that
\[
X_t \in B( \hil \ot \Fock_{t)}; \hil' \ot \Fock_{t)} ) \otimes %
I_{[t} \qquad \text{for all } t \in \Rplus,
\]
where  $I_{[t}$ is the identity operator on $\Fock_{[t}$, and
\emph{measurable}, so that  the function
\[
\Rplus \to \hil' \otimes \Fock; \ t \mapsto X_t \xi
\]
is weakly measurable for all $\xi \in \hil \ot \Fock$. By
separability, weak measurability may be replaced with strong
measurability here.
$\\$

An $\hil$-$\hil'$~process $X$ is
\begin{rlist}
\item \emph{simple} if it is piecewise constant and right continuous,
so that there exists a strictly increasing sequence
$(t_n)_{n \ge 1} \subseteq \Rplus$ such that $t_1 = 0$ and
$t_n \to \infty$ as $n \to \infty$, with $X$ constant on each interval
$[t_n, t_{n+1})$;
\item \emph{continuous} if $t \mapsto X_t \xi$ is continuous for all
$\xi \in \hil \ot \Fock$;
\item \emph{unitary} if $X_t$ is a unitary operator for all
$t \in \Rplus$.
\end{rlist}

Every $\hil$-$\hil'$~process $X$ has an \emph{adjoint process},
namely the $\hil'$-$\hil$~process $X^*: t \mapsto X_t^*$.
Clearly $X^*$ is simple if $X$ is.
\end{defn}

\begin{notn}
\label{notn: 3.75}
It is convenient to augment the multiplicity space, by setting
\begin{equation*}
\Khat := \Comp \op \Kil, \qquad %
\xhat := \binom{1}{x} \text{ for all } x \in \Kil %
\quad \text{and} \quad \fhat(t) := \wh{f(t)} \text{ for all } %
f \in \Step \text{ and } t \in \Rplus.
\end{equation*}
Thus $\Khat \ot \hil = \hil \op ( \Kil \ot \hil )$
and
any operator $T \in B( \Khat \ot \hil; \Khat \ot \hil' )$
has a block-matrix form
\[
\begin{bmatrix}
T_0^0  & T_1^0 \\[0.5ex] T_0^1 & T_1^1
\end{bmatrix}
\in \begin{bmatrix}
B( \hil; \hil' ) & B( \Kil \otimes \hil; \hil' ) \\[0.5ex]
B( \hil; \Kil \otimes \hil' ) & %
B( \Kil \otimes \hil; \Kil \otimes \hil' )
\end{bmatrix}.
\]
\end{notn}

\begin{rem}
One may also begin with a non-trivial Hilbert space $\Khat$ and, by choosing a
distinguished unit vector $\vac \in \Khat$, obtain $\Kil$ by setting
$\Kil := \Khat \ominus \Comp \vac$. This observation will be useful in
Section~\ref{sec: QRW}.
\end{rem}

\begin{defn}
\label{defn: 3.5}
A \emph{$\Kil$-integrand process on $\init$}, or simply
an \emph{integrand process}, is a $\Khat \ot \init$~process~$F$ such
that, in terms of its block-matrix form $\smallmat{K}{L}{M}{N}$,
\begin{align*}
&
s \to K_s v\ve(g) \text{ and }
s \mapsto M_s \big( g(s) \ot v \ve(g) \big)
\text{ are locally integrable},
\\
\text{ and } & s \to L_s v\ve(g) \text{ and }
s \mapsto N_s \big( g(s) \ot v \ve(g) \big)
\text{ are locally square-integrable},
\end{align*}
for all $v \in \init$ and $g \in \Step$.
\end{defn}

\begin{rem}
Suppose $F$ is a $\Khat \ot \init$~process such that, for all
$x$, $y \in \Kil$, the function
\[
s \mapsto \bigl\| K_s + %
M_s \bigl( \ket{x} \ot I_{\init \ot \Fock } \bigr) \bigr\| + %
\bigl\| ( \bra{y} \ot I_{\init \ot \Fock } )
\bigl( L_s + N_s ( \ket{x} \ot I_{\init \ot \Fock } ) \bigr) \bigr\|^2
\]
is locally integrable. Then $F$ is an integrand process.
\end{rem}

\begin{thm}\label{thm: 3.9}
For any integrand process $F$, there exists a unique family
$\Lambda(F) := ( \Lambda(F)_t )_{t \ges 0}$ of linear operators,
with common domain $\init \otalg \Exp$ and codomain $\init \ot \Fock$,
such that
\begin{equation}
\label{eqn: FFF} 
\ip{ \uef }{ \Lambda(F)_t \veg } = %
\int_0^t \bigl\langle \fhat(s) \ot \uef, %
F_s ( \ghat(s) \ot \veg ) \bigr\rangle \, \rd s
\end{equation}
for all $u$, $v \in \init$, $f$, $g \in \Step$ and $t \in\Rplus$.
Furthermore, if $r$, $t \in \Rplus$ are such that $r \les t$ then
\begin{align*}
\norm{ ( \Lambda(F)_t - \Lambda(F)_r ) \veg }
& \les \int^t_r  \big\| ( K_s + %
M_s ( \ket{g(s)} \ot I_{\init \ot \Fock } ) ) \veg \big\| \, \rd s \\
 & \qquad
 + C(g) \Big\{ \int_r^t \big\|
 \big( L_s + N_s ( \ket{g(s)} \ot I_{\init \ot \Fock } ) \big)
 \veg \big\|^2 \, \rd s %
\Big\}^{1/2}
\end{align*}
for all $u$, $v \in \init$ and $f$, $g \in \Step$,
where $C(g) := \norm{g} + ( 1 + \norm{g}^2 )^{1/2}$.
\end{thm}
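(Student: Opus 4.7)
The plan is to establish uniqueness directly from totality of exponential vectors, and then to construct $\Lambda(F)$ in two stages: first for simple integrand processes, where an explicit formula built from the elementary quantum stochastic integrators is available, and subsequently for general integrand processes by a Cauchy approximation driven by the fundamental estimate itself. For uniqueness, if two such families both satisfied~(\ref{eqn: FFF}), their difference at time $t$ applied to $v\ve(g)$ would lie in $\init \ot \Fock$ and be orthogonal to every $u\ve(f)$ with $u \in \init$ and $f \in \Step$; hence it would vanish by density of $\init \otalg \Exp$ in $\init \ot \Fock$.

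For existence with simple $F$, writing $F_s = \smallmat{K_s}{L_s}{M_s}{N_s}$, I would define $\Lambda(F)_t$ on $\init \otalg \Exp$ via the four elementary quantum stochastic integrators associated with time, creation, annihilation and preservation, whose actions on exponential vectors evaluate slab by slab along the underlying partition. Identity~(\ref{eqn: FFF}) would then be verified on each slab, on which $g$ is locally constant so that the right-hand integral collapses to a finite linear combination matching the defining action of the integrators. The fundamental estimate for simple $F$ would be obtained by computing $\norm{(\Lambda(F)_t - \Lambda(F)_r) v\ve(g)}^2$ using the canonical commutation relations on $\ve(g)$, regrouping cross-terms so that $K_s + M_s \ket{g(s)}$ and $L_s + N_s \ket{g(s)}$ appear as single integrands, and then applying the triangle inequality to the first (yielding the $L^1$ contribution) and the Fock bounds $\norm{\aplus(h)\ve(g)} \les (1 + \norm{g}^2)^{1/2} \norm{h} \norm{\ve(g)}$ and $\norm{\aminus(h)\ve(g)} \les \norm{g} \norm{h} \norm{\ve(g)}$ combined with an It\^o-type isometry to the second (yielding the $L^2$ contribution), whence the constant $C(g)$.

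For a general integrand process $F$, I would approximate by simple processes $F^{(n)}$ such that the two functions $s \mapsto K_s v\ve(g) + M_s(g(s) \ot v\ve(g))$ and $s \mapsto L_s v\ve(g) + N_s(g(s) \ot v\ve(g))$ are approached in $L^1_{\mathrm{loc}}(\Rplus)$ and $L^2_{\mathrm{loc}}(\Rplus)$ respectively, which is possible by the integrability hypotheses of Definition~\ref{defn: 3.5}. The fundamental estimate applied to $F - F^{(n)}$ then forces $(\Lambda(F^{(n)})_t v\ve(g))_n$ to be Cauchy in $\init \ot \Fock$; I would set $\Lambda(F)_t v\ve(g)$ to be the limit, extend linearly over $\init \otalg \Exp$, and pass~(\ref{eqn: FFF}) and the norm bound to the limit by dominated convergence on the right-hand side. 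The main obstacle will be the fundamental estimate for simple $F$: the norm-squared expansion generates many cross-terms among the four integrator types, and organising them so that the grouped quantities $K + M \ket{g(\cdot)}$ and $L + N \ket{g(\cdot)}$ appear only once each, with every cross-term absorbed into the corresponding contribution, is the delicate step that pins down the precise form of $C(g) = \norm{g} + (1 + \norm{g}^2)^{1/2}$.
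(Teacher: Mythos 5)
The paper itself gives no argument for this theorem: its proof consists of the citation \cite[Theorem~3.13]{Lgreifswald}. Your reconstruction therefore has to stand on its own, and its architecture is the standard Hudson--Parthasarathy one: uniqueness from totality of the vectors $u\ve(f)$ in $\init\ot\Fock$; an explicit definition for simple integrands via the four elementary integrators; the fundamental estimate for simple integrands; and passage to general $F$ by approximating the two functions $s \mapsto K_s v\ve(g) + M_s(g(s)\ot v\ve(g))$ and $s \mapsto L_s v\ve(g) + N_s(g(s)\ot v\ve(g))$ in $L^1_{\mathrm{loc}}$ and $L^2_{\mathrm{loc}}$ respectively. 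The uniqueness step and the limiting step are unproblematic (for the latter, note that the approximating simple process may be chosen separately for each pair $(v,g)$, with independence of the choice following from \tu{(\ref{eqn: FFF})} and uniqueness).

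The genuine gap is exactly the step you flag as delicate but do not carry out: the fundamental estimate for simple $F$. The tools you name would not close it. In the standard-adapted calculus there is no ``It\^o-type isometry'' for the creation contribution: for $j \neq j'$ the vectors $\bigl(I_\init\ot\aplus(h_j\indf{[t_j,t_{j+1})})\bigr)v\ve(g)$ and $\bigl(I_\init\ot\aplus(h_{j'}\indf{[t_{j'},t_{j'+1})})\bigr)v\ve(g)$ are \emph{not} orthogonal, because commuting the annihilator past the other creator leaves, besides the inner-product term (which does vanish by disjointness of supports), a surviving annihilation operator acting on $\ve(g)$, and $\aminus(h_j\indf{[t_j,t_{j+1})})\ve(g) = \ip{h_j\indf{[t_j,t_{j+1})}}{g}\ve(g)$ is nonzero unless $g$ vanishes on the slab. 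These off-diagonal terms are precisely the source of the additive $\norm{g}$ in $C(g)$; conversely, estimating the creation part slab by slab with the single-operator bound $\norm{\aplus(h)\ve(g)} \les (1+\norm{g}^2)^{1/2}\norm{h}\,\norm{\ve(g)}$ and the triangle inequality gives only an $\ell^1$-type sum that blows up under refinement of the partition, so no $L^2$ bound survives the limit. To obtain the stated estimate you need a genuine device for the cross terms: either the adapted Hitsuda--Skorohod (divergence) identity, which expresses the squared norm of the creation-type integral as $\int\norm{h_s}^2\,\rd s$ plus a double integral of gradient cross terms controlled by $\norm{g}\cdot\norm{h}_{L^2}$, or a Gronwall-type iteration based on the second fundamental formula (Theorem~\ref{thm: 3.17}) applied to $\Lambda(F)_\cdot - \Lambda(F)_r$. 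Either route produces exactly $C(g) = \norm{g} + (1+\norm{g}^2)^{1/2}$; without one of them the estimate --- which is what drives your entire Cauchy-approximation construction for general $F$ --- is not established.
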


\begin{proof}
See \cite[Theorem~3.13]{Lgreifswald}.
\end{proof}

\begin{rem}
The identity (\ref{eqn: FFF}) is known as the
\emph{first fundamental formula} of quantum stochastic calculus.
\end{rem}

\begin{cor}\label{cor: 3.9}
If $F = \smallmat{K}{L}{M}{N}$ is an integrand process and its adjoint
process $F^* = \smallmat{K^*}{M^*}{L^*}{N^*}$ is also an integrand
process then $\Lambda(F^*)_t \subseteq \Lambda(F)^*_t$ for all
$t \in \Rplus$.
\end{cor}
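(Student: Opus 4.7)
The plan is to verify the inclusion by matching matrix coefficients on the total set of simple tensors $\uef$ with $u \in \init$ and $f \in \Step$, using the first fundamental formula~(\ref{eqn: FFF}) once for $F$ and once for $F^*$. Fix $u, v \in \init$, $f, g \in \Step$ and $t \in \Rplus$. Applying~(\ref{eqn: FFF}) to $F$ gives
\begin{equation*}
\ip{\uef}{\Lambda(F)_t \veg} = \int_0^t \ip{\fhat(s) \ot \uef}{F_s(\ghat(s) \ot \veg)} \, \rd s.
\end{equation*}
Since each $F_s$ is a bounded operator on $\Khat \ot \init \ot \Fock$, I can transfer it to its pointwise adjoint under the inner product inside the integrand. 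Applying~(\ref{eqn: FFF}) to the integrand process $F^*$, with the roles of the pairs $(u,f)$ and $(v,g)$ exchanged, yields
\begin{equation*}
\int_0^t \ip{\ghat(s) \ot \veg}{F_s^*(\fhat(s) \ot \uef)} \, \rd s = \ip{\veg}{\Lambda(F^*)_t \uef}.
\end{equation*}
Chaining these two identities, together with a complex conjugation, produces
\begin{equation*}
\ip{\uef}{\Lambda(F)_t \veg} = \ip{\Lambda(F^*)_t \uef}{\veg}.
\end{equation*}

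Next I would extend by linearity in $\veg$ to obtain $\ip{\uef}{\Lambda(F)_t \eta} = \ip{\Lambda(F^*)_t \uef}{\eta}$ for every $\eta \in \init \otalg \Exp$. Since $\init \otalg \Exp$ is the common domain of $\Lambda(F)_t$ and is dense in $\init \ot \Fock$, this expresses the antilinear functional $\eta \mapsto \ip{\uef}{\Lambda(F)_t \eta}$ as inner product against the vector $\Lambda(F^*)_t \uef$. By the definition of the Hilbert space adjoint, $\uef \in \Dom \Lambda(F)_t^*$ with $\Lambda(F)_t^* \uef = \Lambda(F^*)_t \uef$. A further linear extension in $\uef$ then delivers the claimed inclusion on all of $\init \otalg \Exp$.

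There is no serious obstacle here: the entire argument rests on the fact that~(\ref{eqn: FFF}) applies to both $F$ and $F^*$ under the standing hypothesis, and on the routine observation that the pointwise adjoint of a bounded operator may be moved across an inner product inside an absolutely convergent integral. The integrability conditions of Definition~\ref{defn: 3.5}, assumed for both processes, ensure all integrands encountered are in fact integrable, so no measure-theoretic subtlety arises.
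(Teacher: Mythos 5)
Your proposal is correct and is exactly the argument the paper intends: the corollary is a direct consequence of applying the first fundamental formula~(\ref{eqn: FFF}) to both $F$ and $F^*$, moving the bounded operator $F_s$ across the inner product inside the integral, conjugating, and then extending by linearity to identify $\Lambda(F^*)_t\xi$ with $\Lambda(F)_t^*\xi$ for $\xi \in \init \otalg \Exp$. No gaps.
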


\begin{rem}
\label{rem: 3.12}
If the integrand process $F$ is such that the operator $\Lambda(F)_t$
is bounded, for all $t \in \Rplus$, then taking the closure of each
operator defines a continuous $\init$~process which, by a
slight abuse of notation, we also denote by $\Lambda(F)$.
\end{rem}

\begin{notn}
\label{notn: 3.13}
Let $F = \smallmat{K}{L}{M}{N}$ be an integrand process. Then
\begin{align*}
A^\circ(K) :=& \Lambda \big(
\smallmat{K}{0}{0}{0} \big), \
A^-(M) := \Lambda \big(
\smallmat{0}{0}{M}{0} \big), \ \\
A^+(L) :=& \Lambda \big(
\smallmat{0}{L}{0}{0} \big)
 \text{ and }
A^\times(N) := \Lambda \big(
\smallmat{0}{0}{0}{N} \big)
\end{align*}
are the
\emph{time}, \emph{creation}, \emph{annihilation} and
\emph{preservation} integrals, respectively.
\end{notn}

The following proposition, which is readily verified, connects the
definition of quantum stochastic integrals of Theorem~\ref{thm: 3.9}
with the classical It\^o integration of simple processes.

\begin{propn}
\label{propn: 3.15}
Suppose the noise dimension space $\Kil$ is finite dimensional, with
orthonormal basis $(e_i)_{i \in \I}$. Let
$F = \smallmat{K}{L}{M}{N}$ be a simple integrand process, let
$t > 0$, and suppose
the partition~$\{ 0 = t_0 < t_1 < \cdots < t_n = t \}$ contains the
points of discontinuities of $F$ on $[ 0, t )$. Then
\[
A^+(L)_t = \sum_{i \in \I} \int_0^t L^i(s) \, \rd A^+( s e_i ) := %
\sum_{i \in \I} \sum_{j = 0}^{n - 1} L^i( t_j ) %
\big( I_\init \ot \aplus ( e_i \indf{[ t_j, t_{j + 1} )} ) \big) %
\quad \text{on } \init \otalg \Exp
\]
and
\[
A^-(M)_t = \sum_{i \in \I} \int_0^t M_i(s) \, \rd A^-( s e_i ) := %
\sum_{i \in \I} \sum_{j = 0}^{n - 1} M_i( t_j ) %
\big( I_\init \ot \aminus( e_i \indf{[ t_j, t_{j + 1} )} ) \big) %
\quad \text{on } \init \otalg \Exp,
\]
where
$L^i(s) := ( \bra{e_i} \ot I_{\init\ot\Fock} ) L(s)$ and
$M_i(s) :=  M(s) ( \ket{e_i} \ot I_{\init\ot\Fock} )$.
\end{propn}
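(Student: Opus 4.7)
The plan is to verify both equalities as identities of linear operators on the total set $\init \otalg \Exp$, by computing their matrix elements against exponential vectors $u\ve(f)$ and $v\ve(g)$ (with $u, v \in \init$ and $f, g \in \Step$) and invoking the uniqueness statement of Theorem~\ref{thm: 3.9}. That is, I would show that the sum appearing on the right-hand side of each claimed equality satisfies the first fundamental formula that characterises the corresponding $\Lambda(F)_t$.

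For the creation identity, the first fundamental formula applied to $F = \smallmat{0}{L}{0}{0}$, together with the decomposition $L(s)\xi = \sum_{i \in \I} e_i \ot L^i(s)\xi$ (which is what the relation $L^i(s) = (\bra{e_i}\ot I_{\init\ot\Fock})\, L(s)$ encodes), yields
\[
\ip{u\ve(f)}{A^+(L)_t v\ve(g)} = \int_0^t \sum_{i \in \I} \ip{f(s)}{e_i}\,\ip{u\ve(f)}{L^i(s)\, v\ve(g)}\,\rd s.
\]
I would then compute the matrix element of the proposed sum directly. Fix~$j$: adaptedness of $L^i(t_j)$ gives the factorisation $L^i(t_j) = \Ltilde^i(t_j) \ot I_{[t_j}$ with respect to $\Fock = \Fock_{t_j)} \ot \Fock_{[t_j}$, while $\aplus(e_i \indf{[t_j, t_{j+1})})$ acts only on the $\Fock_{[t_j}$~factor. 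Splitting $u\ve(f) = u\ve(f\indf{[0, t_j)}) \ot \ve(f\indf{[t_j, \infty)})$ and $v\ve(g)$ in the same way, and applying the standard Fock identity
\[
\ip{\ve(f\indf{[t_j, \infty)})}{\aplus(e_i \indf{[t_j,t_{j+1})})\, \ve(g\indf{[t_j, \infty)})} = \int_{t_j}^{t_{j+1}} \ip{f(s)}{e_i}\,\rd s \cdot \ip{\ve(f\indf{[t_j, \infty)})}{\ve(g\indf{[t_j, \infty)})},
\]
the two half-inner-products reassemble, via adaptedness once more, into $\ip{u\ve(f)}{L^i(t_j)\, v\ve(g)}$. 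Since $L^i$ is constant on $[t_j, t_{j+1})$, the $j$th term thus contributes $\int_{t_j}^{t_{j+1}} \ip{u\ve(f)}{L^i(s) v\ve(g)}\,\ip{f(s)}{e_i}\,\rd s$; summing over $j$ and then over $i$ recovers the integral displayed above.

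The annihilation identity would be established by an identical argument with $F = \smallmat{0}{0}{M}{0}$ and $M(s)(x \ot \xi) = \sum_{i\in\I} \ip{e_i}{x}\,M_i(s)\xi$: Theorem~\ref{thm: 3.9} now delivers $\int_0^t \sum_i \ip{e_i}{g(s)}\,\ip{u\ve(f)}{M_i(s) v\ve(g)}\,\rd s$, and the matrix element of the proposed sum matches this integral upon replacing the creation identity above with its annihilation analogue $\aminus(e_i\indf{[t_j,t_{j+1})})\,\ve(g\indf{[t_j, \infty)}) = \int_{t_j}^{t_{j+1}} \ip{e_i}{g(s)}\,\rd s \cdot \ve(g\indf{[t_j, \infty)})$. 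Finite dimensionality of $\Kil$ makes the sum over $i \in \I$ finite, so no convergence issue arises; the only real point of care is the bookkeeping at each partition point $t_j$, where adaptedness must be invoked both to extract the action of $\aplusminus(e_i \indf{[t_j, t_{j+1})})$ onto $\Fock_{[t_j}$ and to reassemble the resulting half-inner-products into inner products of the full exponential vectors.
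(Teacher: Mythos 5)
Your argument is correct and is exactly the ``readily verified'' computation the paper leaves to the reader: the paper offers no proof of Proposition~\ref{propn: 3.15}, and matching matrix elements against $u\ve(f)$, $v\ve(g)$ via the first fundamental formula of Theorem~\ref{thm: 3.9}, together with adaptedness to factor each $L^i(t_j)$ across $\Fock_{t_j)}\ot\Fock_{[t_j}$ and the standard identities $\ip{\ve(f)}{\aplus(h)\ve(g)}=\ip{f}{h}\ip{\ve(f)}{\ve(g)}$ and $\aminus(h)\ve(g)=\ip{h}{g}\ve(g)$, is the intended route. The bookkeeping is right, including the use of piecewise constancy of $L^i$ and $M_i$ on the partition intervals and the finiteness of $\I$.
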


\begin{rem}
The preservation integral $A^\times(N)$ has a similar expression
(see~\cite{Partha}) and the time integral is given by the
straightforward prescription
\[
A^\circ(K)_t := \sum_{j=0}^{n - 1} K(t_j) ( t_{j + 1} - t_j ).
\]
\end{rem}

The following result is the quantum It\^o product formula, or second
fundamental formula. To state it, we define the \emph{quantum It\^o
projection}
\[
\Delta := \mat{0 }{ 0 }{ 0 }{ I_\Kil } \in B( \Khat),
\]
which is ampliated to $\smallmat{0 }{ 0 }{ 0 }{ I_{\Kil \ot \hil} }$
for appropriate choices of $\hil$ without change of notation.

\begin{thm}\label{thm: 3.17}
Let $F$ and $G$ be integrand processes,
let $X_0, Y_0 \in B(\init) \ot I_\Fock$, and, for all $t \in \Rplus$,
set $X_t = X_0 + \Lambda(F)_t$ and $Y_t = Y_0 + \Lambda(G)_t$.
Then
\begin{align*}
\ip{ X_t \uef }{ Y_t \veg } = \ip{ X_0 \uef }{Y_0 \veg } & + %
\int_0^t \Bigl\{ \Bigl\langle \fhat(s) \ot X_s \uef, %
G_s \bigl( \ghat(s) \ot \veg \bigr) \Bigr\rangle \\
 & \qquad + \Bigl\langle F_s \bigl( \fhat(s) \ot \uef \bigr), %
\ghat(s) \ot Y_s \veg \Bigr\rangle \\
& \qquad + \Bigl\langle F_s \bigl( \fhat(s) \ot \uef \bigr), \Delta %
G_s \bigl( \ghat(s) \ot \veg \bigr) \Bigr\rangle \Bigr\} \, \rd s
\end{align*}
for all
$u$, $v \in \init$, $f$, $g \in \Step$ and $t \in \Rplus$.
\end{thm}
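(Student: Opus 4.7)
The plan is to establish the formula first for simple integrand processes, where everything reduces to a finite computation, and then to extend to general integrand processes by a density argument based on the estimates of Theorem~\ref{thm: 3.9}. By sesquilinearity of the inner product, splitting $X_t = X_0 + \Lambda(F)_t$ and $Y_t = Y_0 + \Lambda(G)_t$ decomposes the left-hand side into four pieces. The term $\langle X_0 u\ve(f), Y_0 v\ve(g)\rangle$ appears on the right as the initial value. Two of the remaining three pieces are handled directly by the first fundamental formula~(\ref{eqn: FFF}): applying it to $G$ recovers the $X_0$-contribution to the first integrand term, while applying it to $F$ after conjugating and swapping the slots of the inner product (both legitimate because $Y_0 \in B(\init) \ot I_\Fock$ acts as a constant) recovers the $Y_0$-contribution to the second. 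The remaining piece $\langle \Lambda(F)_t u\ve(f), \Lambda(G)_t v\ve(g)\rangle$ must then supply the full It\^o correction together with the parts of the first two integrand terms in which $X_s, Y_s$ are replaced by $\Lambda(F)_s, \Lambda(G)_s$.

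For simple $F, G$ on a common refining partition $\{0 = t_0 < \cdots < t_n = t\}$, Proposition~\ref{propn: 3.15} and its analogues for time and preservation integrals express $\Lambda(F)_t$ and $\Lambda(G)_t$ as finite sums of increments of fundamental integrals. Expanding the inner product, one exploits the adapted factorisation $\Fock = \Fock_{t_j)} \ot \Fock_{[t_j, t_{j+1})} \ot \Fock_{[t_{j+1}}$: cross-interval products have one factor sandwiched between vacuum vectors which, after telescoping in $j$, reassemble into the required $\Lambda(F)_s$- and $\Lambda(G)_s$-decorated integrands. Within a single subinterval, the products are evaluated using the exponential-vector matrix elements of time, creation, annihilation and preservation operators together with the canonical commutation relation $[\aminus(x), \aplus(y)] = \langle x, y\rangle I$; the commutator and the analogous preservation--creation and preservation--preservation identities are what produce the It\^o correction, and they collect on the nose into $\langle F_s(\fhat(s) \ot u\ve(f)), \Delta G_s(\ghat(s) \ot v\ve(g))\rangle$. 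Comparison with~(\ref{eqn: FFF}) then gives the identity for simple processes.

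To extend to general integrand processes, approximate $F$ and $G$ by simple integrand processes $F^{(n)}, G^{(n)}$ whose block-matrix entries converge to those of $F, G$ in the mixed $L^1$--$L^2$ seminorms appearing in the Theorem~\ref{thm: 3.9} estimate, evaluated along the fixed vectors; such approximations exist because $\Step$ is dense in the relevant Bochner spaces. That estimate then gives strong convergence $\Lambda(F^{(n)})_t u\ve(f) \to \Lambda(F)_t u\ve(f)$ and likewise for $G$, so the left-hand side passes to the limit by continuity of the inner product, while each of the three terms in the integrand on the right converges by the Cauchy--Schwarz inequality and dominated convergence. The main obstacle I expect is the It\^o correction term: in it, $F_s$ is evaluated on $\fhat(s) \ot u\ve(f)$ and $G_s$ on $\ghat(s) \ot v\ve(g)$, and uniformly controlling $\langle F^{(n)}_s(\fhat(s) \ot u\ve(f)), \Delta G^{(n)}_s(\ghat(s) \ot v\ve(g))\rangle$ requires simultaneous $L^2$-approximation of the ``annihilation-plus-preservation'' combinations of the form $L_s + M_s(\ket{g(s)} \ot I)$ appearing in Definition~\ref{defn: 3.5}; fortunately this is precisely the regularity encoded by that definition and by the second line of the Theorem~\ref{thm: 3.9} bound, which is exactly why those hypotheses were chosen.
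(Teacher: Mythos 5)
The paper offers no proof of this theorem: it simply cites \cite[Theorem~3.15]{Lgreifswald}. Your argument must therefore be judged against the standard proofs in the literature, and the route you choose --- establish the identity for simple integrand processes by direct computation with the fundamental operators, then extend by approximation using the estimate of Theorem~\ref{thm: 3.9} --- is the classical Hudson--Parthasarathy argument and is sound in outline. Your bookkeeping in the first paragraph is correct: the two mixed terms $\langle X_0 \uef, \Lambda(G)_t \veg\rangle$ and $\langle \Lambda(F)_t \uef, Y_0 \veg\rangle$ are handled by the first fundamental formula because $X_0\uef$ and $Y_0\veg$ are again vectors of the form $u'\ve(f)$, $v'\ve(g)$, and the remaining cross term must account for the $\Lambda(F)_s$-, $\Lambda(G)_s$-parts of the first two integrands together with the whole It\^o correction.

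Two places need more care than your sketch provides. First, Proposition~\ref{propn: 3.15} is stated only for finite-dimensional $\Kil$, and its expansion over an orthonormal basis $(e_i)_{i\in\I}$ is not available verbatim in the generality of the theorem; for infinite-dimensional noise the simple-process computation should instead be carried out by evaluating matrix elements of products of increments of $\Lambda(F)$ and $\Lambda(G)$ for constant integrands on a single subinterval directly on (localised) exponential vectors, using the adapted factorisation $\Fock = \Fock_{t_j)} \ot \Fock_{[t_j,t_{j+1})} \ot \Fock_{[t_{j+1}}$; note also that the first fundamental formula as stated applies only to exponential test vectors, so the off-diagonal products $j \ne k$ require its routine but not explicitly stated extension to vectors $\xi \ot \ve(g|_{[t_k,\infty)})$ with $\xi \in \init\ot\Fock_{t_k)}$ arbitrary. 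Second, the appeal to density of $\Step$ addresses the wrong object: $\Step$ consists of test functions, whereas what must be approximated is the adapted operator-valued integrand itself. The standard device is to define $F^{(n)}$ on each partition interval as the average of $F$ over the preceding interval, which preserves adaptedness and converges along the fixed vectors in the mixed $L^1$--$L^2$ seminorms of Definition~\ref{defn: 3.5}. With those two repairs your limiting argument, including the $L^2$-control of the combinations $L_s + N_s(\ket{f(s)}\ot I)$ needed for the It\^o correction term, goes through.
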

\begin{proof}
See \cite[Theorem~3.15]{Lgreifswald}.
\end{proof}

\begin{defn}
\label{defn: vac}
The map
\[
\ExpectVac: B( \init \ot \Fock ) \to B(\init),
\quad
T \mapsto \ExpectVac[ T ] := %
( I_\init \ot \bra{\Vac} ) \, T \, ( I_\init \ot \ket{\Vac} )
\]
is called the \emph{vacuum expectation}. For all $t \in \Rplus$, let
$\shiftK_t$ be the normal $*$-endomorphism of $B(\Fock)$ such that
\[
\shiftK_t\bigl( \Weyl (g) \bigr) = \Weyl ( S_t g ), \qquad %
\text{where }
( S_t g )(s) := \left\{
\begin{array}{ll}
g( s - t ) & \text{ if } s \ges t, \\
 0 & \text{ if }  s < t.
\end{array}\right.
\]
The family $\shiftK := ( \shiftK_t)_{t\ges 0}$ is called the
\emph{CCR flow} of \emph{index} $\dim \Kil$. We set
\[
\sigma_t := \id_{B(\init)} \uwot \shiftK_t %
\qquad \text{ for all } t \in \Rplus.
\]
\end{defn}

\begin{rem}
The vacuum expectation is normal, unital and completely positive, and
the family $\sigma = (\sigma_t)_{t \ges 0}$, is an $E_0$~semigroup
\cite{Arveson} such that:
\begin{equation}
\label{eqn: Esigma} 
\ExpectVac \circ \sigma_t = \ExpectVac \qquad %
\text{for all } t \in \Rplus.
\end{equation}
\end{rem}

\begin{defn}
\label{defn: cocycle}
An $\init$ process $Y$ is a \emph{quantum stochastic cocycle} on
$\init$ if
\[
Y_0 = I_{\init \ot \Fock} \quad \text{ and } \quad
Y_{r+t} = \sigma_r( Y_t) Y_r \qquad \text{for all } r,t \in \Rplus,
\]
and an \emph{elementary} QS cocycle if its vacuum expectation semigroup
$( \ExpectVac[ Y_t ] )_{t \ges 0}$ is norm continuous.
A \emph{Hudson--Parthasarathy cocycle}, or \emph{HP cocycle} in
short, is a unitary elementary QS cocycle
\end{defn}

\begin{rem}
The fact that $ ( \ExpectVac[ Y_t ] )_{t \ges 0}$ is a one-parameter semigroup
follows from the adaptedness relations
\[
\sigma_r( Y_t ) \in B( \init ) \ot %
I_{\Fock_{r)}} \uwot B(\Fock_{[r}) \quad \text{and} \quad %
Y_r \in B(\init \ot \Fock_{r)}) \ot I_{\Fock_{[r}}
\]
and the identity~(\ref{eqn: Esigma}): note that
\[
\ExpectVac[Y_{r+t}] =
\ExpectVac[\sigma_r( Y_{t} )] \, \ExpectVac[Y_{r}] =
\ExpectVac[Y_{t}] \, \ExpectVac[Y_{r}]
\ \text{ for all } r,t \in \Rplus.
\]
\end{rem}

\begin{notn}
Let
\[
B(\Khat \ot \hil)_0 := %
\Bigl\{ T = \smallQSintegrand{T_0^0}{T_0^1}{T_1^0}{T_1^1} \in %
B(\Khat \ot \hil) : T^1_1 = 0 \Bigr\}.
\]
\end{notn}

\begin{thm}\label{thm: 3.19}
\mbox{\par}
\begin{alist}
\item
Let $F \in B( \Khat \ot \init )$. The following are equivalent.
\begin{rlist}
\item
$F = \smallmat{ K }{ L }{ -L^* W }{ W - I_{\Kil \ot \init} }$
where $K = \ri H - \half L^*L$,
for a self-adjoint operator $H$ and unitary operator $W$.
\item
$F^* + F + F^* \Delta F = 0 = F + F^* + F \Delta F^*$.
\item
There is a unitary $\init$~process $U$ such that
\begin{equation}
\label{eqn: QSDE} 
U_t = I_{\init \ot \Fock} + \Lambda( F \cdot U )_t %
\qquad \text{for all } t \in \Rplus,
\end{equation}
where $(F \cdot U)_s := ( F \ot I_\Fock )( I_\Khat \ot U_s )$
for all $s \in \Rplus$.
\end{rlist}
In this case, $U$ is the unique unitary $\init$~process
satisfying~\tu{(\ref{eqn: QSDE})}.
\item Let $U$ be a unitary $\init$~process. The following are
equivalent.
\begin{rlist}
\item $U$ satisfies~\tu{(\ref{eqn: QSDE})} for some operator
$F \in B(\Khat \ot \init)$.
\item $U$ is an HP~cocycle.
\end{rlist}
In this case,
\begin{equation}
\label{eqn: UtoF} 
\ip{ \xhat \ot u }{ ( F + \Delta ) \yhat \ot v } = %
\lim_{t \to 0+} t^{-1} \bigl\langle u \ve( x \indf{[0,t)} ),
( U_t - I_{\init \ot \Fock} ) v \ve( y \indf{[0,t)} ) \bigr\rangle
\end{equation}
for all $u$, $v \in \init$ and $x$, $y \in \Kil$. In particular,
the vacuum expectation semigroup of $U$ has generator $K$.
\item
If $F \in B(\Khat \ot \init )_0$
then \tu{(i)} and \tu{(ii)} of \tu{(a)} have the respective equivalents.
\begin{rlist}
\item $F = \smallmat{K}{L}{-L^*}{0}$,  where
$K + \half L^* L$ is skew-adjoint.
\item $F^* + F + F^* \Delta F = 0$.
\end{rlist}
\end{alist}
\end{thm}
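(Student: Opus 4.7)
The plan is to handle the three parts in order: the heart of the theorem is part~(a), part~(b) is the standard bootstrap from QSDE solutions to a characterisation of all HP cocycles, and part~(c) is the special case $N = 0$ of part~(a).

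For part~(a), I would first establish the purely algebraic equivalence (i)$\Leftrightarrow$(ii). Writing $F = \smallmat{K}{L}{M}{N}$, a direct computation gives $F^*\Delta F = \smallmat{M^*M}{M^*N}{N^*M}{N^*N}$ and symmetrically $F\Delta F^* = \smallmat{LL^*}{LN^*}{NL^*}{NN^*}$. The $(1, 1)$-entries of the two structure relations then yield $(I + N^*)(I + N) = I = (I + N)(I + N^*)$, whence $W := I + N$ is unitary; the off-diagonal entries force $M = -L^*W$ (read as the composition $WL^*$ at the operator level); and the $(0, 0)$-entries give $K + K^* = -L^*L$, which is equivalent to $K = \ri H - \half L^*L$ for a unique self-adjoint $H$. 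For (i)$\Leftrightarrow$(iii) I would appeal to the classical Hudson--Parthasarathy theorem, cited as \cite[Theorem~3.15]{Lgreifswald}: existence of a solution $U$ to~\tu{(\ref{eqn: QSDE})} proceeds via Picard iteration, controlled by the norm estimate in Theorem~\ref{thm: 3.9}; applying the quantum It\^o product formula (Theorem~\ref{thm: 3.17}) to $U^* U$ and $U U^*$ and invoking the structure relations shows that each satisfies the same QSDE as the constant process $I_{\init \ot \Fock}$, so uniqueness of solutions forces $U$ to be unitary. Conversely, if $U$ is a unitary solution then differentiating $U^* U = I$ and $U U^* = I$ via Theorem~\ref{thm: 3.17} extracts both structure relations in (ii).

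Part~(b) is a bootstrapping argument. The direction (i)$\Rightarrow$(ii) is immediate: the cocycle identity $U_{r + t} = \sigma_r(U_t) U_r$ follows from~\tu{(\ref{eqn: QSDE})} together with the shift invariance and adaptedness of the stochastic integral, while norm continuity of the vacuum expectation semigroup is forced by $K$ being its bounded generator. The converse is the substantive content: given an HP cocycle $U$, one produces the candidate generator via the weak limit~\tu{(\ref{eqn: UtoF})}. The main technical task is to verify that this limit exists on $\Khat \otalg \init$ and defines a bounded operator $F + \Delta$. This rests on Markov regularity, which provides the $(0, 0)$-entry via the standard semigroup-generator argument, combined with the cocycle identity to control the off-diagonal and $(1, 1)$-entries through the action of $U_t$ on coherent vectors supported in $[0, t)$. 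Once $F$ is known to be bounded, the process $U - I_{\init\ot\Fock} - \Lambda(F \cdot U)$ is seen to vanish on $\init \otalg \Exp$ using Theorem~\ref{thm: 3.17} and the adaptedness of $U$, whereupon uniqueness in~(a) completes the identification. I expect this to be the principal obstacle and would invoke the corresponding result of~\cite{Lgreifswald}.

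Part~(c) is immediate from~(a): the constraint $F \in B(\Khat \ot \init)_0$ means $N = 0$, whence $W = I_{\Kil \ot \init}$ and $M = -L^*$; the second structure relation of (a)(ii) becomes the adjoint of the first, so the two collapse into the single condition $F^* + F + F^* \Delta F = 0$. Expanding this with $N = 0$ and $M = -L^*$ recovers the stated form of $F$, with $K + \half L^*L$ skew-adjoint.
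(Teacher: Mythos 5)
Your proposal is correct and matches the paper's approach, which is simply to cite the standard literature (Theorems~7.1 and~7.5 of~\cite{LWptrf} for part~(a), and~\cite{Lgreifswald} for part~(b)); your block-matrix algebra for (a)(i)$\Leftrightarrow$(ii) and the $N=0$ reduction in~(c) check out, and your outline of the Picard-iteration/It\^o-formula argument and the semigroup-generator bootstrap is exactly what those references carry out. The only blemishes are cosmetic: the parenthetical ``read as the composition $WL^*$'' should be $L^*\circ W$, and in~(c) the two structure relations coincide when $N=0$ and $M=-L^*$ (each being self-adjoint already) rather than being adjoints of one another.
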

\begin{proof}
Part (a) is covered by Theorems~7.1 and~7.5 of~\cite{LWptrf}. For (b),
see~\cite{Lgreifswald}. The identity~(\ref{eqn: UtoF}) is a
straightforward consequence of~(\ref{eqn: QSDE}), the first
fundamental formula (\ref{eqn: FFF}) and the strong continuity of~$U$.
\end{proof}

\begin{rem}
The quantum stochastic equation (\ref{eqn: QSDE}) is referred to as
the \emph{quantum Langevin equation} in the physics literature
\cite{GarCo, ZolGa}.
\end{rem}

\begin{defn}
\label{defn: 2.10}
Given an HP~cocycle $U$, the unique operator $F$, or triple $(H,L,W)$,
associated with $U$ via~(\ref{eqn: UtoF}) is called its
\emph{stochastic generator}. Conversely, for an operator
$F \in B(\Khat \ot \init)$ having the block-matrix
form given in Theorem~\ref{thm: 3.19}(a)(i), the unique HP~cocycle
satisfying (\ref{eqn: QSDE}) is denoted~$\YF$ or~$U^{(H,L,W)}$.
\end{defn}

\begin{rem}
If $F$ is the stochastic generator of an HP~cocycle then
Theorem~\ref{thm: 3.19} implies that $F^*$ is also such a generator,
since
\[
\begin{bmatrix}
\ri H - \half L^*L & -L^*W \\[0.5ex] L & W - I_{\Kil \ot \init}
\end{bmatrix}^*
=
\begin{bmatrix}
\ri \wt{H} - \half \wt{L}^*\wt{L} & -\wt{L}^*\wt{W} \\[0.5ex]
\wt{L} & \wt{W} - I_{\Kil \ot \init},
\end{bmatrix}
\]
where $\wt{W} = W^*$, $\wt{L} = -W^*L$ and $\wt{H} = -H$.  However, it
is usually not the case that $\YFstar$ and~$( \YF )^*$ are equal. An
exception is when $\init = \Comp$, described in
Example~\ref{examp: pure noise}.
\end{rem}

In this article, we are mainly concerned with the following subclass
of HP~cocycles discussed in~\cite{LQST2}.

\begin{defn}
\label{defn: gauss}
An HP~cocycle is \emph{Gaussian} if its stochastic generator lies in
$B(\Khat \ot \hil)_0$.  Equivalently, its parameterisation has the
form $(H, L, I_{\Kil\ot\init} )$.
\end{defn}

\begin{cor}
\label{cor: bij}
The prescription $(H,L,W) \mapsto U^{(H,L,W)}$ defines a bijection
\[
B(\init)_{\sa} \times B(\init; \Kil \ot \init) \times U(\Kil \ot \init)
\to
\big\{\!
\text{HP~cocycles on $\init$ with noise dimension space $\Kil$}
\big\},
\]
and the restriction $(H,L) \mapsto U^{(H,L,I)}$ defines a bijection
\[
B(\init)_{\sa} \times B(\init; \Kil \ot \init)
\to
\big\{\!
\text{Gaussian HP~cocycles on $\init$ with noise dimension space $\Kil$}
\big\}.
\]
\end{cor}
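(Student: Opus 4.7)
The proof is a straightforward synthesis of the results already in Theorem~\ref{thm: 3.19}. My plan is to verify well-definedness, injectivity and surjectivity of the map $(H,L,W) \mapsto U^{(H,L,W)}$ separately, reducing each property to a statement in that theorem.

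First, I would observe that the parameterisation
$(H,L,W) \mapsto F := \smallmat{\ri H - \half L^*L}{L}{-L^*W}{W - I_{\Kil \ot \init}}$
is itself a bijection from $B(\init)_{\sa} \times B(\init; \Kil \ot \init) \times U(\Kil \ot \init)$ onto the set of operators of the form described in Theorem~\ref{thm: 3.19}(a)(i). Injectivity is a matter of reading off blocks: the lower-right block of $F$ determines $W$; combined with the upper-right block this determines $L$; and combined with the upper-left block this then determines $H$ as the self-adjoint part of $-\ri(K + \half L^*L)$. Surjectivity is exactly the content of~(a)(i). Composing this parameterisation with the bijection $F \leftrightarrow U = \YF$ furnished by the equivalence of Theorem~\ref{thm: 3.19}(a)(i) and~(a)(iii), that is, existence and uniqueness of the unitary solution to the QSDE~(\ref{eqn: QSDE}), yields a well-defined injection from parameter triples to HP~cocycles.

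Surjectivity onto HP~cocycles then follows from Theorem~\ref{thm: 3.19}(b)(ii)$\Rightarrow$(i): every HP~cocycle satisfies~(\ref{eqn: QSDE}) for some $F \in B(\Khat \ot \init)$, and part~(a) forces this $F$ to have the prescribed block-matrix form, hence to correspond to a unique triple $(H,L,W)$ via the preceding paragraph.

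The second bijection follows by restriction. By Definition~\ref{defn: gauss} and Theorem~\ref{thm: 3.19}(c)(i), a Gaussian HP~cocycle is precisely one whose stochastic generator arises from a triple of the form $(H, L, I_{\Kil \ot \init})$, so restricting the first bijection to such triples produces the asserted bijection. I anticipate no real obstacle: all the analytic content (existence, uniqueness, and the block-matrix form of the stochastic generator) is already encapsulated in Theorem~\ref{thm: 3.19}, and the corollary amounts to re-packaging this information as a bijective correspondence.
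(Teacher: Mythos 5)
Your proposal is correct and follows exactly the route the paper intends: the corollary is stated without proof precisely because it is the content of Theorem~\ref{thm: 3.19} (existence and uniqueness of the unitary solution for each admissible $F$, recovery of $F$ from $U$ via~(\ref{eqn: UtoF}), and surjectivity from part~(b)), combined with Definitions~\ref{defn: 2.10} and~\ref{defn: gauss} for the Gaussian restriction. Your unpacking of injectivity via the block-matrix form and of surjectivity via (b)(ii)$\Rightarrow$(i) is exactly the intended repackaging.
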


\begin{example}
\label{examp: pure noise}
[Pure-noise cocycles] For any $z \in \Kil$, setting
$W^z := ( W( z 1_{[0,t)} ) )_{t \ges 0}$ defines an HP~cocycle
on~$\Comp$.  An operator $F \in B( \Khat)$ is the generator of an
HP~cocycle on~$\Comp$ if and only if
\[
F=\begin{bmatrix}
\ri \alpha - \half \norm{z}^2 & - \bra{z} w \\
\ket{z} & w - I_{\Kil}
\end{bmatrix}
\qquad \text{for some } \alpha \in \Real, z \in \Kil \text{ and }
w \in U(\Kil).
\]
The Gaussian pure-noise cocycles are precisely those of the form
$( e^{ \ri \alpha t} W^z_t )_{t\ges 0}$
for some $\alpha \in \Real$ and $z \in \Kil$.

As $B( \Fock_{r)} ) \otimes I_{[r}$ and
$\shiftK_r\bigl( B( \Fock ) \bigr) = I_{r)} \otimes B( \Fock_{[r} )$
commute for all $r \in \Rplus$, the adjoint process~$(\YF)^*$
is equal to the HP~cocycle $\YFstar$ in this case.
\end{example}

\begin{lemma}
\label{lemma: Uprime}
Let $U$ be an HP~cocycle on $\init$ and
let $u$ be a pure-noise HP~cocycle with the same noise dimension space.
Then
\[
\Utilde := ( ( I_\init \ot u_t ) U_t )_{t \ges 0}
\]
is an HP~cocycle on $\init$. Moreover, the stochastic
generators
$\Ftilde \sim ( \Htilde, \Ltilde, \Wtilde )$ of $\Utilde$,
$F \sim (H, L, W)$ of $U$ and
$f \sim (\alpha, \ket{z}, w)$ of $u$ are
related as follows:
\[
\Ftilde = ( f \ot I_\init ) + F + ( f \ot I_\init ) \Delta F
\]
or, equivalently,
\begin{align}
\label{eqn: Wprime} 
\begin{split}
& \Wtilde = ( w \ot I_\init ) W, \\
& \Ltilde = ( w \ot I_\init ) L + \ket{z} \ot I_\init \\
\text{ and } \quad & \Htilde = H +
\tfrac{\ri}{2} \bigl( ( \bra{ w^* z } \ot I_\init ) L - %
L^* ( \ket{ w^* z } \ot I_\init ) \bigr) + \alpha I_\init.
\end{split}
\end{align}
\end{lemma}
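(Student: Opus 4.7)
The plan is to realise $\Utilde$ as an HP cocycle whose stochastic generator is
\[
\Ftilde := \wh{F} + F + \wh{F} \, \Delta \, F,
\qquad \text{where } \wh{F} := f \ot I_\init,
\]
and then to unpack this block-matrix identity into the stated formulae for $\Wtilde$, $\Ltilde$ and $\Htilde$.

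First I would verify that $V := ( I_\init \ot u_t )_{t \ges 0}$ is itself an HP cocycle on $\init$ with stochastic generator $\wh{F}$: unitarity, adaptedness, the cocycle identity and norm-continuity of the vacuum-expectation semigroup all transfer from $u$ by ampliation, and identity~\eqref{eqn: UtoF} readily identifies the generator. Next, $\Utilde = V U$ is a unitary adapted process, and the cocycle identity follows by commuting the past-adapted factor $V_r \in I_\init \ot B( \Fock_{r)} ) \ot I_{[r}$ past the future-adapted factor $\sigma_r( U_t ) \in B( \init ) \ot I_{\Fock_{r)}} \uwot B( \Fock_{[r} )$, yielding
\[
\Utilde_{r + t} = \sigma_r(V_t) V_r \sigma_r(U_t) U_r = \sigma_r(V_t) \sigma_r(U_t) V_r U_r = \sigma_r( \Utilde_t ) \Utilde_r.
\]

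To identify the stochastic generator, the key tool is the second fundamental formula (Theorem~\ref{thm: 3.17}) applied to $X_t := V_t^*$ (which satisfies a QSDE with a computable integrand by Theorem~\ref{thm: 3.19}) and $Y_t := U_t$. Writing $\langle u \ve(f), \Utilde_t v \ve(g) \rangle = \langle V_t^* u \ve(f), U_t v \ve(g) \rangle$ and combining the three Ito increments from Theorem~\ref{thm: 3.17} — a drift-type term from $V^*$, a drift-type term from $U$, and the preservation-induced cross-term involving $\Delta$ — one obtains, on $\init \otalg \Exp$,
\[
\Utilde_t = I + \Lambda\bigl( \Ftilde \cdot \Utilde \bigr)_t.
\]
The uniqueness clause of Theorem~\ref{thm: 3.19}(a) then identifies $\Utilde$ as the HP cocycle with stochastic generator $\Ftilde$.

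The final step is block-matrix arithmetic. Since $\Delta$ annihilates the left column, $\wh{F} \Delta$ has only its right-hand column non-zero. Multiplying by $F$ and summing with $\wh{F} + F$, the bottom-right entry telescopes to $(w \ot I_\init) W - I_{\Kil \ot \init}$, giving $\Wtilde = (w \ot I_\init) W$; the top-right entry gives $\Ltilde = (w \ot I_\init) L + \ket{z} \ot I_\init$; and rewriting the top-left entry in the form $\ri \Htilde - \half \Ltilde^* \Ltilde$, using unitarity of $w$ and the identity $\bra{z} w = \bra{ w^* z }$, yields the stated formula for $\Htilde$. The main obstacle is the passage from the inner-product identity provided by Theorem~\ref{thm: 3.17} to a bona fide QSDE of the form $\Utilde_t = I + \Lambda(\Ftilde \cdot \Utilde)_t$: one must verify that the three Ito integrands reassemble into $(\Ftilde \ot I_\Fock)(I_\Khat \ot V_s U_s)$, reconciling the ampliation conventions for the non-commuting factors $V_s$ and $U_s$ at each time $s$. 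Once this identification is in hand, the block-matrix extraction of $(\Htilde, \Ltilde, \Wtilde)$ is routine.
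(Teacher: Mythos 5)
Your proposal is correct and follows essentially the same route as the paper: establish the cocycle identity by commuting the past-adapted factor $I_\init \ot u_r$ past $\sigma_r(U_t)$, derive the QSDE for $\Utilde$ with integrand $\Ftilde \cdot \Utilde$ from the quantum It\^o product formula (Theorem~\ref{thm: 3.17}), invoke the uniqueness clause of Theorem~\ref{thm: 3.19}, and extract $(\Htilde, \Ltilde, \Wtilde)$ by block-matrix arithmetic. The only difference is that you spell out the application of the It\^o formula (via $X = V^*$, $Y = U$ and the commutation of the ampliated factors) which the paper leaves implicit.
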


\begin{proof}
That the unitary process $\Utilde$ is a QS cocycle follows from the
fact that $\sigma_r(U_t)$ and $I_\init \ot u_r$ commute for all $r$,
$t \in \Rplus$. The quantum It\^o product formula,
Theorem~\ref{thm: 3.17}, implies that
$\Utilde_t = I_{\init \ot \Fock} + \Lambda( \Ftilde \cdot \Utilde )_t$
for all $t \in \Rplus$, where
$\Ftilde = ( f \ot I_\init ) + F + ( f \ot I_\init ) \Delta F$.
It now follows from the uniqueness part of Theorem~\ref{thm: 3.19}
that $\Utilde$ equals the HP~cocycle $\YFtilde$, so that
$\Utilde = U^{( \Htilde, \Ltilde, \Wtilde )}$
where $( \Htilde, \Ltilde, \Wtilde )$ is given by~(\ref{eqn: Wprime}).
\end{proof}

\begin{rem}
More general conditions under which the product of two
elementary QS cocycles is a QS cocycle
are given in \cite{Wills}.
\end{rem}

\begin{defn}
\label{defn: QDS}
A \emph{quantum dynamical semigroup}
$\mathcal{P} = ( \mathcal{P}_t )_{t \ges 0}$ is a semigroup of
completely positive contractive normal maps on $B(\init)$
which is pointwise weak operator continuous.
If $\mathcal{P}_t$ is unital for all $t \in \Rplus$ then $\mathcal{P}$
is called \emph{conservative}.
\end{defn}

\begin{rem}
The generator $\Lindbladian$ of a norm-continuous conservative quantum
dynamical semigroup is expressible in Lindblad form \cite{Lindblad}:
there exists a separable Hilbert space $\Kil$, a self-adjoint
operator~$H \in B(\init)$ and an operator
$L \in B( \init; \Kil \ot \init )$ such that
\begin{equation}
\label{eqn: Lindblad} 
\Lindbladian( a) = %
- \ri [ H, a ] - \half \{ L^*L, a \} + L^* ( I_\Kil \ot a ) L %
\qquad \text{for all } a \in B( \init ),
\end{equation}
where $[\ ,\ ]$ and $\{\ ,\ \}$ denote the commutator and
anti-commutator, respectively.
\end{rem}

\begin{thm}
\label{thm: innerj}
Let $U$ be an HP~cocycle with stochastic generator $(H,L,W)$. For all
$t \in \Rplus$, let
\[
j_t: B(\init) \to B( \init \ot \Fock ); \ %
a \mapsto U^*_t ( a \ot I_\Fock ) U_t,
\]
and let
\begin{equation}\label{eqn: theta}
\theta : B(\init) \to B( \Khat \ot \init ); \ %
a \mapsto
\mat{-\ri [ H, a ] - \half \{ L^*L, a \} + L^* ( I_\Kil \ot a ) L}%
{W^*( ( I_\Kil \ot a ) L - L a )}%
{( L^* ( I_\Kil \ot a ) - a L^* ) W}%
{W^* ( I_\Kil \ot a ) W - I_\Kil \ot a}.
\end{equation}
\begin{alist}
\item If $\jKil := ( \id_{B(\Khat)} \uwot j_t)_{t \ges 0}$,
so that
\[
\jKil_t(A) = %
( I_\Khat \ot U_t )^* ( A \ot I_\Fock ) ( I_\Khat \ot U_t )
\qquad \text{for all } t \in \Rplus \text{ and } %
A \in B(\Khat \ot \init),
\]
then $\bigl( ( \jKil_t \circ \theta )(a) \bigr)_{t \ges 0}$ is an
integrand process for all $a \in B( \init )$ and
\begin{equation}
\label{eqn: mQSDE} 
j_t(a) = a \ot I_\Fock + %
\Lambda\bigl( ( \jKil \circ \theta )(a) \bigr)_t \qquad
\text{for all } a \in B(\init) \text{ and } t \in \Rplus.
\end{equation}
Furthermore, the family $j = ( j_t )_{t \ges 0}$ is the unique
mapping process consisting of normal $*$-homomorphisms that
satisfies~(\ref{eqn: mQSDE}).
\item
The mapping process $j$ obeys the cocycle relation
\[
j_{r+t} = \jhat_r \circ \sigma_r \circ j_t
\qquad \text{for all } r,t \in \Rplus,
\]
where $\jhat_r$ is the normal *-homomorphism from $\Ran \sigma_r$ to
$B(\init \ot \Fock)$ such that
\[
\jhat_r(a \ot b) = j_r(a) ( I_\init \ot b )
\qquad \text{for all } a \in B(\init) \text{ and } %
b \in \Ran \shiftK_r \subseteq B(\Fock).
\]
Moreover, setting
$\mathcal{P} := ( \ExpectVac \circ j_t )_{t \ges 0}$
defines a norm-continuous conservative quantum dynamical semigroup
on~$B(\init)$, the \emph{vacuum expectation semigroup} of~$j$.
\item
For all $a \in B(\init)$, $u$, $v \in \init$ and $x$, $y\in \Kil$,
\[
\bigl\langle \xhat \ot u, %
( \theta(a) + \Delta \ot a )\yhat \ot v \bigr\rangle = %
\lim_{t \to 0+} t^{-1} \bigl\langle u \ve( x \indf{[0,t)} ),
( j_t(a) - a \ot I_{\Fock} ) v \ve( y \indf{[0,t)} ) \bigr\rangle.
\]
In particular, the vacuum expectation semigroup of $j$ has
generator~$\Lindbladian$, given by~(\ref{eqn: Lindblad}).
\end{alist}
\end{thm}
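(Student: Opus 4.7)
The plan is to establish part (a) by direct application of the quantum It\^o product formula (Theorem~\ref{thm: 3.17}), then to deduce (b) and (c) as consequences. For (a), I would take $X_t := U_t$ and $Y_t := (a\ot I_\Fock)U_t$, with initial values $I_{\init\ot\Fock}$ and $a\ot I_\Fock$ and integrand processes $F\cdot U$ and $((I_\Khat\ot a)F)\cdot U$, respectively. Since $\fhat(s)\ot U_s\xi = (I_\Khat\ot U_s)(\fhat(s)\ot\xi)$, each of the three terms in the resulting It\^o integrand collapses, upon moving the $I_\Khat\ot U_s$ factors outside, into a vector matrix element of $\jKil_s(T)$, for $T$ equal to $(I_\Khat\ot a)F$, $F^*(I_\Khat\ot a)$ and $F^*\Delta(I_\Khat\ot a)F$, respectively. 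Summing and comparing with the first fundamental formula~(\ref{eqn: FFF}) yields~(\ref{eqn: mQSDE}) on $\init\otalg\Exp$, provided
\[
\theta(a) = F^*(I_\Khat\ot a) + (I_\Khat\ot a)F + F^*\Delta(I_\Khat\ot a)F.
\]
A direct two-by-two block-matrix computation, substituting $F = \smallmat{\ri H - \half L^*L}{L}{-L^*W}{W - I_{\Kil\ot\init}}$ from Theorem~\ref{thm: 3.19}(a)(i), confirms that the right-hand side reproduces~(\ref{eqn: theta}). That each $j_t$ is a normal $*$-homomorphism is immediate from its definition as conjugation by the unitary $U_t$.

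For the uniqueness clause of (a), any mapping process $j'$ of normal $*$-homomorphisms satisfying~(\ref{eqn: mQSDE}) is determined on $\init\otalg\Exp$ by Picard iteration of the QSDE, exploiting the boundedness of $\theta$ together with the estimate in Theorem~\ref{thm: 3.9}; this follows the standard Evans--Hudson uniqueness template in~\cite{Lgreifswald}.

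For (b), the cocycle relation $j_{r+t} = \jhat_r\circ\sigma_r\circ j_t$ follows from $U_{r+t} = \sigma_r(U_t)U_r$ together with the adaptedness of $U_r$ and the fact that $\sigma_r$ fixes $a\ot I_\Fock$: the substitution gives $j_{r+t}(a) = U_r^*\sigma_r(j_t(a))U_r$, and separating the $\Fock_{r)}$ and $\Fock_{[r}$ factors identifies this with $\jhat_r(\sigma_r(j_t(a)))$. The semigroup property of $\mathcal{P} := (\ExpectVac\circ j_t)$ then follows by combining the cocycle relation with~(\ref{eqn: Esigma}) and the tensor-factorised action of vacuum expectation. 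Unitality, complete positivity and conservativity are immediate from $j_t$ being a unital $*$-homomorphism and $\ExpectVac$ being unital completely positive, while norm-continuity follows from the boundedness of the generator $\Lindbladian$ which is read off in (c).

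Part (c) is obtained by specialising~(\ref{eqn: mQSDE}) to the vectors $u\ve(x\indf{[0,t)})$ and $v\ve(y\indf{[0,t)})$, applying the first fundamental formula~(\ref{eqn: FFF}) to rewrite the right-hand side as an integral, then dividing by $t$ and letting $t\to 0^+$; strong continuity of $U$ (hence of $\jKil$) together with $\ve(x\indf{[0,t)})\to\Vac$ produce the limit, with the $\Delta\ot a$ term arising exactly as the $\Delta$ term in~(\ref{eqn: UtoF}) for $U$ itself. The main obstacle will be the uniqueness clause in (a): although the matrix identification of $\theta$ is entirely mechanical and (b) and (c) follow by adapting existing arguments, ruling out other normal $*$-homomorphism solutions of~(\ref{eqn: mQSDE}) requires a careful Picard estimation on the exponential domain using the bound in Theorem~\ref{thm: 3.9}.
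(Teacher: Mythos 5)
Your proposal is correct and takes essentially the same route as the paper, whose proof of this theorem is just three sentences: the quantum It\^o product formula for the existence part of (a), a citation of \cite{Lgreifswald} and \cite{LWptrf} for (b) and for uniqueness, and the first fundamental formula plus strong continuity for (c). Your explicit identification $\theta(a) = (I_\Khat\ot a)F + F^*(I_\Khat\ot a) + F^*\Delta(I_\Khat\ot a)F$, together with the observation that each It\^o term collapses to a matrix element of $\jKil_s(\cdot)$ after commuting the $I_\Khat\ot U_s$ factors, is exactly the substantive content of (a), and your block-matrix check is right.

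One caution about (c), which is the only place where your sketch does not quite deliver what is claimed. The mechanism you describe---divide by $t$, use $\ve(x\indf{[0,t)})\to\Vac$ and strong continuity of $\jKil$---gives
\[
\lim_{t \to 0+} t^{-1} \bigl\langle u \ve( x \indf{[0,t)} ), ( j_t(a) - a \ot I_{\Fock} ) v \ve( y \indf{[0,t)} ) \bigr\rangle = \bigl\langle \xhat \ot u, \theta(a) ( \yhat \ot v ) \bigr\rangle,
\]
with no $\Delta\ot a$: the integrand produced by the first fundamental formula converges boundedly to $\ip{\xhat\ot u}{\theta(a)(\yhat\ot v)}$. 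The extra term $\ip{\xhat\ot u}{(\Delta\ot a)(\yhat\ot v)} = \ip{x}{y}\ip{u}{av}$ appears only if one subtracts $\ip{u}{av}$ instead of $\bigl\langle u\ve(x\indf{[0,t)}), (a\ot I_\Fock)v\ve(y\indf{[0,t)})\bigr\rangle = e^{t\ip{x}{y}}\ip{u}{av}$; the difference contributes $t^{-1}(e^{t\ip{x}{y}}-1)\ip{u}{av}\to\ip{x}{y}\ip{u}{av}$. The same subtlety is present in~(\ref{eqn: UtoF}), to which you appeal, so saying the $\Delta\ot a$ ``arises exactly as there'' merely defers the issue (a pure-preservation cocycle on $\Comp$ is a quick sanity check). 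You should therefore either prove the statement in the form $\lim_{t\to0+} t^{-1}\bigl(\langle u\ve(x\indf{[0,t)}), j_t(a)v\ve(y\indf{[0,t)})\rangle - \ip{u}{av}\bigr)$---which is what the identification of the generators of the associated semigroups, and at $x=y=0$ of the vacuum expectation semigroup, actually requires---or record explicitly where the $\Delta\ot a$ correction comes from. At $x=y=0$ the correction vanishes, so the conclusion that $\mathcal{P}$ has generator $\Lindbladian$ is unaffected, and the rest of your argument, including the deferral of the uniqueness clause to the Picard-iteration argument of~\cite{Lgreifswald}, matches the paper.
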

\begin{proof}
That $j$ satisfies~(\ref{eqn: mQSDE}) follows from the quantum It\^o
product formula. In turn, part (c) follows from~(\ref{eqn: mQSDE}),
the first fundamental formula, Theorem~\ref{thm: 3.9}, and the strong
continuity of~$U$. For (b) and the uniqueness part of~(a),
see~\cite{Lgreifswald} and~\cite{LWptrf}.
\end{proof}

\begin{defn}\label{defn: EH}
An \emph{inner Evans--Hudson flow} on $B(\init)$, or
\emph{inner EH~flow} in short, is a mapping process $j$ induced by an
HP~cocycle on $\init$, as above \cite{EvansM}. The map $\theta$ is
called the \emph{stochastic generator} of~$j$.
\end{defn}

\begin{rem}
Let $j$ be an inner EH~flow on $B(\init)$. Using the ampliations
introduced in Theorem~\ref{thm: innerj}, the prescription
$J := ( \jhat_t \circ \sigma_t )_{t\ges 0}$ produces an
$E_0$~semigroup on $B(\init \ot \Fock)$ such that
\[
J_t(A) := U^*_t \sigma_t(A) U_t \qquad \text{for all } %
A \in B(\init \ot \Fock) \text{ and } t \in \Rplus,
\]
where $U$ is any HP~cocycle inducing $j$. In turn, we can recover $j$
from $J$, since $j_t = J_t \circ \iota_\Fock$ for all $t \in \Rplus$,
where the ampliation
\[
\iota_\Fock: B(\init) \to B(\init \ot \Fock); \ %
a \mapsto a \ot I_\Fock.
\]
\end{rem}

Given an HP~cocycle $U$, Lemma~\ref{lemma: Uprime} provides sufficient
conditions for an HP~cocycle $U'$ to induce the same EH~flow as~$U$.
In the next result we show that these conditions are also necessary.

\begin{propn}\label{propn: jprime}
Suppose $j$ and $j'$ are inner EH~flows on $B(\init)$ with noise
dimension space~$\Kil$, induced by HP~cocycles $U$ and $U'$ and having
stochastic generators $(H,L,W)$ and $(H',L',W')$, respectively.  The
following are equivalent.
\begin{rlist}
\item The flows $j$ and $j'$ are equal.
\item The process $( U'_t U^*_t)_{t \ges 0}$ is the ampliation
 to~$\init$ of a pure-noise HP~cocycle.
\item There is a scalar $\alpha \in \Real$, a vector $z \in \Kil$ and
an operator $w \in U(\Kil)$ such that
\begin{align*}
w \ot I_\init & = W'W^*, \\
\ket{z} \ot I_\init  & = L' - ( w \ot I_\init ) L \\
\text{and} \hspace{4em} \alpha I_\init & = H' - H - \tfrac{\ri}{2} %
\bigl( ( \bra{ w^* z } \ot I_\init ) L - %
L^* ( \ket{ w^* z } \ot I_\init ) \bigr).
\end{align*}
\end{rlist}
\end{propn}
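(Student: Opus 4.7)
The plan is to establish the implications (ii)~$\Leftrightarrow$~(iii), (ii)~$\Rightarrow$~(i) directly, and (i)~$\Rightarrow$~(ii) via a commutant argument; this gives the full equivalence.

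For (ii)~$\Leftrightarrow$~(iii), I would apply Lemma~\ref{lemma: Uprime}. By Example~\ref{examp: pure noise}, pure-noise HP~cocycles are in bijection with triples $(\alpha, z, w) \in \Real \times \Kil \times U(\Kil)$. Given such a triple, Lemma~\ref{lemma: Uprime} produces the stochastic generator $(\Htilde, \Ltilde, \Wtilde)$ of the HP~cocycle $\bigl((I_\init \ot u_t) U_t\bigr)_{t \ges 0}$ via the formulas~\eqref{eqn: Wprime}. Since HP~cocycles are determined by their stochastic generators (Corollary~\ref{cor: bij}), the identity $U' = (I_\init \ot u)U$ holds if and only if $(\Htilde, \Ltilde, \Wtilde) = (H', L', W')$; rearrangement of this system reproduces precisely the three conditions in~(iii). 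For (ii)~$\Rightarrow$~(i), if $U'_t = (I_\init \ot u_t) U_t$ then $a \ot I_\Fock$ commutes with $I_\init \ot u_t$ for any $a \in B(\init)$, so $j'_t(a) = U_t^* (I_\init \ot u_t^*)(a \ot I_\Fock)(I_\init \ot u_t) U_t = U_t^* (a \ot I_\Fock) U_t = j_t(a)$.

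For the remaining implication (i)~$\Rightarrow$~(ii), equality of $j$ and $j'$ yields $U^*_t (a \ot I_\Fock) U_t = (U'_t)^*(a \ot I_\Fock) U'_t$, which rearranges to $(a \ot I_\Fock)\, U_t (U'_t)^* = U_t (U'_t)^* (a \ot I_\Fock)$ for every $a \in B(\init)$. Hence $U'_t U_t^*$ lies in the commutant of $B(\init) \ot I_\Fock$ inside $B(\init \ot \Fock)$, which is $I_\init \uwot B(\Fock)$, so there is a unitary $v_t \in B(\Fock)$ with $U'_t U_t^* = I_\init \ot v_t$. I then need to check that $v := (v_t)_{t \ges 0}$ is a pure-noise HP~cocycle. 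Adaptedness $v_t \in B(\Fock_{t)}) \ot I_{\Fock_{[t}}$ follows from the adaptedness of $U$ and $U'$ together with uniqueness of the tensor decomposition; strong continuity follows from the strong continuity of $U'$ and $U^*$ via $v_t \xi = (\bra{e} \ot I_\Fock)(U'_t U_t^*)(\ket{e} \ot \xi)$ for any unit vector $e \in \init$; and the cocycle identity follows from
\begin{align*}
I_\init \ot v_{r+t} & = U'_{r+t} (U_{r+t})^* = \sigma_r(U'_t)\, U'_r U_r^*\, \sigma_r(U_t^*) \\
 & = \sigma_r(U'_t)\, (I_\init \ot v_r)\, \sigma_r(U_t^*) = (I_\init \ot v_r)\, \sigma_r(U'_t U_t^*) = I_\init \ot v_r \shiftK_r(v_t),
\end{align*}
where the commutation of $I_\init \ot v_r$ past $\sigma_r(U_t^*)$ uses adaptedness, which also gives $v_r \shiftK_r(v_t) = \shiftK_r(v_t) v_r$. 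Finally, the vacuum-expectation semigroup $t \mapsto \ExpectVac[I_\init \ot v_t] = \langle \Vac, v_t \Vac\rangle I_\init$ takes values in $\Comp\,I_\init$ and is continuous, hence norm-continuous, so $v$ is elementary.

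The main obstacle is the last paragraph: one must verify that the family $v$ extracted via the commutant argument has the full package of HP~cocycle structure. Each individual property reduces cleanly to a corresponding property of $U$ and $U'$, so no genuine analytic difficulty arises; the cocycle identity is the most delicate step, and the key point there is the interaction between $\sigma_r$ and adaptedness.
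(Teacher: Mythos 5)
Your proof is correct, and two of its three legs coincide with the paper's own argument: the implication (i)\,$\Rightarrow$\,(ii) is the same commutant argument (the paper likewise extracts $u_t$ from the commutant of $B(\init) \ot I_\Fock$, verifies the cocycle identity by commuting $U'_r U^*_r$ past $\sigma_r(U^*_t)$ using adaptedness, and deduces norm continuity of the expectation semigroup from its being scalar-valued), and (ii)\,$\Rightarrow$\,(iii) is Lemma~\ref{lemma: Uprime} in both. The only divergence is in how the cycle of implications is closed. The paper proves (iii)\,$\Rightarrow$\,(i) directly: from the relations in (iii) one checks that the map $\theta'$ built from $(H',L',W')$ via~(\ref{eqn: theta}) coincides with $\theta$, and then invokes the uniqueness of normal $*$-homomorphic solutions of the mapping QSDE from Theorem~\ref{thm: innerj}(a). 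You instead prove (iii)\,$\Rightarrow$\,(ii) by running Lemma~\ref{lemma: Uprime} backwards --- constructing the pure-noise cocycle $u$ from the triple $(\alpha,z,w)$ and matching stochastic generators via Corollary~\ref{cor: bij} --- and then obtain (ii)\,$\Rightarrow$\,(i) by the one-line conjugation computation. Your route replaces the uniqueness theorem for Evans--Hudson flows with the (already heavily used) uniqueness of the HP~cocycle with a given generator, so it is marginally more elementary and self-contained; the paper's route has the mild advantage of making explicit that $\theta$ depends on $(H,L,W)$ only through the equivalence class described in (iii). Either way the argument is complete, and your extra care over the adaptedness and strong continuity of the extracted process $v$ is sound.
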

\begin{proof}
If (ii) holds then Lemma~\ref{lemma: Uprime} implies that (iii) holds.

If (iii) holds then it is easily verified that $\theta'$, defined from
$(H',L',W')$ rather than $(H,L,W)$, coincides with $\theta$.  Thus (i)
holds by the uniqueness part of Theorem~\ref{thm: innerj}(a).

Finally, suppose that (i) holds, and let $X$ denote the unitary
process $( U'_t U^*_t )_{t \ges 0}$.  For all $t \in \Rplus$, the
operator $X_t$ commutes with all operators in
$B( \init ) \ot I_\Fock$, so $X_t = I_\init \ot u_t$ for some unitary
operator $u_t \in B(\Fock)$.  This implies that $X_r$ commutes with
$\sigma_r( U_t^* )$ for all $r$, $t \in \Rplus$, and so
\begin{align*}
\sigma_r(X_t)X_r = \sigma_r(U'_t) X_r  \sigma_r(U^*_t) = %
\sigma_r(U'_t) U'_r  U^*_r \sigma_r(U_t)^* = U'_{r+t}  U^*_{r+t} = %
X_{r+t}.
\end{align*}
Hence $u = (u_t)_{t \ges 0}$ is a unitary QS cocycle on $\Comp$.
Since $( U')^*$ and $U^*$ are both strongly continuous and unitary,
$u$ is strongly continuous and therefore its vacuum expectation
semigroup $P$ is too. As $P$ is a semigroup on $\Comp$, this implies
that $P$ is norm continuous. Thus $u$ is an HP~cocycle and therefore
(ii) holds.
\end{proof}

\begin{rems}
Given a norm-continuous conservative quantum dynamical semigroup
$\mathcal{P}$ on~$B(\init)$, its generator $\Lindbladian$ is
expressible in Lindblad form~(\ref{eqn: Lindblad}) for some
separable Hilbert space $\Kil$ and operators $H=H^* \in B(\init)$ and
$L \in B(\init; \Kil \ot \init)$.  In turn,
Theorem~\ref{thm:  innerj} implies that the inner EH~flow~$j$ induced
by the HP~cocycle with generator $(H,L,I_{\Kil\ot\init})$ has vacuum
expectation semigroup $\mathcal{P}$.  In this sense, the flow $j$ is a
\emph{stochastic   dilation of $\mathcal{P}$}.

The non-uniqueness of triples $(\Kil,H,L)$ determining the generator
$\Lindbladian$ of a norm-continuous quantum dynamical semigroup on
$B(\init)$ is analysed in~\cite{PaS}; this may be compared to the
non-uniqueness of triples $(H,L,W)$ determining the stochastic
generator $\theta$ of a given inner EH~flow~$j$ characterised in
Proposition~\ref{propn: jprime}.

The construction of stochastic dilations was a major motivation for
the original development of quantum stochastic calculus
\cite{HuP,Partha}.
\end{rems}

We end this summary of standard quantum stochastic calculus by
connecting it to Bhat's analysis of dilations of the above form, in
particular the question of minimality.

\begin{thm}[{\cite[Theorem 9.1]{Bhat}}]\label{thm: 3.23}
Let $j$ be an inner EH~flow. The following are equivalent.
\begin{rlist}
\item
As a stochastic dilation of its vacuum expectation semigroup, the flow
$j$ is \emph{minimal}:
\[
\Linbar \big\{ j_{t_1} ( a_1 ) \cdots j_{t_n} ( a_n ) u \Vac : %
u \in \init, n \ges 1, a_i, \ldots , a_n \in B(\init), %
t_1, \ldots , t_n \in \Rplus \big\} = \init \ot \Fock.
\]
\item The stochastic generator $(H,L,W)$ of any HP~cocycle which
induces~$j$ satisfies
\[
( \bra{z} \ot I_\init ) L \notin \Comp I_\init
\ \text{ for all } z \in \Kil \setminus \{ 0 \}.
\]
\end{rlist}
\end{thm}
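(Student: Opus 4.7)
The plan is to analyse the closed linear span
\[
\mathcal{M} := \Linbar \big\{ j_{t_1}(a_1) \cdots j_{t_n}(a_n) u \Vac :
n \ges 1, u \in \init, a_1, \ldots, a_n \in B(\init),
t_1, \ldots, t_n \in \Rplus \big\}
\]
via its first-particle content. Since $j_t(a^*) = j_t(a)^*$, both $\mathcal{M}$ and $\mathcal{M}^\perp$ are invariant under each $j_t(a)$, and minimality~(i) is precisely the statement $\mathcal{M} = \init \ot \Fock$.

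First I would carry out the infinitesimal computation. Fixing $u, v \in \init$ and $x \in \Kil$, Theorem~\ref{thm: innerj}(c) gives
\[
\lim_{t \to 0+} t^{-1} \big\langle u \ve(x \indf{[0,t)}),
\bigl( j_t(a) - a \ot I_\Fock \bigr) v \Vac \big\rangle =
\big\langle \xhat \ot u, \theta(a) \bigl(\binom{1}{0} \ot v\bigr) \big\rangle,
\]
which, by~(\ref{eqn: theta}), equals $\big\langle u, \Lindbladian(a) v\big\rangle + \big\langle x \ot u, W^*\bigl( ( I_\Kil \ot a ) L - L a \bigr) v \big\rangle$. Consequently, the one-particle sector of $\mathcal{M}$ coincides, after tensoring with step-function time coordinates on $\Rplus$, with the closed linear span
\[
\mathcal{S} := \Linbar \big\{ W^* \bigl( ( I_\Kil \ot a ) L - L a \bigr) v :
a \in B(\init), v \in \init \big\} \subseteq \Kil \ot \init.
\]

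Next I would reduce $\mathcal{S} = \Kil \ot \init$ to the commutator condition of~(ii). Since $W^*$ is a unitary bijection of $\Kil \ot \init$, $\mathcal{S}$ is dense if and only if $\mathcal{S}_0 := \Linbar\{ (I_\Kil \ot a) L v - L a v : a \in B(\init), v \in \init\}$ is dense. Applying $\bra{z} \ot I_\init$ to $\mathcal{S}_0$ yields $\Linbar\{ [a, (\bra{z} \ot I_\init) L] v : a, v\}$, whose closure is all of $\init$ exactly when $(\bra{z} \ot I_\init) L \notin \Comp I_\init$ — the ``if'' uses rank-one operators $a = \ket{e}\bra{f}$ and the irreducibility of $B(\init)$ on $\init$, while the ``only if'' is immediate. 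Hence (ii) is equivalent to $\mathcal{S} = \Kil \ot \init$.

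The final step is to promote the density of the one-particle sector to $\mathcal{M} = \init \ot \Fock$, and this is where the main technical obstacle lies. One would exploit the cocycle identity $j_{r+t} = \jhat_r \circ \sigma_r \circ j_t$: applying $j_t(a)$ after the time shift $\sigma_r$ populates the noise interval $[r, r+t)$, and an induction on particle number in $\Fock = \bigoplus_{n \ges 0} L^2(\Rplus; \Kil)^{\vee n}$ yields that the higher symmetric tensor powers are swept out by iterated flow products, using the totality of exponential vectors over step functions. Conversely, if~(ii) fails with $(\bra{z} \ot I_\init) L = \lambda I_\init$ for some $z \neq 0$, then a nonzero vector in $\mathcal{M}^\perp$ can be extracted from the obstruction in $\mathcal{S}_0$; alternatively, Proposition~\ref{propn: jprime} permits modification of $U$ by the pure-noise cocycle with generator $(0, \ket{-\lambda z / \norm{z}^2}, I_\Kil)$, producing an HP~cocycle that induces the same $j$ but satisfies $(\bra{z} \ot I_\init) \Ltilde = 0$, exposing the $z$-direction as redundant noise. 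Making the multi-particle induction rigorous and producing the explicit orthogonal witness in the converse constitute the technical heart of the argument and correspond to Bhat's original argument.
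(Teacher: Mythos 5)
The first thing to note is that the paper offers no proof of this statement: it is quoted directly from Bhat's memoir (\cite[Theorem 9.1]{Bhat}), so the only ``proof'' in the paper is the citation, and your attempt must be measured against Bhat's argument rather than anything internal. Your opening steps are sound as far as they go: the infinitesimal computation via Theorem~\ref{thm: innerj}(c) with $y=0$ correctly isolates the creation coefficient $W^*\bigl((I_\Kil\ot a)L - La\bigr)$, and the equivalence of condition (ii) with the density of $\mathcal{S}_0$ in $\Kil\ot\init$ is correct --- although even there the ``if'' direction deserves more than the phrase ``rank-one operators and irreducibility'': one should take $\xi\perp\mathcal{S}_0$, test against $a=\dyad{w}{u}$, and deduce that for some $u$ the vector $z$ with components $\ip{u}{\xi_i}$ is non-zero and makes $(\bra{z}\ot I_\init)L$ scalar, which in infinite dimensions involves an interchange of infinite sums that you do not address.

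The genuine gaps lie exactly where you acknowledge them, and they are not cosmetic. First, the passage from the $t\to 0+$ derivative of a single $j_t(a)v\Vac$ to the claim that ``the one-particle sector of $\mathcal{M}$ coincides, after tensoring with step-function time coordinates, with $\mathcal{S}$'' is not automatic: by~(\ref{eqn: mQSDE}) the one-particle content of $j_t(a)v\Vac$ at time $s\in[0,t)$ is the $\jKil_s$-conjugated coefficient, not the bare one, so controlling the span requires an argument in $s$, not just at $s=0$. Second, the multi-particle induction --- that iterated products $j_{t_1}(a_1)\cdots j_{t_n}(a_n)u\Vac$ sweep out the $n$-particle sectors precisely when the one-particle condition holds --- is asserted rather than performed, and this is the substance of Bhat's theorem; deferring it means the forward implication is unproved. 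Third, in the converse direction, replacing $U$ by a cocycle with $(\bra{z}\ot I_\init)\Ltilde=0$ does not by itself produce a vector in $\mathcal{M}^\perp$, since $\Wtilde$ may still couple the $z$-direction into the flow; one must exhibit the orthogonal witness (at leading order it is $\Wtilde^*(z\ot u)$ smeared in time) and then show that no higher-order product of the $j_{t_i}(a_i)$ recovers that direction. As written, the proposal is a reasonable plan whose essential content is still the uncited-in-detail argument of \cite{Bhat}.
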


\begin{rems}
To see directly that (ii) is independent of the choice of
HP~cocycle which induces $j$, note that for two such HP~cocycles with
stochastic generators $(H_1,L_1,W_1)$ and~$(H_2,L_2,W_2)$, it holds
that
\[
\big\{ ( \bra{z} \ot I_\init ) L_2: %
z \in \Kil \setminus \{ 0 \} \big\} + \Comp I_\init = %
\big\{ ( \bra{z} \ot I_\init ) L_1: %
z \in \Kil \setminus \{ 0 \} \big\} + \Comp I_\init,
\]
by Proposition~\ref{propn: jprime}. This also gives the following
further equivalent condition.
\begin{itemize}
\item[(iii)] The stochastic generator $(H,L,W)$ of any HP~cocycle
which induces $j$ is such that the degeneracy space
$\Kil^L$ equals $\{ 0 \}$; for the definition of $\Kil^L$,
see~(\ref{eqn: kX}).
\end{itemize}

Bhat actually deals with the associated $E_0$~semigroup
$J := ( \jhat_t \circ \sigma_t )_{t\ges 0}$ on $B(\init \ot \Fock)$
which, in view of the remark following Definition~\ref{defn: EH}, is
equivalent.
\end{rems}

\section{Quasifree stochastic calculus}\label{sec: QF SC}

In this section we produce a simplified form of the coordinate-free
multidimensional quasifree stochastic calculus [$\text{LM}_{1,2}$]
with respect to a fixed AW~amplitude $\Sigma = \SigmaAB$ for a Hilbert
space~$\noise$, the \emph{quasifree noise dimension space},
whose conjugate Hilbert space we denote by $( \ol{\noise}, k)$.

In contrast to the approach of [$\text{LM}_{1,2}$],
here we focus on that part of the quasifree stochastic calculus that
may be obtained \emph{inside} the standard theory summarised in Section \ref{sec: QSC}.
Thus, whilst being restricted to HP~cocycles so that stochastic generators
are all bounded, the results developed here do not require faithfulness of
the quasifree states employed.

The conjugate Hilbert space of $L^2(\Rplus; \noise)$ is identified
with $L^2( \Rplus; \kbar)$
(conjugation being defined pointwise: $\ol{f}(t) := \ol{f(t)}$), and
the orthogonal sum
$L^2(\Rplus; \noise) \op L^2( \Rplus; \kbar)$ is identified with
$L^2( \Rplus; \noise \op \kbar )$. Note that we are here working
with the Boson Fock space $\Fock$ over
$L^2( \Rplus; \noise \op \kbar )$.

\begin{motivation}
Let
$\smallrow{ \Sigma_0}{ \Sigma_1 } = %
\smallmat{ \Sigma^0_0 }{ \Sigma^1_0 }{ \Sigma^0_1 }{ \Sigma^1_1 }$
be the block-matrix form of the AW~amplitude $\Sigma$,
with~$\Sigma_0 = %
\smallcol{ \Sigma^0_0 }{ \Sigma^1_0} \in B(\noise; \noise \op \kbar )$
and~$\Sigma_1 = %
\smallcol{ \Sigma^0_1 }{ \Sigma^1_1} \in B(\kbar; \noise \op \kbar )$.
Following Proposition~\ref{propn: 3.15} and the
relations~(\ref{eqn: 2.11a}--b) expressing quasifree creation and
annihilation operators $\aplusSigma$ and  $\aminusSigma$ in terms of
Fock creation and annihilation operators, the following requirements
for quasifree stochastic integration become apparent.

Suppose the quasifree noise dimension space $\noise$ is finite dimensional,
with orthonormal basis~$(e_i)_{i\in\I}$, let~$R$ be a simple
($\noise\ot\init$)-$\init$~process, let $t > 0$ and suppose the
partition $\{ 0 = t_0 < \cdots < t_n = t\}$ contains the points of
discontinuity of $R$ on $[ 0, t )$.
(We are using the symbol $R$ here for convenience;
there is no suggestion of any connection with field operators,
for which the symbol was used earlier.)
Set
\[
\AminusSigma(R)_t = I_1(t) + I_2(t),
\]
where
\begin{align*}
I_1(t) & := \sum_{i \in \I} \sum_{j=0}^{n - 1} %
R_i( t_j ) \bigl( I_\init \ot %
\aminusHplusHbar( \Sigma^0_0 e_i \indf{[t_j, t_{j + 1})}, %
\Sigma^1_0 e_i \indf{[t_j, t_{j + 1})} ) \bigr), \\[1ex]
\text{and} \quad I_2(t) & := \sum_{i \in \I} \sum_{j=0}^{n - 1} %
R_i( t_j ) \bigl( I_\init \ot %
\aplusHplusHbar ( \Sigma^0_1 \ol{e_i} \indf{[t_j, t_{j + 1})}, %
\Sigma^1_1 \ol{e_i} \indf{[t_j, t_{j + 1})} ) \bigr),
\end{align*}
with $\Hil$ denoting $L^2( \Rplus; \noise)$. Note that, for any
$u \in \init$, $f$, $g \in \Stepk$ and $x$, $y \in \noise$,
\begin{align*}
\aminusHplusHbar( \Sigma^0_0 e_i \indf{[t_j, t_{j +1})}, %
\Sigma^1_0 e_i \indf{[t_j, t_{j + 1})} ) \ve( f, \gbar ) & = %
\int_{t_j}^{t_{j+1}} \bigl\langle ( \Sigma^0_0 e_i, \Sigma^1_0 e_i ), %
( f(s), \gbar(s) ) \bigr\rangle \ve(f, \gbar) \, \rd s, \\
\ip{ ( \Sigma^0_0 e_i, \Sigma^1_0 e_i ) }{ ( x, \ybar ) } & = %
\Bigl\langle e_i, (\Sigma_0)^* \binom{x}{\ybar} \Bigr\rangle \\[1ex]
\text{ and } \Bigl\langle e_i, (\Sigma_0)^* %
\binom{x}{\ybar} \Bigr\rangle R_i(t_j) \uefgbar & = %
R(t_j) \big( \dyad{e_i}{e_i} \ot I_{\init \ot \Fock} \bigr) %
\Bigl( (\Sigma_0)^* \binom{ x }{ \ybar } \ot \uefgbar \Bigr).
\end{align*}
Thus
\[
I_1(t) \uefgbar = %
\int_0^t R_s \Bigl( (\Sigma_0)^* \binom{ f(s) }{ \gbar(s) } \ot %
\uefgbar \Bigr) \, \rd s \qquad %
\text{for all } u \in \init \text{ and } f, g \in \Step,
\]
and therefore
\[
I_1(t) \supseteq %
A^-\bigl( R \, ( (\Sigma_0)^* \ot I_{\init \ot \Fock} ) \bigr)_t.
\]
Applying this reasoning to $I_2(t)^*$, and exploiting adaptedness to
commute the terms $R_i(t_j)^*$ and
$I_\init \ot %
\aminusHplusHbar( \Sigma^0_1 \ol{e_i} \indf{[t_j, t_{j +1})}, %
\Sigma^1_1 \ol{e_i} \indf{[t_j, t_{j +1})} )$, where $i \in \I$ and
$j = 0, \ldots , n - 1$, yields the relation
\[
I_2(t)^* = A^-\bigl( R^{\pT *} \, ( (\Sigma_1)^* \ot %
I_{\init \ot \Fock} ) \bigr)_t \qquad \text{on } \init \otalg \ExpK,
\]
where $R^\pT$ is the $\init$-$(\kbar\ot\init)$~process such that
\[
( \bra{ \ol{e_i} } \ot I_{\init\ot\Fock} ) R^\pT_s = %
R_s ( \ket{ e_i } \ot I_{\init\ot\Fock} ) \qquad
\text{for all } i \in \I \text{ and } s\in\Rplus;
\]
$R^\pT$ is said to be \emph{partially transpose} to $R$. It follows
that
\[
I_2(t) \supseteq %
A^+\bigl( ( \Sigma_1 \ot I_{\init\ot\Fock} ) R^\pT \bigr)_t,
\]
and therefore
\[
\AminusSigma(R)_t = A^-( R( (\Sigma_0)^* \ot I_\Fock ) )_t + %
A^+( ( \Sigma_1 \ot I_\Fock ) R^\pT )_t.
\]
Moreover, this also shows, for a suitable
$\init$-$(\noise\ot\init)$~process~$Q$, that
\[
\AplusSigma(Q)_t = A^+\bigl( ( \Sigma_0 \ot I_\Fock ) Q \bigr)_t + %
A^-\bigl( Q^\pT ( (\Sigma_1)^* \ot I_\Fock ) \bigr)_t,
\]
where $Q^\pT$ is the $(\kbar\ot\init)$-$\init$~process partially
transpose to $Q$, given by $( Q^{* \pT *}_t )_{t\ges 0}$.

\noindent
Hence
\[
\AplusSigma(Q)_t + \AminusSigma(R)_t = %
A^+ \Bigl( ( \Sigma \ot I_{\init\ot\Fock} ) \smallcol{Q}{ R^\pT } \Bigr)_t + %
A^- \Bigl( \row{R}{ Q^\pT } ( \Sigma^* \ot I_{\init\ot\Fock} ) \Bigr)_t \ \quad %
\text{for all } t \in \Rplus.
\]
\end{motivation}

The preceding discussion shows clearly the need for a partial
transpose operation for infinite-dimensional $\noise$. A comprehensive
theory is developed in [$\text{LM}_{1,2}$]. Here we specialise to our
context of AW~amplitudes, and it is convenient to concentrate on the
composition of the partial transpose and adjoint operations.

First note that, for any $Y \in B(\hilone; \hil \ot \hiltwo)$, the
quantity
\[
\chil(Y) := \sup\Bigl\{ \bigl( %
\mbox{$\sum_{i \in \I}$} \norm{ Y^* ( e_i \ot u ) }^2 \bigr)^{1/2} : %
u \in \hiltwo, \norm{u} = 1 \Bigr\} \in [0, \infty]
\]
is independent of the choice of orthonormal basis $(e_i)_{i \in \I}$
for $\hil$. When it is finite,
\[
\chil(Y) = \sup\bigl\{ \norm{ Y^* ( I_{\hil} \ot \ket{u} ) }_2 : %
u \in \hiltwo, \norm{u} = 1 \bigr\},
\]
where $\norm{ \cdot }_2$ denotes the Hilbert--Schmidt norm. Let
$HS( \hil; \hil' )$ denote the space of Hilbert--Schmidt operators
from $\hil$ to $\hil'$.

\begin{thm}\label{thm: 3.28}
Let $Y \in B(\hilone; \hil \ot \hiltwo)$.
\begin{alist}
\item
The following are equivalent.
\begin{rlist}
\item
There is an operator $Y^\pc \in B( \hiltwo; \hilbar \ot \hilone)$ such
that
\begin{equation}
\label{eqn: 3.3} 
( \bra{ \ybar } \ot I_{\hilone} ) Y^\pc = %
Y^* ( \ket{y} \ot I_{\hiltwo} ) \qquad \text{for all } y \in \hil.
\end{equation}
\item The quantity $\chil(Y)$ is finite.
\end{rlist}
In this case, the operator $Y^\pc$ is unique and
$\chil(Y) = \norm{ Y^\pc }$; furthermore, $\chil( Y^\pc ) = \norm{Y}$
and~$Y^{\pc \pc} = Y$.

\item Suppose that $\chil(Y) < \infty$, and let
\[
X \in B(\hil'; \hil''), \quad X_1 \in B(\hilone'; \hilone), \quad %
Z_2 \in B(\hiltwo; \hiltwo') \quad \text{and} \quad %
Z \in B(\hil).
\]
The following statements hold.
\begin{rlist}
\item $\chil(Y \ot X) < \infty$ and $( Y \ot X )^\pc = Y^\pc \ot X^*$,
so $\chil(Y \ot X) = \chil(Y) \norm{ X }$;
\item $\chil(Y X_1)  < \infty$ and
$(Y X_1)^\pc = ( I_{\hilbar} \ot X_1^* ) Y^\pc$;
\item $\chil( ( I_{\hil} \ot Z_2 ) Y )  < \infty$ and
$\bigl( ( I_{\hil} \ot Z_2 ) Y \bigr)^\pc = Y^\pc Z_2^*$;
\item $\chil( ( Z \ot I_{\hiltwo} ) Y ) < \infty$ and
$\bigl( ( Z \ot I_{\hiltwo} ) Y \bigr)^\pc = %
( \ol{Z} \ot I_{\hiltwo} ) Y^\pc$.
\end{rlist}
\item Suppose that $\chil(Y \ot I_{\hil'} ) < \infty$ for some
non-zero Hilbert space $\hil'$. Then $\chil(Y) < \infty$
and~$Y^\pc \ot I_{\hil'} = ( Y \ot I_{\hil'})^\pc$.
\item Let $T \in HS(\hil_0; \hil)$ and
$A \in B(\hil_1; \hil_2)$. Then
$\chil( T \ot A ) = \norm{T}_2 \, \norm{A} < \infty$.
\end{alist}
\end{thm}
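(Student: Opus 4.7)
The plan is to work directly from the defining identity~(\ref{eqn: 3.3}), extracting $Y^\pc$ componentwise via a chosen orthonormal basis of $\hil$, and then deducing parts (b)--(d) by routine composition arguments.

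For part (a), fix an orthonormal basis $(e_i)_{i \in \I}$ of $\hil$; then $(\ol{e_i})_{i \in \I}$ is an orthonormal basis of $\hilbar$. If (i) holds, expanding $Y^\pc u$ in this basis and using~(\ref{eqn: 3.3}) gives
\[
Y^\pc u = \sum_{i \in \I} \ol{e_i} \ot ( \bra{\ol{e_i}} \ot I_{\hilone} ) Y^\pc u =
\sum_{i \in \I} \ol{e_i} \ot Y^* ( e_i \ot u ),
\]
so $\norm{ Y^\pc u }^2 = \sum_{i} \norm{Y^*(e_i \ot u)}^2$, whence $\norm{Y^\pc} = \chil(Y) < \infty$, giving (ii) and $\chil(Y) = \norm{Y^\pc}$. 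Conversely, if $\chil(Y) < \infty$ then the above display defines a bounded operator $Y^\pc$ on $\hiltwo$ (of norm~$\chil(Y)$) whose image lies in $\hilbar \ot \hilone$ and which satisfies~(\ref{eqn: 3.3}); its uniqueness is clear since the set $\{ \ol{y} \ot v : y \in \hil, v \in \hilone\}$ is total in $\hilbar \ot \hilone$. For $Y^{\pc\pc} = Y$, unwind the defining identity: for $y \in \hil$, $u \in \hiltwo$ and $v \in \hilone$, the relation $\ip{\ol{e_i}}{\ybar}_{\hilbar} = \ip{y}{e_i}_{\hil}$ yields
\[
\ip{ Y^\pc u }{ \ybar \ot v } =
\sum_{i} \ip{ y }{ e_i } \ip{ e_i \ot u }{ Y v } = \ip{ y \ot u }{ Y v },
\]
so $(Y^\pc)^* ( \ket{\ybar} \ot I_{\hilone} ) = ( \bra{y} \ot I_{\hiltwo} ) Y$, which is exactly the defining identity for $(Y^\pc)^\pc = Y$. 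In particular $\chil(Y^\pc) = \norm{(Y^\pc)^\pc} = \norm{Y}$.

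Part (b) consists of four direct verifications. In each case, we check that the proposed right-hand side satisfies~(\ref{eqn: 3.3}) and invoke uniqueness. For (i),
\[
( \bra{\ybar} \ot I_{\hilone \ot \hil'} ) ( Y^\pc \ot X^* ) =
\bigl( Y^* ( \ket{y} \ot I_{\hiltwo} ) \bigr) \ot X^* =
( Y \ot X )^* \bigl( \ket{y} \ot I_{\hiltwo \ot \hil''} \bigr);
\]
the bound $\chil(Y \ot X) = \chil(Y) \norm{X}$ follows from $\norm{Y^\pc \ot X^*} = \norm{Y^\pc} \, \norm{X^*}$. Parts (ii), (iii) and (iv) are analogous one-line checks using~(\ref{eqn: 3.3}) together with the standard rules $(AB)^* = B^*A^*$, commutativity of ampliations on distinct tensor factors, and the identity $(\bra{\ol{Z^*y}} \ot I_{\hilone})( \ol{Z} \ot I_{\hiltwo} ) = \bra{\ybar} \ot I_\hilone \cdot (Z \ot I_{\hiltwo})^*$ for (iv).

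For (c), pick any unit vector $v \in \hil'$; then, applied to $u \ot v \in \hiltwo \ot \hil'$, the defining quantity gives $\sum_i \norm{Y^*(e_i \ot u)}^2 = \sum_i \norm{(Y \ot I_{\hil'})^*(e_i \ot u \ot v)}^2 \les \chil(Y \ot I_{\hil'})^2$, so $\chil(Y) < \infty$; the second assertion is then the case $X = I_{\hil'}$ of part (b)(i). For (d), if $T \in HS(\hil_0; \hil)$ and $A \in B(\hilone; \hiltwo)$, a direct computation gives
\[
\sum_{i} \norm{ ( T \ot A )^* ( e_i \ot u ) }^2 =
\sum_{i} \norm{ T^* e_i }^2 \, \norm{ A^* u }^2 =
\norm{ T }_2^2 \, \norm{ A^* u }^2,
\]
so taking the supremum over unit vectors $u \in \hiltwo$ yields $\chil(T \ot A) = \norm{T}_2 \, \norm{A}$.

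The only genuine subtlety lies in part (a)--specifically, handling the antiunitary identification so as to establish $Y^{\pc\pc} = Y$ unambiguously, and noting that the basis-independence of $\chil(Y)$ (already stated just before the theorem) plus the explicit formula for $Y^\pc$ guarantees that our construction is intrinsic. Everything else is mechanical use of~(\ref{eqn: 3.3}).
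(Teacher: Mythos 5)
Your proof is correct and follows essentially the same route as the paper: constructing $Y^\pc$ via the basis expansion $u \mapsto \sum_i \ol{e_i} \ot Y^*(e_i \ot u)$, deducing uniqueness from totality and $Y^{\pc\pc}=Y$ by taking adjoints of~(\ref{eqn: 3.3}), verifying (b) and (d) by direct checks of the defining identity, and handling (c) with the same unit-vector ampliation trick the paper uses.
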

\begin{proof}
Let $(e_i)_{i \in \I}$ be an orthonormal basis for $\hil$ and note the
trivial identity
\begin{equation}\label{eqn: Y star e} 
Y^* ( \ket{ e_i } \ot I_{\hiltwo} ) u = %
Y^* ( I_{\hil} \ot \ket{ u } )  e_i
\qquad \text{ for all } i \in \I \text{ and } u \in \hiltwo.
\end{equation}
For (a), note first that if $\chil(Y)<\infty$ then the prescription
$u \mapsto \sum_{i \in \I} \ol{ e_i } \ot Y^* ( e_i  \ot  u )$
defines an operator $Y^\pc$ from $\hiltwo$ to $\hilbar \ot \hilone$
which is bounded with norm $\chil(Y)$ and such that
\[
( \bra{ \ybar } \ot I_{\hilone} ) Y^\pc u = %
\sum_{i \in \I} \ip{ e_i }{ y } Y^* ( e_i  \ot  u ) = %
Y^* ( y  \ot  u ) \qquad \text{for all } y \in \hil %
\text{ and } u \in \hiltwo,
\]
so that~(\ref{eqn: 3.3}) holds. Conversely, suppose that an operator
$Y^\pc \in B( \hiltwo; \hilbar \ot \hilone )$
satisfies~(\ref{eqn: 3.3}). Then~\eqref{eqn: Y star e} implies that
\[
\sum_{i \in \I} \norm{ Y^* ( I_{\hil} \ot \ket{ u } )  e_i }^2 = %
\sum_{i \in \I} %
\norm{ ( \bra{ \ol{ e_i} } \ot I_{\hilone} ) Y^\pc u }^2 = %
\norm{ Y^\pc u }^2 \qquad \text{for all } u \in \hiltwo,
\]
so (ii) holds. Uniqueness of the operator $Y^\pc$ is immediate, and
the fact that now $\chilbar( Y^\pc ) = \norm{Y}$ and~$Y^{\pc \pc} = Y$
follows from taking the adjoint of identity~(\ref{eqn: 3.3}).

Parts (b) and (d) are readily verified, and part (c) follows from the
identity
\[
Y^* ( I_{\hil} \ot \ket{ u } ) = %
( I_{\hilone} \ot \bra{ u' } ) %
( Y \ot I_{ \hil' } )^* ( I_{\hil} \ot \ket{ u \ot u' } ),
\]
which is valid for all $u \in \hil$ and any unit vector $u'\in \hil'$.
\end{proof}

\begin{defn}
\label{defn: 3.30}
We let
\[
\Bchil(\hil_1; \hil \ot \hil_2) := %
\bigl\{ Y \in B( \hilone; \hil \ot \hiltwo ): c(Y) < \infty \bigr\},
\]
and note that it is a subspace of $B(\hil_1; \hil \ot \hil_2)$
on which $\chil$ defines a norm. The elements of this space are
$\hil$-\emph{conjugatable} or \emph{partially conjugatable}
operators, and \emph{partial conjugation} is the conjugate-linear
isomorphism
\[
\Bchil( \hilone; \hil \ot \hiltwo ) \to %
\Bchilbar( \hiltwo; \hilbar \ot \hilone ); \ %
Y \mapsto Y^\pc.
\]
An $\hil$-$(\noise\ot\hil')$~process $Q$ is \emph{conjugatable}
if, for all $t \in \Rplus$, the operator $Q_t$ is $\noise$-conjugatable;
in this case $Q^\pc := ( Q^\pc_t )_{t \ges 0}$ is an
$\hil'$-$(\kbar\ot\hil)$~process.
\end{defn}

\begin{rem}
Given any $T \in B( \hilone; \hiltwo )$ and $x \in \hil$,
the operator $\ket { x } \ot T$ is $\hil$-conjugatable,
with the result $( \ket { x } \ot T )^\pc = \ket{ \xbar } \ot T^*$.
In particular, if $\dim \hil < \infty$ then
every operator $Y$ in $B( \hilone; \hil \ot \hiltwo )$ is
$\hil$-conjugatable and $\norm{ Y^\pc } \leq ( \dim \hil ) \norm{Y}$.
\end{rem}

\begin{defn}\label{defn: 3.32}
A $\khat\ot\init$~process $G$ with noise dimension space $\noise$ and
block matrix form $\smallgaussintegrand{K}{Q}{R}$ is a
\emph{$\Sigma$-integrand process on $\init$} if, setting
$\Kil = \noise \op \kbar$,
\begin{alist}
\item the processes $Q$ and $R^*$ are conjugatable,
and
\item the $\Khat \ot \init$~process
$G^\Sigma := \smallgaussintegrand{K}{L}{M}$ is a $\Kil$-integrand
process, in the sense of Definition~\ref{defn: 3.5}, where
\[
L_t := %
( \Sigma \ot I_{\init\ot\Fock} ) \col{ Q_t }{ R^{ * \pc }_t } %
\quad \text{and} \quad M_t := %
\row{ R_t }{ Q^{ \pc * }_t } ( \Sigma^* \ot I_{\init\ot\Fock} ) %
\quad \text{for all } t \in \Rplus.
\]
\end{alist}
In this case, the \emph{quasifree stochastic integral} of $G$ is the
process $\Lambda^\Sigma ( G ) := \Lambda( G^\Sigma )$.
\end{defn}

\begin{rems}
If $G$ is a $\Sigma$-integrand process on $\init$, with block-matrix
form $\smallgaussintegrand{K}{Q}{R}$, then
\begin{equation}\label{eqn: V tilde} 
G^\Sigma = \wh{\Sigma} \, \Gbox \, \wh{\Sigma}^*, \quad \text{where} %
\quad \Gbox :=
\begin{bmatrix}
K & R & Q^{\pc *} \\
Q & 0 & 0 \\
R^{* \pc} & 0 & 0
\end{bmatrix}
\quad \text{and} \quad
\wh{\Sigma} :=
\mat{ 1 }{ 0 }{ 0 }{ \Sigma } \ot I_{\init\ot\Fock}.
\end{equation}

A sufficient condition for a $\khat\ot\init$~process
$\smallgaussintegrand{K}{Q}{R}$ to be a $\Sigma$-integrand process is
that the function
\[
t \mapsto
\norm{ K_t } +
\norm{ Q_t }^2 +
\norm{ Q^\pc_t }^2 +
\norm{ R_t }^2 +
\norm{ R^{* \pc}_t }^2
\]
is locally integrable on $\Rplus$. If $\dim \noise < \infty$ then this
reduces to the local integrability of the function
$t \mapsto \norm{ K_t } + \norm{ Q_t }^2 + \norm{ R_t }^2$.
\end{rems}

We will now show that $\Sigma$-integrability is unaffected by
squeezing. The transformation of integrands resulting from squeezing
the AW~amplitude may be viewed as a change-of-variables formula.

\begin{thm}\label{thm: 4.1}
Let $\Sigmatilde = \Sigma \, M$, where $M$ is
a squeezing matrix for~$\noise$,  and
let $G$ be a $\Sigma$-integrand process. Then there is a
$\Sigmatilde$-integrand process $\Gtilde$ such that
$\Lambda^{\Sigmatilde} ( \Gtilde ) = \Lambda^\Sigma( G )$.
\end{thm}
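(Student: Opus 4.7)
The key observation is that since $M$ is a squeezing matrix, $M = M_B$ for some $B \in S(\noise)^\times$, and Proposition~\ref{propn: MiotaB}(b) yields that $M$ is invertible with $M^{-1} = M_{B^{-1}}$ itself a squeezing matrix. Thus $\wt{\Sigma} M^{-1} = \Sigma M M^{-1} = \Sigma$, which suggests the following construction. Writing $G = \smallgaussintegrand{K}{Q}{R}$, the plan is to define $\wt{G} := \smallgaussintegrand{K}{\wt{Q}}{\wt{R}}$ by requiring
\[
\col{\wt{Q}_t}{\wt{R}^{*\pc}_t} := (M^{-1} \ot I_{\init \ot \Fock}) \col{Q_t}{R^{*\pc}_t} \qquad \text{for each } t \in \Rplus.
\]
Then the identity $(\wt{\Sigma} \ot I)(M^{-1} \ot I) = \Sigma \ot I$ forces the top row of the block-matrix form of $\wt{G}^{\wt{\Sigma}}$ to coincide with that of $G^\Sigma$, and taking adjoints yields the same for the bottom row. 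Hence $\wt{G}^{\wt{\Sigma}} = G^\Sigma$, so that $\Lambda^{\wt{\Sigma}}(\wt{G}) = \Lambda(\wt{G}^{\wt{\Sigma}}) = \Lambda(G^\Sigma) = \Lambda^\Sigma(G)$ as required.

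The substantive work is to verify that this prescription yields a well-defined $\wt{\Sigma}$-integrand process. Writing $B^{-1} = L' + A'$ for its linear and conjugate-linear parts, so that $M^{-1} = \smallmat{L'}{-KA'}{-A'K^{-1}}{\ol{L'}}$, the prescription unpacks to
\[
\wt{Q}_t = (L' \ot I) Q_t - ((KA') \ot I) R^{*\pc}_t
\quad \text{and} \quad
\wt{R}^{*\pc}_t = -((A'K^{-1}) \ot I) Q_t + (\ol{L'} \ot I) R^{*\pc}_t.
\]
I must show that $\wt{Q}_t$ is $\noise$-conjugatable and that the operator tentatively denoted $\wt{R}^{*\pc}_t$ is $\kbar$-conjugatable, so that by Theorem~\ref{thm: 3.28}(a) it is the partial conjugate of a well-defined operator, from which $\wt{R}^*_t$ and hence $\wt{R}_t$ are recovered. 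The diagonal-type contributions $(L' \ot I) Q_t$ and $(\ol{L'} \ot I) R^{*\pc}_t$ are dealt with by Theorem~\ref{thm: 3.28}(b)(iv). For the off-diagonal contributions, which involve a bounded complex-linear map between distinct Hilbert spaces (for example, $KA' \in B(\kbar; \noise)$), a straightforward extension of that argument suffices: if $Y \in \Bchil(\init; \hil \ot \init)$ and $W \in B(\hil; \hil')$, then $(W \ot I) Y$ is $\hil'$-conjugatable with corresponding seminorm at most $\norm{W} \, \chil(Y)$, as a direct Hilbert--Schmidt estimate on $Y^* (I \ot \ket{u}) W^*$ demonstrates. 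Internal consistency between the two relations obtained by matching top and bottom rows of $\wt{G}^{\wt{\Sigma}} = G^{\Sigma}$ follows from the compatibility of partial conjugation with the specific block form of $M^{-1}$.

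Finally, the local integrability conditions of Definition~\ref{defn: 3.32} for $\wt{G}$ follow from those for $G$ by the uniform boundedness of the entries of $M^{-1}$, completing the verification. The principal obstacle is the bookkeeping around partial conjugation described in the previous paragraph; once that is handled, the equality $\wt{G}^{\wt{\Sigma}} = G^\Sigma$ is a direct algebraic consequence of the defining relation.
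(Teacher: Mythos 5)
Your proposal is correct and follows essentially the same route as the paper: both define the new integrand by applying $M^{-1}\ot I_{\init\ot\Fock}$ to the column $\smallcol{Q_t}{R^{*\pc}_t}$ and then use Theorem~\ref{thm: 3.28} to check conjugatability and the compatibility of this transformation with partial conjugation (equivalently, that the second column $\smallcol{R^*_t}{Q^\pc_t}$ transforms by the same matrix), which is exactly the computation the paper writes out explicitly in terms of the parameterisation $M=\MVCP$. The only point to be careful about is that the bottom row of $G^\Sigma$ is not literally the adjoint of the top row, so the ``taking adjoints'' step genuinely reduces to the conjugation-consistency check you correctly identify and resolve in your third paragraph.
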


\begin{proof}
Let $G$ have block-matrix form $\smallQFintegrand{K}{Q}{R}$, let
$M = \MVCP$ as in (\ref{eqn: squeeze}),
and let
\[
\Qtilde_t := ( c V^* \ot I ) Q_t - %
( C s V^* k^{-1} \ot I ) R^{ * \pc }_t
\quad \text{ and } \quad %
\Rtilde_t := R_t ( V c \ot I ) -  Q^{ \pc * }_t ( k V C s \ot I )
\]
for all $t \ges 0$, where $c := \cosh P$, $s:= \sinh P$ and
$I:= I_{\init \ot \Fock}$.
To show that
$\Gtilde := \smallQFintegrand{K}{\Qtilde}{\Rtilde}$ is as desired, it
now suffices to verify the following.
\begin{alist}
\item
The processes $\Qtilde$ and $\Rtilde$ are conjugatable.
\item
For all $t \in \Rplus$, it holds that
\[
( \Sigmatilde \ot I ) \col{ \Qtilde_t }{ \Rtilde^{ * \pc }_t } = %
( \Sigma \ot I ) \col{ Q_t }{ R^{ * \pc }_t } \quad \text{and} \quad %
\row{ R_t }{ Q^{ \pc * }_t } ( \Sigmatilde \ot I )^* = %
\row{ R_t }{ Q^{ \pc * }_t } ( \Sigma \ot I )^*;
\]
equivalently,
\[
( \Sigmatilde \ot I ) %
\mat{ \Qtilde_t }{ \Rtilde^{ * \pc }_t }%
{ \Rtilde^*_t }{  \Qtilde^\pc_t } = %
( \Sigma \ot I ) %
\mat{ Q_t }{ R^{ * \pc }_t }{ R^*_t }{  Q^\pc_t }
\qquad \text{for all } t \in \Rplus.
\]
\end{alist}
Now, Theorem~\ref{thm: 3.28} gives (a), and the following identities:
\begin{align*}
\Rtilde^{ * \pc }_t & = %
( \ol{ c V^* } \ot I ) R^{ * \pc }_t - ( k s C V^* \ot I ) Q_t, \\
\Rtilde^{ * }_t & = %
(  c V^*  \ot I ) R^{ * }_t - ( s C V^* k^{-1} \ot I ) Q^{ \pc }_t \\
\text{and } \quad \Qtilde^{ \pc }_t & = %
( \ol{ c V^* } \ot I ) Q^{ \pc }_t - ( k C s V^* \ot I ) R^*_t
\end{align*}
for all $t \in \Rplus$. Together these imply that
\[
\mat{ \Qtilde_t }{ \Rtilde^{ * \pc }_t }%
{ \Rtilde^*_t }{ \Qtilde^\pc_t } = %
( M \ot I )^{-1} \mat{ Q_t }{ R^{ * \pc }_t }{ R^*_t }{  Q^\pc_t }
\qquad \text{for all } t \in \Rplus,
\]
and so (b) holds as required.
\end{proof}

The following identity is the \emph{first fundamental formula} for quasifree
stochastic integrals.  In view of Theorem~\ref{thm: 3.9}, it holds by
definition.

\begin{propn}\label{propn: 3.35}
Let $G$ be a $\Sigma$-integrand process on $\init$. With the
notation given in \tu{(\ref{eqn: V tilde})},
\[
\big\langle \uef , \Lambda^\Sigma (G)_t \veg \big\rangle = %
\int_0^t \Bigl\langle \wh{ \Sigma^* f}(s) \ot \uef, %
\Gsbox  \, \big( \wh{ \Sigma^* g}(s) \ot \veg \big) %
\Bigr\rangle \, \rd s
\]
for all $u$, $v \in \init$, $f$, $g\in\StepK$ and $t \in \Rplus$.
\end{propn}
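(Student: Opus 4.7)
The plan is to reduce the statement to an immediate consequence of the standard first fundamental formula (Theorem~\ref{thm: 3.9}), exploiting the factorisation $G^\Sigma = \wh{\Sigma} \, \Gbox \, \wh{\Sigma}^*$ recorded in~\eqref{eqn: V tilde}.

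First I would unfold the definition $\LambdaSigma(G) := \Lambda(G^\Sigma)$ and apply Theorem~\ref{thm: 3.9} to the $(\noise\op\kbar)$-integrand process $G^\Sigma$. This gives
\[
\bigl\langle \uef, \LambdaSigma(G)_t \veg \bigr\rangle =
\int_0^t \bigl\langle \fhat(s) \ot \uef,
G^\Sigma_s \bigl( \ghat(s) \ot \veg \bigr) \bigr\rangle \, \rd s,
\]
where $\fhat$ and $\ghat$ refer to the augmentation of step functions with values in $\noise\op\kbar$ by a leading scalar $1$.

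Next I would substitute $G^\Sigma_s = \wh{\Sigma} \, \Gsbox \, \wh{\Sigma}^*$ and move $\wh{\Sigma}$ onto the left-hand entry of the inner product by taking adjoints, giving
\[
\int_0^t \bigl\langle \wh{\Sigma}^* \bigl( \fhat(s) \ot \uef \bigr),
\Gsbox \, \wh{\Sigma}^* \bigl( \ghat(s) \ot \veg \bigr) \bigr\rangle \, \rd s.
\]
Since $\wh{\Sigma} = \smallmat{1}{0}{0}{\Sigma} \ot I_{\init\ot\Fock}$ acts trivially on the $\init\ot\Fock$ factor, it suffices to identify $\wh{\Sigma}^* \fhat(s)$ with $\wh{\Sigma^* f}(s)$.

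The key bookkeeping step, which I expect to be the only non-trivial point, is this identification. Writing $\fhat(s) = \smallcol{1}{f(s)}$, one has
\[
\wh{\Sigma}^* \fhat(s) =
\mat{1}{0}{0}{\Sigma^*} \col{1}{f(s)} = \col{1}{\Sigma^* f(s)} =
\wh{(\Sigma^* f)(s)} = \wh{\Sigma^* f}(s),
\]
using pointwise extension of $\Sigma^*$ in the last equality, as set up in the introduction. The analogous identity for $g$ then yields the claimed formula, and hence the proposition. No estimation or approximation is required; the result really is an immediate transcription of Theorem~\ref{thm: 3.9} through the definitions, as the paper's preceding remark indicates.
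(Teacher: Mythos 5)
Your proposal is correct and follows exactly the route the paper intends: the paper gives no separate proof, stating only that the formula ``holds by definition'' in view of Theorem~\ref{thm: 3.9}, and your argument simply makes explicit the substitution $G^\Sigma_s = \wh{\Sigma}\,\Gsbox\,\wh{\Sigma}^*$ and the identification $\wh{\Sigma}^*\fhat(s) = \wh{\Sigma^* f}(s)$ that this requires.
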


The following is readily verified from the definitions. Let
$\Fock^\Hil = \Gamma\bigl( L^2( \Rplus; \Hil ) \bigr)$ for any
choice of~$\Hil$.

\begin{cor}\label{cor: 3.36}
Suppose that the AW amplitude $\Sigma$ is gauge invariant, so has the
form $\SigmaA$, and let $\noise_0 := \Ker A$. Then any
$\Sigma$-integrand process $G$ on $\init$ compresses to a
$\noise_0$-integrand process $G^0$ on $\init$ and $\Lambda(G^0)_t$ is
the compression of $\Lambda^\Sigma( G )_t$ to
$\init \ot \Fock^{\noise_0}$, for all $t \in \Rplus$.
\end{cor}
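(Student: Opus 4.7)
The plan is to exhibit the compressed integrand $G^0$ explicitly and then match the first fundamental formulas for $\Lambda^\Sigma(G)$ and $\Lambda(G^0)$ on appropriate test vectors. The starting observation is that, since $\noise_0 = \Ker A$, we have $\cosh A \cdot x = x$ and $\sinh A \cdot x = 0$ for every $x \in \noise_0$; consequently, viewing $\noise_0 \subseteq \noise \op \kbar$, the AW amplitude $\Sigma = \SigmaA$ acts as the identity inclusion on $L^2(\Rplus; \noise_0)$, so that for any $f \in \Stepk$ valued in $\noise_0$ one has $\Sigma^* f = f$ and thus $\wh{\Sigma^* f}(s) \in \wh{\noise_0}$ for every $s \in \Rplus$.

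First I would orthogonally decompose $\noise = \noise_0 \op \noise_0^\perp$, inducing the tensor factorisation
\[
\Fock = \Fock^{\noise_0} \ot \Gamma\bigl( L^2( \Rplus; \noise_0^\perp \op \kbar ) \bigr),
\]
which identifies $\init \ot \Fock^{\noise_0}$ with the subspace $\init \ot \Fock^{\noise_0} \ot \Omega$ of $\init \ot \Fock$ via the vacuum of the complementary Fock factor. Writing $E$ for this isometric embedding, the compression of any operator $X$ on $\init \ot \Fock$ to $\init \ot \Fock^{\noise_0}$ is $E^* X E$; a similar compression applies to operators mapping $\init \ot \Fock$ to $\noise \ot \init \ot \Fock$, after first projecting the $\noise$-leg onto $\noise_0$ via the orthogonal projection $P_0$.

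Next, from $G = \smallQFintegrand{K}{Q}{R}$ I would define $G^0$ to be the $\wh{\noise_0} \ot \init$-process $\smallQFintegrand{K^0}{Q^0}{R^0}$ whose entries are the compressions, in the above sense, of $K_s$, of $(P_0 \ot I) Q_s$, and of $R_s ( P_0 \ot I )$ respectively. Partial conjugatability of $Q^0$ and $(R^0)^*$ with respect to $\noise_0$ follows from parts~(b)(iii)--(iv) of Theorem~\ref{thm: 3.28} together with the observation that projection does not increase the $c$-norm; the local integrability conditions demanded by Definition~\ref{defn: 3.5} are inherited from the corresponding hypotheses that make $G$ a $\Sigma$-integrand process, so $G^0$ is indeed a $\noise_0$-integrand process on $\init$.

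To finish, I would fix $u, v \in \init$ and $f, g \in \Stepk$ valued in $\noise_0$ and compare the two first fundamental formulas. Since $\wh{\Sigma^* g}(s) = \wh{g(s)}$ has no $\kbar$-component, only the $K$, $R$, and $Q$ entries of $G^\square$ contribute to the Proposition~\ref{propn: 3.35} pairing; the $Q^{\pc *}$ and $R^{* \pc}$ entries drop out because they pair with or produce $\kbar$-components orthogonal to test vectors supported in $\Fock^{\noise_0}$. The surviving contributions coincide exactly with Theorem~\ref{thm: 3.9} applied to $G^0$, yielding the required identity $\langle u\ve(f), \Lambda^\Sigma(G)_t v\ve(g) \rangle = \langle u\ve(f), \Lambda(G^0)_t v\ve(g) \rangle$. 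The main obstacle is purely bookkeeping—verifying that partial conjugatability descends from $\noise$ to $\noise_0$ under the projection $P_0$, and tracking the tensor-product identifications—both of which reduce to direct applications of Theorem~\ref{thm: 3.28}.
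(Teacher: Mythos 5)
Your argument is correct and is exactly the ``readily verified from the definitions'' check that the paper has in mind: one notes that $\Sigma_A$ restricts to the identity on $\noise_0 = \Ker A$, so for test functions valued in $\noise_0$ the $Q^{\pc *}$ and $R^{*\pc}$ entries of $\Gbox$ pair to zero in Proposition~\ref{propn: 3.35}, and the surviving terms reproduce the first fundamental formula (Theorem~\ref{thm: 3.9}) for the compressed integrand $G^0$. One small simplification: Definition~\ref{defn: 3.5} imposes no conjugatability requirement on a $\noise_0$-integrand process (that is only needed for $\Sigma$-integrands in Definition~\ref{defn: 3.32}), so your verification that $Q^0$ and $(R^0)^*$ are $\noise_0$-conjugatable, while harmless, is not needed.
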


\begin{rem}
Here $\noise_0$ is being viewed as a subspace of
$\Kil := \noise \op \kbar$ as well as of $\noise$, and
$\Fock^{\noise_0}$ is being identified with the subspace
$\Fock^{\noise_0} \ot \Vac_{\Kil \ominus \noise_0}$ of $\Fock^{\Kil}$.
\end{rem}

This observation shows the quasifree stochastic calculus constructed
here incorporates standard quantum stochastic integrals as well as
purely quasifree stochastic integrals
(\emph{i.e.} QS integrals with respect to integrators which are quasifree
for a \emph{faithful} state), making them useful for the
investigation of repeated interaction systems with particles in a
non-faithful state; see Section~\ref{sec: QRW} and~\cite{B3}.

The following result is the \emph{second fundamental formula} for quasifree
stochastic integrals, and should be compared with
Theorem~\ref{thm: 3.17}. The final term on the right-hand side is the
quasifree It\^o correction term.

\begin{thm}\label{thm: 3.37}
Let $X := ( X_0 + \Lambda^\Sigma( G )_t )_{t\ges 0}$ and
$Y := ( Y_0 + \Lambda^\Sigma( H )_t )_{t\ges 0}$,
where~$G = \smallQFintegrand{K}{Q}{R}$
and~$H = \smallQFintegrand{J}{S}{T}$ are $\Sigma$-integrand processes
and $X_0$, $Y_0 \in B(\init) \ot I_\Fock$. In the notation of
\tu{(\ref{eqn: V tilde})},
\begin{align*}
\ip{ X_t \uef }{ Y_t \veg } & = \ip{ X_0 \uef }{ Y_0 \veg } \\
 & \qquad + %
\int_0^t \Bigl\{ \Bigl\langle \wh{ \Sigma^* f}(s) \ot X_s\uef, %
\Hsbox\bigl( \wh{ \Sigma^* g}(s) \ot \veg \bigr) \Bigr\rangle \\
& \hspace{6em} + \Bigl\langle \Gsbox\bigl( %
\wh{ \Sigma^* f}(s) \ot \uef \big), %
\wh{ \Sigma^* g}(s) \ot Y_s \veg \Bigr\rangle \\
& \hspace{6em} + \Bigl\langle ( \Sigma \ot I_{\init\ot\Fock} ) %
\smallcol{ Q_s }{ R^{* \pc }_s } \uef, %
( \Sigma \ot I_{\init\ot\Fock} ) \smallcol{ S_s }{ T^{* \pc }_s } %
\veg \Bigr\rangle \Bigr\} \, \rd s
\end{align*}
for all $u$, $v \in \init$, $f$, $g\in\StepK$ and $t \in \Rplus$.
\end{thm}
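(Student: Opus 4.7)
The plan is to reduce directly to the standard second fundamental formula (Theorem~\ref{thm: 3.17}) by exploiting the factorisation $G^\Sigma = \wh{\Sigma}\, \Gbox\, \wh{\Sigma}^*$ from~(\ref{eqn: V tilde}), together with the analogous expression for $H^\Sigma$. Since, by Definition~\ref{defn: 3.32}, both $G^\Sigma$ and $H^\Sigma$ are $\Kil$-integrand processes (with $\Kil = \noise \op \kbar$), and since $\LambdaSigma(G) = \Lambda(G^\Sigma)$ by definition, I would apply Theorem~\ref{thm: 3.17} to $X = X_0 + \Lambda(G^\Sigma)$ and $Y = Y_0 + \Lambda(H^\Sigma)$, producing three integrands that then need only be matched with the three terms in the stated formula.

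For the first two integrands, the key observation is that $\wh{\Sigma}^*$, being block-diagonal with respect to the decomposition $\Khat = \Comp \op \Kil$, sends $\fhat(s)$ to $\wh{\Sigma^* f}(s)$, and similarly for $\ghat(s)$. Transferring the outer factor of $\wh{\Sigma}$ in $G^\Sigma_s$ across the inner product onto the opposing test vector therefore converts
\[
\bigl\langle G^\Sigma_s(\fhat(s) \ot \uef), \ghat(s) \ot Y_s\veg \bigr\rangle \ \ \text{into} \ \ \bigl\langle \Gsbox(\wh{\Sigma^* f}(s) \ot \uef), \wh{\Sigma^* g}(s) \ot Y_s\veg \bigr\rangle,
\]
and an analogous manoeuvre applied to $H^\Sigma_s$ treats the first integrand.

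For the It\^o correction term, the point is that $\Delta$ and $\wh{\Sigma}$ are both block-diagonal with respect to $\Comp \op \Kil$, and so commute; hence $\Delta H^\Sigma_s = \wh{\Sigma}\,\Delta\Hsbox\,\wh{\Sigma}^*$. Since $\Hsbox$ has zero $(1,1)$-block, $\Delta\Hsbox$ retains only $\col{S_s}{T^{* \pc}_s}$ in its bottom-left slot, and a short block-matrix calculation then yields that $\Delta H^\Sigma_s(\ghat(s) \ot \veg)$ has zero top component and bottom component $(\Sigma \ot I_{\init\ot\Fock})\col{S_s}{T^{* \pc}_s}\veg$. Pairing this with the bottom component $(\Sigma \ot I_{\init\ot\Fock})\col{Q_s}{R^{* \pc}_s}\uef$ of $G^\Sigma_s(\fhat(s) \ot \uef)$ produces precisely the quasifree It\^o correction in the stated formula.

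The main obstacle should be entirely algebraic bookkeeping: tracking the $2 \times 2$ block structure in the decomposition $\Khat \ot \hil = \hil \op (\Kil \ot \hil)$ when conjugating by $\wh{\Sigma}$, and verifying that the partial conjugates $Q^{\pc *}$, $R^{* \pc}$, $S^{\pc *}$, $T^{* \pc}$ emerge in the correct positions. No new analytic input beyond Theorem~\ref{thm: 3.17} and the partial-conjugation properties collected in Theorem~\ref{thm: 3.28} should be required; this is precisely the design virtue of the $\Sigma$-integrand formalism introduced in Definition~\ref{defn: 3.32}.
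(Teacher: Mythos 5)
Your proposal is correct and follows essentially the same route as the paper: apply the standard second fundamental formula (Theorem~\ref{thm: 3.17}) to $X_0+\Lambda(G^\Sigma)$ and $Y_0+\Lambda(H^\Sigma)$, move the block-diagonal factor $\wh{\Sigma}$ across the inner product in the first two integrands, and verify the identity $\bigl\langle G^\Sigma_s(\xhat\ot\uef),\Delta H^\Sigma_s(\yhat\ot\veg)\bigr\rangle=\bigl\langle(\Sigma\ot I_{\init\ot\Fock})\smallcol{Q_s}{R^{*\pc}_s}\uef,(\Sigma\ot I_{\init\ot\Fock})\smallcol{S_s}{T^{*\pc}_s}\veg\bigr\rangle$ for the It\^o correction, which is exactly the identity the paper records. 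Your block-matrix computation of $\Delta H^\Sigma_s$ is the correct elaboration of that step.
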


\begin{proof}
This follows immediately from
Theorem~\ref{thm: 3.17},
Definition~\ref{defn: 3.32}
and the identity
\[
\bigl\langle G^\Sigma_s ( \xhat \ot \uef ), %
\Delta H^\Sigma_s ( \yhat \ot \veg ) \bigr\rangle = %
\Bigl\langle ( \Sigma \ot I_{\init\ot\Fock} ) %
\smallcol{ Q_s }{ R^{* \pc }_s } \uef, %
( \Sigma \ot I_{\init\ot\Fock} ) %
\smallcol{ S_s }{ T^{* \pc }_s } \veg \Bigr\rangle,
\]
which holds for all $x$, $y \in \Kil$, $u$, $v \in \init$, $f$,
$g\in\StepK$ and $s \in \Rplus$.
\end{proof}

\begin{thm}\label{thm: 3.39}
Let $G \in B( \khat \ot \init )_0$. The following are equivalent.
\begin{rlist}
\item The operator $G$ has block-matrix form
$\smallQFintegrand{K}{Q}{-Q^*}$, where $Q$ is conjugatable and
\[
K + K^* + L^* L = 0 \qquad \text{for the operator } L := %
( \Sigma \ot I_\init ) \col{ Q }{-Q^{ \pc } }.
\]
\item
There is a unitary $\init$~process $U$ with noise dimension space
$\Kil = \noise \op \kbar$ such that
\begin{alist}
\item[(a)]
$G \cdot U := ( ( G \ot I_\Fock ) ( I_{\khat} \ot U_t ) )_{t \ges 0}$
is a $\Sigma$-integrand process, and
\item[(b)]
$U_t = I_{\init \ot \Fock} + \Lambda^\Sigma ( G \cdot U )_t$
for all $t \in \Rplus$.
\end{alist}
\end{rlist}
If either condition holds then $U$ is the unique $\init$~process
satisfying \tu{(a)} and \tu{(b)} of \tu{(ii)}.
\end{thm}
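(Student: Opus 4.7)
The plan is to reduce the theorem to Theorem~\ref{thm: 3.19}(a) and~(c), applied to the operator $G^\Sigma \in B(\Khat \ot \init)_0$ associated with $G$ by Definition~\ref{defn: 3.32} (taking $\Kil = \noise \op \kbar$), viewed as the stochastic generator of a Gaussian HP cocycle. For $G = \smallQFintegrand{K}{Q}{-Q^*}$ with $Q$ conjugatable, the conjugate-linearity of $\pc$ gives $(-Q^*)^{*\pc} = -Q^\pc$, and therefore Definition~\ref{defn: 3.32} yields $G^\Sigma = \smallmat{K}{L}{-L^*}{0}$ with $L := (\Sigma \ot I_\init)\col{Q}{-Q^\pc}$. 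Since $K + K^* + L^*L = 0$ is precisely the skew-adjointness of $K + \tfrac12 L^*L$, this shows that~(i) is equivalent to $G^\Sigma$ having the form in Theorem~\ref{thm: 3.19}(c)(i), that is, the Gaussian case $W = I_{\Kil \ot \init}$.

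For~(i)~$\Rightarrow$~(ii), Theorem~\ref{thm: 3.19}(a) applied to $F = G^\Sigma$ produces a unique HP cocycle $U$ satisfying $U_t = I_{\init \ot \Fock} + \Lambda(G^\Sigma \cdot U)_t$; it then remains to verify that $G \cdot U$ is a $\Sigma$-integrand process and that $(G \cdot U)^\Sigma = G^\Sigma \cdot U$, which together give $\Lambda^\Sigma(G \cdot U) = U - I_{\init \ot \Fock}$. Conjugatability of the relevant blocks of $G \cdot U$ follows from Theorem~\ref{thm: 3.28}(b)(ii) and~(c), and the block-wise identity $(G \cdot U)^\Sigma = G^\Sigma \cdot U$ hinges on the key relation
\[
\bigl( (I_\noise \ot U_s^*)(Q \ot I_\Fock) \bigr)^\pc = (Q^\pc \ot I_\Fock) U_s,
\]
obtained by combining Theorem~\ref{thm: 3.28}(b)(iii) with the compatibility of~$\pc$ with ampliation from~(c). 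For~(ii)~$\Rightarrow$~(i), write $G = \smallQFintegrand{K}{Q}{R}$. Evaluating the conjugatability clause of~(ii)(a) at $t = 0$ (where $U_0 = I$) and applying Theorem~\ref{thm: 3.28}(c) yields conjugatability of $Q$ and $R^*$; the same identity $(G \cdot U)^\Sigma = G^\Sigma \cdot U$ then recasts~(ii)(b) as $U_t = I + \Lambda(G^\Sigma \cdot U)_t$, so Theorem~\ref{thm: 3.19}(a) (with $W = I$ as dictated by $G^\Sigma \in B(\Khat \ot \init)_0$) forces $G^\Sigma = \smallmat{K}{L}{-L^*}{0}$ with $K + \tfrac12 L^*L$ skew-adjoint. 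Equating the~$(0,1)$-blocks $\row{R}{Q^{\pc *}}(\Sigma^* \ot I_\init) = -\row{Q^*}{R^{*\pc *}}(\Sigma^* \ot I_\init)$ and exploiting the invertible factor $\cosh A \cdot V \cosh P$ appearing in the~$(0,0)$-position of~$\Sigma$ via~(\ref{eqn: SUCP}) yields $R = -Q^*$, establishing~(i). Uniqueness of~$U$ is inherited directly from Theorem~\ref{thm: 3.19}(a).

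I expect the main obstacle to be the careful verification of the identity $(G \cdot U)^\Sigma = G^\Sigma \cdot U$, which requires several coordinated applications of Theorem~\ref{thm: 3.28} to track how partial conjugation interacts with ampliation by~$I_\Fock$ and with composition by~$U_s$ on either side. A secondary subtlety is the final deduction $R = -Q^*$ in the~(ii)~$\Rightarrow$~(i) direction, which relies on the explicit parameterisation of~$\Sigma$ in~(\ref{eqn: SUCP}) to extract invertible factors and perform the cancellation.
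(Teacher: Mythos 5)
Your strategy is the same as the paper's: both implications are routed through Theorem~\ref{thm: 3.19} applied to $F = G^\Sigma$, using the identity $(G \cdot U)^\Sigma = G^\Sigma \cdot U$ together with the stability properties of partial conjugation from Theorem~\ref{thm: 3.28}. The direction (i)$\Rightarrow$(ii), the verification of the identity $(G\cdot U)^\Sigma = G^\Sigma\cdot U$, the extraction of conjugatability of $Q$ and $R^*$, and the uniqueness claim are all handled as in the paper.

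There is, however, a genuine gap in the last step of (ii)$\Rightarrow$(i). Setting $X := Q + R^*$, the structure relation $F^* + F + F^*\Delta F = 0$ reduces (beyond skew-adjointness of $K + \tfrac{1}{2}L^*L$) to the coupled condition $( \Sigma \ot I_\init ) \smallcol{X}{X^{\pc}} = 0$, not to an equation in $X$ alone. In the parameterisation~(\ref{eqn: SUCP}) the $(0,1)$-entry of $\Sigma$ is $\cosh A \cdot V C \sinh P \cdot k^{-1}$, which is nonzero whenever the squeezing parameter $P \neq 0$; so the first component of the condition reads $( \cosh A \cdot V \cosh P \ot I_\init ) X + ( \cosh A \cdot V C \sinh P \cdot k^{-1} \ot I_\init ) X^{\pc} = 0$, and cancelling the invertible factor $\cosh A \cdot V \cosh P$ yields only the relation $X = -( \tanh P \cdot C k^{-1} \ot I_\init ) X^{\pc}$, not $X = 0$. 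To conclude, one must apply partial conjugation to this relation (Theorem~\ref{thm: 3.28}(b)(iv)) and substitute back, obtaining $X = ( \tanh^2 P \ot I_\init ) X$, and then invoke the invertibility of $I_\noise - \tanh^2 P = \cosh^{-2} P$ to force $X = 0$. Your shortcut via the $(0,0)$-entry alone is valid only in the gauge-invariant case $P = 0$, whereas the theorem is stated for a general AW~amplitude $\SigmaAB$.
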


\begin{proof}
Suppose that (i) holds
and set
\begin{equation*}
F = G^\Sigma := \mat{ I_\init }{ 0 }{ 0 }{ \Sigma \ot I_\init }
\begin{bmatrix}
K & -Q^* & Q^{ \pc * } \\
 Q & 0 & 0 \\
-Q^{ \pc } & 0 & 0
\end{bmatrix}
\mat{ I_\init }{ 0 }{ 0 }{ \Sigma \ot I_\init }^* = %
\mat{K}{L}{-L^*}{0}.
\end{equation*}
Then $F \in B( \Khat \ot \init )_0$
and $F^* + F + F^* \Delta F = 0$.
Appealing to Theorem~\ref{thm: 3.19} and Definition~\ref{defn: 2.10},
there exists a unitary process $U := Y^F$.
Since $( G \cdot U )^\Sigma = F \cdot U$, so $G \cdot U$ is a
$\Sigma$-integrand process and
$\LambdaSigma( G \cdot U )_t = \Lambda( F \cdot U )_t = %
U_t - I_{\init \ot \Fock}$
for all $t \in \Rplus$, hence (ii) holds.

Conversely, suppose that (ii) holds for a unitary $\init$~process $U$,
and let $\smallgaussintegrand{K}{Q}{R}$ be the block-matrix form of
$G$. Theorem~\ref{thm: 3.28} implies that the operators
$Q$ and $R^*$ are conjugatable, and
\begin{equation}\label{eqn: WUF} 
( G \cdot U )^\Sigma = F \cdot U, \qquad \text{where } %
F = G^\Sigma := \mat{ I_\init }{ 0 }{ 0 }{ \Sigma \ot I_\init } %
\begin{bmatrix}
K & R & Q^{ \pc * } \\
Q & 0 & 0  \\
R^{ * \pc } & 0 & 0
\end{bmatrix}
\mat{ I_\init }{ 0 }{ 0 }{ \Sigma \ot I_\init }^*.
\end{equation}
Assumption (b) gives that
$U_t = I_{\init \ot \Fock} + \Lambda( F \cdot U )_t$ for all
$t \in \Rplus$, and so, by Theorem~\ref{thm: 3.19}, it holds that
$F^* + F + F^* \Delta F = 0$ and $U = Y^F$. In particular, the
uniqueness claim is established. The condition
$F^* + F + F^* \Delta F = 0$ is equivalent to
\begin{alist}
\item
$\row{ R }{ Q^{ \pc * } } ( \Sigma^* \ot I_{\init} ) = %
-\Bigl( ( \Sigma \ot I_{\init} ) %
\smallcol{ Q }{ R^{ * \pc } } \Bigr)^*$ and
\item
$0 = K^* + K +  L^* L$,
where $L = ( \Sigma \ot I_{\init} ) \smallcol{ Q }{ R^{ * \pc } }$,
\end{alist}
so it remains to prove that $X := Q + R^* = 0$. Note that (a) is
equivalent to
$( \Sigma \ot I_{\init} ) \smallcol{ X}{ X^{ \pc } } = 0$ and, in
terms of the parameterisation $\SigmaAVCP$ of the AW~amplitude
$\Sigma$ given in~(\ref{eqn: SUCP}) and the notation $k$ for the
conjugation map from $\noise$ to $\ol{\noise}$,
this is equivalent
to
\begin{equation}
\label{eqn: XXc} 
\mat{ \cosh A \cdot V \cosh P \ot I_\init }%
{k \sinh A \cdot V C \sinh P \ot I_\init }%
{\cosh A \cdot V \sinh P \cdot C k^{-1} \ot I_\init }%
{\ol{ \sinh A \cdot V \cosh P } \ot I_\init } %
\col{ X}{ X^{  \pc }} = 0.
\end{equation}
It follows from (\ref{eqn: XXc}) that
$X = - ( \tanh P \cdot C k^{-1} \ot I_\init ) X^{ \pc}$,
and so,
by Theorem~\ref{thm: 3.28} and the fact that $C$ commutes with $P$
and $C^2 = I_\noise$,
\begin{align*}
X & = ( \tanh P \cdot C k^{-1} \ot I_\init ) %
\bigl( ( \tanh P \cdot C k^{-1} \ot I_\init ) %
X^{ \pc } \bigr)^{\pc} \\
 & = ( \tanh P \cdot C k^{-1} \ot I_\init ) %
( k \tanh P \cdot C \ot I_\init ) X \\
 & = ( \tanh^2 P  \ot I_\init ) X,
\end{align*}
thus
$0 = \bigl( ( I_\noise - \tanh^2 P )  \ot I_\init \big) X = %
( \cosh^2 P   \ot I_\init )^{-1} X$ and so $X=0$.
\end{proof}

\begin{rem}
From the preceding proof, we see that the unique unitary
$\init$~process $U$ determined by an operator
$G \in B(\khat \ot \init )_0$ satisfying Theorem~\ref{thm: 3.39}(i)
equals $\YF$, where $F = G^\Sigma$ as defined in~(\ref{eqn: WUF}). In
particular, $U$ is an HP~cocycle.  Cocycle aspects of quasifree
processes are further investigated in~\cite{LM2}.
\end{rem}

\begin{defn}
\label{defn: QFcocycle}
An HP~cocycle $U$ on $\init$ with noise dimension space
$\noise \op \kbar$ is \emph{$\Sigma$-quasifree}
and has~\emph{$\Sigma$-generator $G$} if
$U = \YF$ for $F = G^\Sigma$,
in which
$G \in B( \khat \ot \init )_0$ has the block-matrix
form~$\smallQFintegrand{K}{Q}{-Q^*}$, where $Q$ is $\noise$-conjugatable
and
\begin{equation}\label{eqn: qf struct} 
K + K^* + L^* L = 0 \qquad \text{for the operator }
L := ( \Sigma \ot I_\init ) \col{ Q }{-Q^{ \pc } }.
\end{equation}
\end{defn}

\begin{rem}
Thus $\Sigma$-quasifree HP~cocycles form a subclass of the collection
of Gaussian HP~cocycles with noise dimension space $\Kil$ having a
decomposition $\noise \op \kbar$.
\end{rem}

\begin{example}
\label{examp: qf pure noise}
[Pure-noise cocycles]
For a gauge-invariant AW~amplitude $\Sigma = \SigmaA$, the
quasifree pure-noise cocycles are of the form
$\big( e^{\ri \alpha t} \WeylSigma( x \indf{[0,t)} ) \big)_{t \ges 0}$
for some $x \in \noise$ and $\alpha \in \Real$, with corresponding
$\Sigma$-generator
$\left[\begin{smallmatrix}
\ri \alpha - \half \norm{ \sqrt{ \cosh 2A }\, x}^2 & & -\bra{x} \\
 \ket{ x }  & &  0
\end{smallmatrix}\right]$.
\end{example}

\begin{cor}
\label{cor: 3.11}
Let $U$ be a Gaussian HP~cocycle on $\init$ with noise dimension space
$\noise \op \kbar$ and stochastic generator
$\smallgaussintegrand{K}{L}{-L^*}$, let $\smallcol{L_1}{L_2}$ be the
block matrix form of $L$, and suppose that the AW-amplitude
is gauge-invariant, say $\Sigma = \SigmaA$.  Then the following are
equivalent.
\begin{rlist}
\item The cocycle $U$ is a $\Sigma$-quasifree HP~cocycle.
\item The operator $L$ equals $( \Sigma \ot I_\init ) \smallcol{Q}{-Q^\pc}$
for a $\noise$-conjugatable operator
$Q \in B(\init; \noise \ot \init)$.
\item The operator $L_1$ is $\noise$-conjugatable and
$L_2 = - ( \ol{ \tanh A } \ot I_\init ) L_1^\pc$.
\item The operator $L_2$ is $\kbar$-conjugatable and
$L_2^\pc = - ( \tanh A  \ot I_\init ) L_1$.
\end{rlist}
When these hold, the cocycle $U$ has $\Sigma$-generator
$\smallmat{K}{Q}{-Q^*}{0}$
and
\begin{equation}\label{eqn: zLxQ} 
\bigl( \bra{ x } \ot I_\init \bigr) Q - %
Q^* \bigl( \ket{ x } \ot I_\init \bigr) = %
\bigl( \bra{ \Sigma \iota(x) } \ot I_\init \bigr) L - %
L^* \bigl( \ket{ \Sigma \iota(x) } \ot I_\init \bigr) \qquad %
\text{for all } x \in \noise.
\end{equation}
\end{cor}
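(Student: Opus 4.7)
The plan is to establish the chain (i) $\Leftrightarrow$ (ii) $\Leftrightarrow$ (iii) $\Leftrightarrow$ (iv) by exploiting the diagonal structure of $\SigmaA = \smallmat{\cosh A}{0}{0}{\ol{\sinh A}}$, the invertibility of $\cosh A \ges I_\noise$, and the algebraic properties of partial conjugation gathered in Theorem~\ref{thm: 3.28}; then to deduce~\eqref{eqn: zLxQ} by a direct computation that collapses via the fundamental identity $\cosh^2 A - \sinh^2 A = I_\noise$. The equivalence (i) $\Leftrightarrow$ (ii) is essentially an unpacking of Definition~\ref{defn: QFcocycle}: writing the $\Sigma$-generator as $G = \smallQFintegrand{K}{Q}{-Q^*}$, its Gaussian image $G^\Sigma$ has bottom-left entry $( \Sigma \ot I_\init ) \smallcol{Q}{(-Q^*)^{*\pc}}$; since partial conjugation is conjugate-linear and fixes real scalars, $(-Q^*)^{*\pc} = (-Q)^\pc = -Q^\pc$, yielding the form in~(ii). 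Conversely, given~(ii) one sets $G := \smallQFintegrand{K}{Q}{-Q^*}$ and observes that the Gaussian structure relation $K + K^* + L^*L = 0$ from Theorem~\ref{thm: 3.19}(c) coincides with the quasifree structure~\eqref{eqn: qf struct}, simultaneously identifying the $\Sigma$-generator as $\smallmat{K}{Q}{-Q^*}{0}$.

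For (ii) $\Leftrightarrow$ (iii), the block-diagonal form of $\SigmaA$ gives $L_1 = ( \cosh A \ot I_\init ) Q$ and $L_2 = -( \ol{\sinh A} \ot I_\init ) Q^\pc$. As $\cosh A$ is invertible, Theorem~\ref{thm: 3.28}(b)(iv) shows that $L_1$ is $\noise$-conjugatable iff $Q$ is, with $Q = ( \cosh^{-1} A \ot I_\init ) L_1$ and $Q^\pc = ( \ol{\cosh^{-1} A} \ot I_\init ) L_1^\pc$; substituting into the expression for $L_2$ yields $L_2 = -( \ol{\tanh A} \ot I_\init ) L_1^\pc$, and this direction reverses by the same formulae. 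The equivalence (iii) $\Leftrightarrow$ (iv) then follows by applying partial conjugation to the relation in (iii): Theorem~\ref{thm: 3.28}(b)(iv) together with involutivity $Y^{\pc\pc} = Y$ gives $L_2^\pc = -( \tanh A \ot I_\init ) L_1$, and conjugatability is preserved in both directions since Theorem~\ref{thm: 3.28}(a) is bidirectional.

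For the identity~\eqref{eqn: zLxQ}, compute $\Sigma \iota(x) = \smallcol{\cosh A \cdot x}{-\ol{\sinh A \cdot x}}$ and split
\[
( \bra{ \Sigma \iota(x) } \ot I_\init ) L = ( \bra{ \cosh A \cdot x } \ot I_\init ) L_1 - ( \bra{ \ol{\sinh A \cdot x} } \ot I_\init ) L_2.
\]
Substitute the formulae for $L_1$ and $L_2$ from~(iii), absorb $\cosh A$ and $\ol{\sinh A}$ into the bras by $( \bra{y} \ot I_\init )( T \ot I_\init ) = \bra{ T^* y } \ot I_\init$, and use the defining relation $( \bra{ \ybar } \ot I_\init ) Q^\pc = Q^* ( \ket{ y } \ot I_\init )$ from Theorem~\ref{thm: 3.28}(a) to get $( \bra{ \cosh^2 A \cdot x } \ot I_\init ) Q + Q^* ( \ket{ \sinh^2 A \cdot x } \ot I_\init )$. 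Taking the adjoint handles the mirror term $L^* ( \ket{ \Sigma \iota(x) } \ot I_\init )$, and the resulting four terms collapse via $\cosh^2 A - \sinh^2 A = I_\noise$ to the right-hand side of~\eqref{eqn: zLxQ}.

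The main obstacle is purely notational: one must track carefully that partial conjugation is conjugate-linear (so $(\lambda Y)^\pc = \ol{\lambda}\, Y^\pc$, whence $(-Q)^\pc = -Q^\pc$ but $(\ri Q)^\pc = -\ri Q^\pc$), respect the distinction between $\noise$-conjugatability of $L_1$ and $\kbar$-conjugatability of $L_2$, and be careful with the anti-linearity of $\bra{\cdot}$ when replacing $\bra{-v}$ by $-\bra{v}$. With this bookkeeping in place, each step reduces to straightforward block-matrix algebra using only the diagonal form of $\SigmaA$ and the calculus provided by Theorem~\ref{thm: 3.28}.
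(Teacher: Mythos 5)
Your proposal is correct and follows the paper's proof essentially step for step: the paper likewise obtains (i) $\Leftrightarrow$ (ii) from Theorem~\ref{thm: 3.39} and Definition~\ref{defn: QFcocycle}, gets (ii) $\Leftrightarrow$ (iii) $\Leftrightarrow$ (iv) from the partial-conjugation calculus of Theorem~\ref{thm: 3.28} via $L_1 = (\cosh A \ot I_\init)Q$ and $L_2 = -(\ol{\sinh A}\ot I_\init)Q^\pc$, and derives \eqref{eqn: zLxQ} from exactly your identity $(\bra{\Sigma\iota(x)}\ot I_\init)L = (\bra{x}\cosh^2 A\ot I_\init)Q + Q^*(\sinh^2 A\ket{x}\ot I_\init)$ together with $\cosh^2 A - \sinh^2 A = I_\noise$. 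The one thin spot is the direction (iv) $\Rightarrow$ (iii): your appeal to ``bidirectionality'' only yields conjugatability of $(\tanh A\ot I_\init)L_1 = -L_2^\pc$, not of $L_1$ itself when $\tanh A$ fails to be bounded below -- but the paper's one-line citation of Theorem~\ref{thm: 3.28} elides the identical point, so this is not a defect of your argument relative to the original.
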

\begin{proof}
By Theorem~\ref{thm: 3.39} and Definition~\ref{defn: QFcocycle},
(i) is equivalent to (ii), and these imply that $U$ has
$\Sigma$-generator $\smallmat{K}{Q}{-Q^*}{0}$. Properties of the
partial conjugation, Theorem~\ref{thm: 3.28}, now imply that (ii) is
equivalent to (iii); they also imply that (iii) is equivalent to (iv).
When these conditions hold, since
\[
\bigl( \bra{ \Sigma \iota(x) } \ot I_\init \bigr) L = %
\bigl( \bra{ x } \cosh^2 A \ot I_\init \bigr) Q + %
Q^* \bigl( \sinh^2 A\ket{ x } \ot I_\init \bigr)
\qquad \text{for all } x \in \noise,
\]
the identity (\ref{eqn: zLxQ}) follows from the fact that
$\cosh^2 A - \sinh^2 A = I_\noise$.
\end{proof}

\begin{thm}\label{thm: innerQF}
Let $U$ be a $\Sigma$-quasifree HP~cocycle with $\Sigma$-generator
$\smallQFintegrand{K}{Q}{-Q^*} \in B( \khat \ot \init)_0$, and let
$j$ be the corresponding inner EH~flow.
Set $L:= ( \Sigma \ot I_{\init} ) \smallcol{Q}{ -Q^\pc }$
and $H := \tfrac{1}{2 \ri} ( K - K^*)$,
and define the map
\[
\psi: B(\init) \to B(\khat \ot \init); \ a \mapsto %
\mat{-\ri [H,a] - \half \{ L^*L, a \} + L^* (I_\noise \ot a ) L}%
{( I_\noise \ot a ) Q - Qa}{Q^*( I_\noise \ot a ) - aQ^*}{0}.
\]
Then $\bigl( (\jnoise_t \circ \psi)(a) \bigr)_{t\ges 0}$ is a
$\Sigma$-integrand process for all $a \in B( \init )$, where
$\jnoise_t := \id_{B(\khat)} \uwot j_t$, and
\[
j_t(a) = a \ot I_\Fock + %
\Lambda^\Sigma \bigl( ( \jnoise \circ \psi )(a) \bigr)_t %
\qquad \text{ for all } a \in B(\init) \text{ and } t \in \Rplus.
\]
\end{thm}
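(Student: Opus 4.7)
The plan is to apply Theorem~\ref{thm: innerj} to $U$ (which is a Gaussian HP~cocycle, by Definition~\ref{defn: QFcocycle} combined with Theorem~\ref{thm: 3.39}), and then to recognise its standard stochastic generator~$\theta$ as the $\Sigma$-induced data $\psi(\cdot)^\Sigma$ in the sense of~(\ref{eqn: V tilde}). Since $U$ has stochastic generator $(H, L, I_{\Kil\ot\init})$ with $\Kil = \noise \op \kbar$ and $L = (\Sigma \ot I_\init)\col{Q}{-Q^\pc}$, the map~$\theta$ from~(\ref{eqn: theta}) simplifies (as $W = I$) to
\[
\theta(a) = \mat{-\ri[H,a] - \half\{L^*L, a\} + L^*(I_\Kil \ot a)L}{(I_\Kil \ot a)L - La}{L^*(I_\Kil \ot a) - aL^*}{0},
\]
and Theorem~\ref{thm: innerj} gives $j_t(a) = a \ot I_\Fock + \Lambda((\jKil \circ \theta)(a))_t$, where the identity on the top-left entry $I_\Kil \ot a$ is the ampliation required to match the codomain of $L$ with the corresponding interpretation of $I_\noise \ot a$ appearing in~$\psi(a)$.

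The main calculation is to verify that $\psi(a)^\Sigma = \theta(a)$. Writing $\psi(a) = \smallQFintegrand{K_a}{Q_a}{R_a}$ with $Q_a := (I \ot a)Q - Qa$ and $R_a := Q^*(I \ot a) - aQ^*$, Theorem~\ref{thm: 3.28}(b) shows that $Q_a$ and $R_a^*$ are $\noise$-conjugatable (so $\psi(a)$ meets Definition~\ref{defn: 3.32}); moreover, parts (ii) and (iii) of Theorem~\ref{thm: 3.28}(b) yield
\[
R_a^{*\pc} = ((I \ot a^*)Q)^\pc - (Qa^*)^\pc = Q^\pc a - (I \ot a) Q^\pc.
\]
Using this and the commutation $(I_\Kil \ot a)(\Sigma \ot I) = (\Sigma \ot I)(I_\Kil \ot a)$, a direct computation gives
\[
(\Sigma \ot I) \col{Q_a}{R_a^{*\pc}} = (I_\Kil \ot a)(\Sigma \ot I) \col{Q}{-Q^\pc} - (\Sigma \ot I)\col{Q}{-Q^\pc} a = (I_\Kil \ot a) L - L a,
\]
matching the ${(1,0)}$ block of~$\theta(a)$. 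Taking adjoints (and using $(Q_a)^{\pc *}$ in an analogous way) matches the ${(0,1)}$ block. The ${(0,0)}$ blocks agree by inspection, completing the identification $\psi(a)^\Sigma = \theta(a)$.

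It then remains to promote this pointwise identity on $B(\Khat \ot \init)$ to an identity of processes after ampliation: since the $\Sigma$-construction $G \mapsto G^\Sigma$ of~(\ref{eqn: V tilde}) is built from matrix-block operations and partial conjugation, both of which commute with the normal $*$-homomorphism $j_t$ (applied blockwise), one has $(\jnoise \circ \psi(a))^\Sigma = \jKil \circ \psi(a)^\Sigma = \jKil \circ \theta(a)$. Hence $\jnoise \circ \psi(a)$ is a $\Sigma$-integrand process, and
\[
\Lambda^\Sigma(\jnoise \circ \psi(a))_t = \Lambda((\jnoise \circ \psi(a))^\Sigma)_t = \Lambda((\jKil \circ \theta)(a))_t = j_t(a) - a \ot I_\Fock,
\]
as required.

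The main obstacle I expect is the compatibility of partial conjugation with the ampliation $\jnoise$: one must check that the defining identity (\ref{eqn: 3.3}) for $\pc$ is preserved when each operator entry of $\psi(a)$ is replaced by its image under~$j_t$, which follows from the normality and $*$-homomorphism property of~$j_t$ acting only on the $\init$ factor while $\pc$ interchanges the $\noise$ and $\kbar$ factors with~$\init$. A secondary subtlety is the consistent interpretation of ``$I_\noise \ot a$'' in the ${(0,0)}$ block of $\psi(a)$ as the ampliation to the codomain of $L$ (namely $\Kil \ot \init = (\noise \op \kbar) \ot \init$), which is what makes $\psi(a)^\Sigma$ and $\theta(a)$ coincide on the diagonal block.
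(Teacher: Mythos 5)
Your proposal is correct and takes essentially the same route as the paper: the published proof consists precisely of the identity $\bigl( ( \jnoise \circ \psi )(a) \bigr)^\Sigma_s = ( \jKil_s \circ \theta )(a)$ (asserted there as ``straightforward to verify'') followed by an appeal to Theorem~\ref{thm: innerj}. Your decomposition of that verification into the operator identity $\psi(a)^\Sigma = \theta(a)$ together with the compatibility of partial conjugation and the $\Sigma$-sandwiching with the ampliated inner flow (which indeed follows from Theorem~\ref{thm: 3.28}(b) and $j_t(\cdot) = U_t^*(\cdot \ot I_\Fock)U_t$) simply supplies the details the paper omits.
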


\begin{proof}
It is straightforward to verify that
\[
\big( ( \jnoise \circ \psi )(a) \big)^\Sigma_s = %
( \jKil_s \circ \theta )(a) \qquad %
\text{ for all } a \in B(\init) \text{ and } s \in \Rplus,
\]
where $\jKil_s := \id_{B(\Khat)} \uwot j_s$ for
$\Kil = \noise \op \kbar$, and $\theta$ is the map from $B(\init)$ to
$B( \Khat \ot \init)$ defined in~(\ref{eqn: theta}). It therefore
follows from Theorem~\ref{thm: innerj} that
\[
j_t(a) - a \ot I_\Fock =
\Lambda \big( ( \jKil \circ \theta )(a) \big)_t =
\Lambda^\Sigma \big( ( \jnoise \circ \psi )(a) \big)_t
\ \text{ for all $a \in B(\init)$ and $t\in\Rplus$},
\]
as claimed.
\end{proof}

\section{Uniqueness questions}\label{sec: uniqueness}

In this section, issues of uniqueness are considered. We begin with
the question of uniqueness of AW~amplitudes for quasifree HP~cocycles.
Given an HP~cocycle $U$ with noise dimension space~$\Kil$ and stochastic
generator $F = \smallmat{K}{L}{-L^*W }{W - I_{\Kil \ot \init} }$, we
examine the class of pairs~$( \Sigma, Q )$ such that
\[
\Sigma \text{ is an AW~amplitude}, \quad
Q \text{ is a $\noise$-conjugatable operator and } \quad
( \Sigma \ot I_\init ) \col{ Q }{ -Q^\pc } = L,
\]
so that
$G := \smallQFintegrand{K}{Q}{ -Q^* }$ is a $\Sigma$-quasifree
generator and $F = G^\Sigma$. Immediate necessary conditions for this
class to be non-empty are that the HP~cocycle~$U$ is Gaussian,
thus~$F \in B( \Khat \ot \init )_0$,
so~$W = I_{\Kil \ot \init}$,
and $\Kil$ has a decomposition
$\kil \op \kbar$, so $\Kil$ must not have finite odd dimension.

We also consider the uniqueness of quasifree HP~cocycles implementing
a given EH~flow $j$ and relate this to the minimality of $j$ as a
stochastic dilation of its expectation semigroup.

For the remainder of this section, we fix a quasifree noise dimension
space $\noise$, and set $\Kil = \noise \op \kbar$.
Theorem~\ref{thm: 4.1} has the following consequence.

\begin{cor}\label{cor: SigmaM}
Let $\Sigmatilde = \Sigma \, M$, where $\Sigma$ and $M$ are an
AW~amplitude and squeezing matrix for~$\noise$, respectively. Then
every $\Sigma$-quasifree HP~cocycle is also $\Sigmatilde$-quasifree.
\end{cor}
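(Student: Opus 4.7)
The plan is to produce, for any $\Sigma$-quasifree HP~cocycle $U = Y^F$ with $\Sigma$-generator $G = \smallQFintegrand{K}{Q}{-Q^*}$, a $\Sigmatilde$-quasifree generator $\Gtilde = \smallQFintegrand{K}{\Qtilde}{-\Qtilde^*}$ satisfying $\Gtilde^{\Sigmatilde} = F = G^\Sigma$. Since $Y^F$ is uniquely determined by $F$, this will immediately imply that $U$ is $\Sigmatilde$-quasifree. Writing $M = M_B$ for some $B \in S(\noise)^\times$ (so $M^{-1} = M_{B^{-1}}$ by Proposition~\ref{propn: MiotaB}(b)), the natural candidate is to set $\smallcol{\Qtilde}{-\Qtilde^\pc} := (M^{-1} \ot I_\init)\smallcol{Q}{-Q^\pc}$, so that $(\Sigmatilde \ot I_\init)\smallcol{\Qtilde}{-\Qtilde^\pc} = (\Sigma \ot I_\init)\smallcol{Q}{-Q^\pc} = L$ by the cancellation $\Sigmatilde M^{-1} = \Sigma$.

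For this definition to be legitimate, I first need to establish an operator-level analogue of the intertwining $M_{B'} \circ \iota = \iota \circ B'$ from Proposition~\ref{propn: MiotaB}(a): for any $X \in \Bchil(\init; \noise \ot \init)$ and $B' \in S(\noise)^\times$, the $\noise$-component $X'$ of $(M_{B'} \ot I_\init)\smallcol{X}{-X^\pc}$ lies in $\Bchil(\init; \noise \ot \init)$, and the $\kbar$-component equals $-(X')^\pc$. Writing $M_{B'} = \smallmat{L'}{-kA'}{-A'k^{-1}}{\ol{L'}}$ in terms of the linear and conjugate-linear parts $L'$ and $A'$ of $B'$, one reads off $X' = (L' \ot I_\init) X + (A'k^{-1} \ot I_\init) X^\pc$, which is $\noise$-conjugatable by Theorem~\ref{thm: 3.28}(b). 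Verifying that $(X')^\pc$ equals the negative of the $\kbar$-component reduces to the identity
\[
[(A' k^{-1} \ot I_\init) X^\pc]^\pc = (k A' \ot I_\init) X,
\]
an extension of Theorem~\ref{thm: 3.28}(b)(iv) to the complex-linear map $A'k^{-1}: \kbar \to \noise$, derivable directly from the defining property~(\ref{eqn: 3.3}) of partial conjugation together with the involutivity $X^{\pc \pc} = X$.

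With the doubling-preservation lemma in hand, the resulting $\Gtilde := \smallQFintegrand{K}{\Qtilde}{-\Qtilde^*}$ satisfies $\Gtilde^{\Sigmatilde} = \smallmat{K}{L}{-L^*}{0} = F$ by construction, and the quasifree structural condition $K + K^* + \Ltilde^*\Ltilde = 0$ of Definition~\ref{defn: QFcocycle} is inherited directly from the analogous condition on $G$, since $\Ltilde = L$. The hard part will be the doubling lemma itself, specifically the extension of the partial-conjugation identity above to complex-linear maps $\kbar \to \noise$; once this short derivation from~(\ref{eqn: 3.3}) is carried out, the remaining verifications reduce to algebraic cancellations.
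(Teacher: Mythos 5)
Your proposal is correct and follows essentially the same route as the paper: the paper deduces the corollary from Theorem~\ref{thm: 4.1}, whose proof is exactly your computation (transform the doubled column $\smallcol{Q}{-Q^\pc}$ by $M^{-1}\ot I$, using the partial-conjugation calculus of Theorem~\ref{thm: 3.28} to see that the result is again of the form $\smallcol{\Qtilde}{-\Qtilde^\pc}$), specialised to the constant generator with $R=-Q^*$. The ``extension of Theorem~\ref{thm: 3.28}(b)(iv)'' you flag is indeed the only technical point, and it is the same identity the paper invokes implicitly in the proof of Theorem~\ref{thm: 4.1}.
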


In light of the above corollary, we restrict to gauge-invariant
AW~amplitudes for the rest of this section.  For an operator
$X \in B(\init; \noise \ot \init)$, let the
\emph{$\noise$-degeneracy space} of $X$ be
\begin{equation}\label{eqn: kX}
\noise^X := %
\bigl\{ x \in \noise : ( \bra{x} \ot I_\init ) X = 0 \bigr\}.
\end{equation}

\begin{propn}\label{cor: 4.3}
Let $\Sigma=\SigmaA$ be a gauge-invariant AW~amplitude for $\noise$,
and suppose $U$ is a~$\Sigma$-quasifree HP~cocycle with stochastic
generator $\smallgaussintegrand{K}{L}{-L^*}$ and $\Sigma$-generator
$\smallQFintegrand{K}{Q}{-Q^*}$, where~$L$ has block-matrix form
$\smallcol{L_1}{L_2}$. Then
\begin{equation}\label{eqn: kLoneQ}
\noise^{L_1} = \{ 0 \} \iff \noise^Q  = \{ 0 \}.
\end{equation}
Furthermore, if $\Sigmatilde = \Sigma_{\Atilde}$ is another
gauge-invariant AW~amplitude for~$\noise$, then the following are
equivalent.
\begin{rlist}
\item
The cocycle $U$ is also $\Sigmatilde$-quasifree.
\item
$\big( ( \tanh \Atilde - \tanh A ) \ot I_\init \big) L_1 = 0$.
\end{rlist}
\end{propn}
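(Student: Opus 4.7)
The plan is to reduce both assertions to direct consequences of Corollary~\ref{cor: 3.11}, which relates the block components of the stochastic generator $L = \smallcol{L_1}{L_2}$ to the $\Sigma$-generator component $Q$ via $\Sigma = \SigmaA$.

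For the equivalence (\ref{eqn: kLoneQ}), I would use the identity $L_1 = (\cosh A \ot I_\init) Q$, obtained from the block-matrix form of $\SigmaA$ applied to $\smallcol{Q}{-Q^\pc}$. Since $\cosh A$ is self-adjoint, for any $x \in \noise$,
\[
(\bra{x} \ot I_\init) L_1 = (\bra{(\cosh A) x} \ot I_\init) Q,
\]
so $\noise^{L_1} = (\cosh A)^{-1}(\noise^Q)$. Because $\cosh A \ges I_\noise$ is invertible, this identifies the two degeneracy spaces up to a bijection, yielding the equivalence.

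For the second equivalence, the key tool is part (iii) of Corollary~\ref{cor: 3.11}, which states that $U$ is $\Sigma$-quasifree precisely when $L_1$ is $\noise$-conjugatable and $L_2 = -(\ol{\tanh A} \ot I_\init) L_1^\pc$; the same characterisation with $\Atilde$ in place of $A$ holds for $\Sigmatilde$-quasifreeness (noting that $\noise$-conjugatability of $L_1$ is a property of the cocycle, independent of the AW-amplitude). Thus (i) holds if and only if
\[
\bigl( (\ol{\tanh \Atilde} - \ol{\tanh A}) \ot I_\init \bigr) L_1^\pc = 0.
\]
To pass between this condition and condition (ii), I would apply partial conjugation, using Theorem~\ref{thm: 3.28}(b)(iv): for any self-adjoint $Z \in B(\noise)$, the image of $(\ol{Z} \ot I_\init) L_1^\pc$ under partial conjugation equals $(Z \ot I_\init) L_1^{\pc \pc} = (Z \ot I_\init) L_1$. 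Taking $Z = \tanh \Atilde - \tanh A$, and using that partial conjugation is a conjugate-linear isomorphism and hence injective, yields the equivalence of the displayed equation with condition (ii).

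The proof is essentially bookkeeping, but the most delicate point is the correct handling of the partial conjugation on $L_1^\pc$, in particular making sure that the self-adjoint operators $\tanh A, \tanh\Atilde \in B(\noise)$ interact correctly with conjugation from $\kbar$ back to $\noise$; the identity $\ol{\ol{Z}} = Z$ together with Theorem~\ref{thm: 3.28}(b)(iv) resolves this without difficulty.
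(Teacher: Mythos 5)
Your proposal is correct and follows essentially the same route as the paper: both parts reduce to Corollary~\ref{cor: 3.11}, with the first equivalence coming from $L_1 = (\cosh A \ot I_\init)Q$ and the invertibility of $\cosh A$. The only (cosmetic) difference is that the paper invokes characterisation (iv) of Corollary~\ref{cor: 3.11}, namely $L_2^\pc = -(\tanh A \ot I_\init)L_1$, which yields condition (ii) directly, whereas you use characterisation (iii) and then pass through an extra partial-conjugation step via Theorem~\ref{thm: 3.28} -- a step you handle correctly.
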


\begin{proof}
Corollary~\ref{cor: 3.11} implies that $L_2$ is $\kbar$-conjugatable
and $Q$ is $\noise$-conjugatable, with
\begin{equation}\label{eqn: Ltwoc}
L_1 = ( \cosh A \ot I_\init ) Q \qquad \text{and} \qquad %
L_2^\pc = -( \tanh A \ot I_\init ) L_1.
\end{equation}
Thus~(\ref{eqn: kLoneQ}) follows from the invertibility of~$\cosh A$.
Corollary~\ref{cor: 3.11} also implies that (i) holds if and only if
$L_2^\pc = - ( \tanh \Atilde \ot I_\init ) L_1$. Therefore (i) and
(ii) are equivalent, by~(\ref{eqn: Ltwoc}).
\end{proof}

For an HP~cocycle $U$ with noise dimension space $\noise \op \kbar$,
let
\[
\Xi(U) := \bigl\{ \Sigma \in AW_0(\noise): %
U \text{ is $\Sigma$-quasifree} \bigr\}
\]
be the set of gauge-invariant AW~amplitudes for $\noise$ for which $U$
is $\Sigma$-quasifree.

\begin{cor}
\label{cor: XiAQ}
Let $U$ be an HP~cocycle with stochastic generator
$\smallmat{K}{L}{-L^*}{0}$. If $U$ is quasifree with respect to a
gauge-invariant AW~amplitude $\SigmaA$ then
\begin{align*}
\Xi ( U ) & = \bigl\{ \Sigma_{\Atilde} : %
\Atilde \in B(\noise)_+ \text{ and } %
\Ran ( \tanh \Atilde - \tanh A ) \subseteq \noise^{L_1} \bigr\} \\
& = \bigl\{ \Sigma_{\tanh^{-1} ( X + \tanh A ) } : %
X \in B(\noise)_{\sa}, \ \spec ( X + \tanh A ) \subseteq [0,1) %
\text{ and } \Ran X \subseteq \noise^{L_1} \bigr\}.
\end{align*}
In particular, if $\noise^{L_1} = \{ 0 \}$ then $U$ is quasifree with
respect to at most one gauge-invariant AW~amplitude.
\end{cor}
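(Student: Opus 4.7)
The plan is to apply Proposition~\ref{cor: 4.3} and then reformulate and reparameterise the resulting condition. Since $U$ is $\Sigma_A$-quasifree by hypothesis, Proposition~\ref{cor: 4.3} states that $\Sigma_{\Atilde}$ lies in $\Xi(U)$ if and only if
\[
\bigl( ( \tanh \Atilde - \tanh A ) \ot I_\init \bigr) L_1 = 0.
\]
The first step is to translate this operator identity into the range-inclusion condition appearing in the statement.

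To that end, set $T := \tanh \Atilde - \tanh A$; this is self-adjoint by functional calculus. For any $y \in \noise$, the identity $(\bra{y} \ot I_\init )(T \ot I_\init ) L_1 = (\bra{ T y } \ot I_\init ) L_1$ shows that $(T \ot I_\init ) L_1 = 0$ is equivalent to $T y \in \noise^{L_1}$ for every $y \in \noise$, which by definition of $\noise^{L_1}$ is precisely the inclusion $\Ran T \subseteq \noise^{L_1}$. This yields the first equality for $\Xi(U)$.

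For the second equality, I would change variables by setting $X := \tanh\Atilde - \tanh A$, which is a self-adjoint element of $B(\noise)$ with $\Ran X \subseteq \noise^{L_1}$. Recovering $\Atilde$ from $X$ requires $X + \tanh A$ to lie in the image of $\tanh$ on $B(\noise)_+$; as observed in the proof of Proposition~\ref{propn: gi} (via the analogous fact for $\cosh$), this image is exactly the set of self-adjoint operators with spectrum in $[0,1)$, and inversion gives $\Atilde = \tanh^{-1}(X + \tanh A)$. This is precisely the second parameter set listed.

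The final assertion is then immediate: when $\noise^{L_1} = \{0\}$, the only admissible parameter is $X = 0$, forcing $\Atilde = A$, so $\Xi(U) \subseteq \{\Sigma_A\}$. There is no real obstacle here, since the substantive content is supplied by Proposition~\ref{cor: 4.3}; the remaining steps are a straightforward operator-theoretic reformulation and the reparameterisation via $\tanh$.
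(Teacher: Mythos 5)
Your proposal is correct and follows exactly the route the paper intends: the corollary is stated without proof as an immediate consequence of Proposition~\ref{cor: 4.3}, and your reformulation of $\bigl( ( \tanh \Atilde - \tanh A ) \ot I_\init \bigr) L_1 = 0$ as $\Ran ( \tanh \Atilde - \tanh A ) \subseteq \noise^{L_1}$ (using self-adjointness of the difference), together with the reparameterisation via the bijectivity of $\tanh$ from $B(\noise)_+$ onto the self-adjoint operators with spectrum in $[0,1)$, supplies precisely the omitted details. No gaps.
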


We now turn to the question of implementability of inner EH~flows by
quasifree HP~cocycles.

\begin{propn}\label{propn: 4.7} 
Let $U$ and $\Utilde$ be quasifree HP~cocycles on $\init$ with respect
to a gauge-invariant AW~amplitude $\Sigma$ for $\noise$, and let
$\smallQFintegrand{K}{Q}{-Q^*}$ and
$\smallQFintegrand{\Ktilde}{\Qtilde}{-\Qtilde^*}$ be their respective
$\Sigma$-generators. The following are equivalent.
\begin{rlist}
\item The cocycles $U$ and $\Utilde$ induce the same inner EH~flow.
\item There exist $x \in \noise$ and $\alpha \in \Real$ such that
\[
\Qtilde - Q = \ket{ x } \ot I_\init \quad \text{and} \quad
\Htilde - H - \tfrac{\ri}{2} %
\bigl( ( \bra{ x } \ot I_\init ) Q - %
Q^* ( \ket{ x } \ot I_\init ) \bigr) = \alpha I_\init,
\]
where $H := \tfrac{1}{2 \ri} ( K - K^*)$ and
$\Htilde := \tfrac{1}{2 \ri} ( \Ktilde - \Ktilde^*)$.
\end{rlist}
\end{propn}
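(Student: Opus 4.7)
The strategy is to apply Proposition~\ref{propn: jprime} and translate its conditions from the standard stochastic generators $(H, L, W)$ and $(\Htilde, \Ltilde, \Wtilde)$ of $U$ and $\Utilde$ to the $\Sigma$-generators $(K, Q)$ and $(\Ktilde, \Qtilde)$. Since $\Sigma$-quasifree HP~cocycles are Gaussian, $W = \Wtilde = I_{\Kil \ot \init}$, which forces the unitary $w$ appearing in Proposition~\ref{propn: jprime}(iii) to be $I_\Kil$. Accordingly, (i) is equivalent to the existence of $z \in \Kil$ and $\alpha \in \Real$ such that
\[
\Ltilde - L = \ket{z} \ot I_\init
\quad \text{and} \quad
\Htilde - H - \tfrac{\ri}{2} \bigl( (\bra{z} \ot I_\init) L - L^* (\ket{z} \ot I_\init) \bigr) = \alpha I_\init.
\]

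The key step is to convert the first of these conditions into the corresponding statement about $\Qtilde - Q$. Since $\Sigma = \SigmaA$ is gauge-invariant and hence block diagonal with invertible diagonal entry $\cosh A$, setting $V := \Qtilde - Q$ we have $\Ltilde - L = (\Sigma \ot I_\init) \smallcol{V}{-V^\pc}$. For the forward implication, writing $z = \smallcol{z_1}{z_2}$ and projecting onto the top block, the invertibility of $\cosh A$ yields $V = \ket{x} \ot I_\init$ with $x := (\cosh A)^{-1} z_1 \in \noise$; the remark following Definition~\ref{defn: 3.30} then gives $V^\pc = \ket{\xbar} \ot I_\init$, and comparing the bottom block identifies $z_2 = -\ol{\sinh A \cdot x}$, so that $z = \Sigma\iota(x)$. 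For the reverse implication, if $\Qtilde - Q = \ket{x} \ot I_\init$ for some $x \in \noise$, the same remark and a direct computation give
\[
\Ltilde - L = (\Sigma \ot I_\init) \smallcol{\ket{x} \ot I_\init}{-\ket{\xbar} \ot I_\init} = \ket{\Sigma\iota(x)} \ot I_\init,
\]
so $z := \Sigma\iota(x)$ works. Hence $\Ltilde - L = \ket{z} \ot I_\init$ for some $z \in \Kil$ if and only if $\Qtilde - Q = \ket{x} \ot I_\init$ for some $x \in \noise$, in which case necessarily $z = \Sigma\iota(x)$.

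Finally, substituting $z = \Sigma\iota(x)$ into identity~(\ref{eqn: zLxQ}) of Corollary~\ref{cor: 3.11} transforms the expression $(\bra{z} \ot I_\init) L - L^* (\ket{z} \ot I_\init)$ into $(\bra{x} \ot I_\init) Q - Q^* (\ket{x} \ot I_\init)$, so the second condition above becomes exactly the one in~(ii). The main delicate point to handle carefully will be the automatic consistency of the bottom block in the forward implication: one must verify that the partial conjugation $\ket{x} \ot I_\init \mapsto \ket{\xbar} \ot I_\init$ composes with $\ol{\sinh A} \ot I_\init$ so as to reproduce the given $\kbar$-component $z_2$ of $z$, pinning down $z = \Sigma\iota(x)$. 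Once that bookkeeping is in place, the equivalence of (i) and (ii) follows purely algebraically from Proposition~\ref{propn: jprime} and Corollary~\ref{cor: 3.11}.
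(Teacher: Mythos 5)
Your proof is correct and takes essentially the same route as the paper's: reduce to Proposition~\ref{propn: jprime} with $w = I_\Kil$ forced by Gaussianity, convert $\Ltilde - L = \ket{z} \ot I_\init$ into $\Qtilde - Q = \ket{x} \ot I_\init$ with $z = \Sigma\iota(x)$ using the block-diagonal form of $\SigmaA$ and the partial conjugation of $\ket{x}\ot I_\init$, and then invoke~(\ref{eqn: zLxQ}) to transform the condition on the Hamiltonian parts. The only cosmetic difference is that the paper pins down the lower component via the relation $L_2^\pc = -(\tanh A \ot I_\init)L_1$ applied to both cocycles, whereas you read it off the lower block of $(\Sigma \ot I_\init)\smallcol{V}{-V^\pc}$ directly.
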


\begin{proof}
Let $C$ and $T$ denote $\cosh A$ and $\tanh A$, respectively, where
$\Sigma = \SigmaA$, and let
\[
L := ( \Sigma \ot I_\init ) \col{ Q }{ -Q^\pc }, %
\quad K := \ri H - \half L^*L, \quad %
\Ltilde := ( \Sigma \ot I_\init ) \col{ \Qtilde }{ -\Qtilde^\pc } %
\quad \text{and} \quad %
\Ktilde := i\Htilde - \half \Ltilde^*\Ltilde.
\]
By Proposition~\ref{propn: jprime}, (i) is equivalent the existence of
$z = (z_1, \ol{z_2} ) \in \noise \op \kbar$ and $\alpha \in \Real$
such that
\begin{equation}\label{eqn: LtildeL}
\Ltilde - L = \ket{ z } \ot I_\init \quad \text{and} \quad %
\Htilde - H - \alpha I_\init = %
\tfrac{\ri}{2 } \bigl( ( \bra{ z } \ot I_\init ) L - %
L^* ( \ket{ z } \ot I_\init ) \bigr).
\end{equation}
If $z = ( z_1, \ol{z_2} ) \in \noise \op \kbar$ and $\alpha \in \Real$
are such that (\ref{eqn: LtildeL}) holds then
\[
0 = ( T \ot I_\init ) %
\bigl( L_1 + \ket{z_1} \ot I_\init - \wt{L_1} \bigr) = %
-L_2^\pc + \ket{T z_1} \ot I_\init + \wt{L_2}^\pc = %
\ket{z_2 + T z_1} \ot I_\init,
\]
so $z_2 = - T z_1$, and therefore
$z = \Sigma \iota(x)$, where $x = C^{-1} z_1$.
It follows from~(\ref{eqn: zLxQ}) that (ii) holds.

Conversely, suppose that (ii) holds, with $x \in \noise$ and
$\alpha \in \Real$, and set $z := \Sigma \iota ( x )$. Then
\[
\Ltilde - L = ( \Sigma \ot I_\init ) %
\col{ \Qtilde - Q }{ Q^\pc -\Qtilde^\pc } = %
( \Sigma \ot I_\init ) %
\col{ \ket{ x } \ot I_\init }{ - \ket{ \xbar } \ot I_\init } = %
\ket{ z } \ot I_\init,
\]
so
$\Qtilde - Q = \ket{ C^{-1} z_1 } \ot I_\init = \ket{ x } \ot I_\init$
and, by~(\ref{eqn: zLxQ}), condition~(\ref{eqn: LtildeL}) is
satisfied.
\end{proof}

\begin{thm}
\label{thm: 4.6}
Let $j$ be an inner EH~flow which is a minimal dilation of its vacuum
expectation semigroup. Then there is at most one gauge-invariant
AW~amplitude $\Sigma$ such that $j$ is induced by a $\Sigma$-quasifree
HP~cocycle.
\end{thm}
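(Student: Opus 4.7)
The plan is to exploit Proposition~\ref{propn: jprime} to relate the stochastic generators of two HP~cocycles inducing the same EH~flow $j$, to combine this with the structural identity of Corollary~\ref{cor: 3.11}(iv) characterising $\Sigma$-quasifree HP~cocycles, and then to apply the minimality criterion of Theorem~\ref{thm: 3.23} to force the two candidate AW~amplitudes to coincide.

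Suppose $j$ is induced by HP~cocycles $U^{(1)}$ and $U^{(2)}$, with $U^{(i)}$ being $\Sigma_i$-quasifree for gauge-invariant AW~amplitudes $\Sigma_i = \Sigma_{A_i}$ ($i = 1, 2$). By Definition~\ref{defn: QFcocycle}, both cocycles are Gaussian with noise dimension space $\Kil := \noise \op \kbar$ and respective stochastic generators $(H_i, L^{(i)}, I_{\Kil \ot \init})$. First I would write $L^{(i)} = \smallcol{L^{(i)}_1}{L^{(i)}_2}$, with $L^{(i)}_1 \in B(\init; \noise \ot \init)$ and $L^{(i)}_2 \in B(\init; \kbar \ot \init)$, so that Corollary~\ref{cor: 3.11}(iv) yields $L^{(i)\pc}_2 = -(\tanh A_i \ot I_\init) L^{(i)}_1$. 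Next I apply Proposition~\ref{propn: jprime} with $W^{(1)} = W^{(2)} = I_{\Kil \ot \init}$ to obtain a vector $z = (z_1, \ol{z_2}) \in \noise \op \kbar$ such that $L^{(2)} = L^{(1)} + \ket{z} \ot I_\init$, that is, $L^{(2)}_1 = L^{(1)}_1 + \ket{z_1} \ot I_\init$ and $L^{(2)}_2 = L^{(1)}_2 + \ket{\ol{z_2}} \ot I_\init$.

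I then take partial conjugation of the second identity, using $(\ket{\ol{z_2}} \ot I_\init)^\pc = \ket{z_2} \ot I_\init$ from Theorem~\ref{thm: 3.28}(b), and substitute the structural identity for each $L^{(i)\pc}_2$; a short rearrangement gives
\[
\bigl( ( \tanh A_1 - \tanh A_2 ) \ot I_\init \bigr) L^{(1)}_1 = \ket{ z_2 + \tanh A_2 \, z_1 } \ot I_\init.
\]
Left-multiplying by $\bra{y} \ot I_\init$ for arbitrary $y \in \noise$ and using self-adjointness of $\tanh A_1 - \tanh A_2$ produces
\[
\bigl( \bra{ ( \tanh A_1 - \tanh A_2 ) y } \ot I_\init \bigr) L^{(1)}_1 = \langle y, z_2 + \tanh A_2 \, z_1 \rangle \, I_\init \in \Comp \, I_\init.
\]
Setting $\zeta := \bigl( ( \tanh A_1 - \tanh A_2 ) y, 0 \bigr) \in \noise \op \kbar = \Kil$ and observing that $(\bra{\zeta} \ot I_\init) L^{(1)} = (\bra{ ( \tanh A_1 - \tanh A_2 ) y } \ot I_\init) L^{(1)}_1$, this shows $(\bra{\zeta} \ot I_\init) L^{(1)} \in \Comp \, I_\init$. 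Minimality of $j$, applied via Theorem~\ref{thm: 3.23} to the HP~cocycle $U^{(1)}$, forces $\zeta = 0$, so $(\tanh A_1 - \tanh A_2) y = 0$. Since $y$ is arbitrary and $\tanh$ is injective on $B(\noise)_+$, it follows that $A_1 = A_2$ and therefore $\Sigma_1 = \Sigma_2$.

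The main technical nuisance is careful bookkeeping of partial conjugations together with the pure-noise shift provided by Proposition~\ref{propn: jprime}. The key observation driving the argument is that the derived constraint has a test vector $\zeta$ lying entirely in the first summand $\noise$ of $\Kil = \noise \op \kbar$; this ensures that minimality yields a non-trivial consequence $(\tanh A_1 - \tanh A_2) y = 0$, rather than a tautology that could be absorbed into the orthogonal summand $\kbar$.
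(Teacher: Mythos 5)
Your proposal is correct and follows essentially the same route as the paper's proof: both combine Proposition~\ref{propn: jprime} (with $W = W' = I_{\Kil\ot\init}$) with the structural relation $L_2^\pc = -(\tanh A \ot I_\init)L_1$ from Corollary~\ref{cor: 3.11}, derive the identity $\bigl((\tanh A_1 - \tanh A_2)\ot I_\init\bigr)L_1 = \ket{z_2 + \tanh A_2\, z_1}\ot I_\init$, and invoke Theorem~\ref{thm: 3.23} with a test vector supported in the $\noise$ summand to conclude $\tanh A_1 = \tanh A_2$. The only cosmetic differences are that the paper applies the difference operator to the second cocycle's $L_1$ and restricts attention to $y \in \Ran(\tanh A - \tanh\Atilde)^*$, while you take arbitrary $y$ and absorb the operator into the bra; the conjugation identity $(\ket{\ol{z_2}}\ot I_\init)^\pc = \ket{z_2}\ot I_\init$ is more precisely the remark following Definition~\ref{defn: 3.30} than Theorem~\ref{thm: 3.28}(b), but this is immaterial.
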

\begin{proof}
Suppose that $j$ is induced by a $\Sigma$-quasifree HP~cocycle~$U$ and
a $\Sigmatilde$-quasifree HP~cocycle~$\Utilde$, where $\Sigma=\SigmaA$
and $\Sigmatilde = \Sigma_{\Atilde}$ are gauge-invariant AW~amplitudes
for $\noise$. Then $U$ and $\Utilde$ are Gaussian and so have
stochastic generators of the form $\smallgaussintegrand{K}{L}{-L^*}$
and $\smallgaussintegrand{K}{\Ltilde}{-\Ltilde^*}$ respectively.
Letting $\smallgaussintegrand{K}{Q}{-Q^*}$ and
$\smallgaussintegrand{\Ktilde}{\Qtilde}{-\Qtilde^*}$ be their
respective quasifree generators, it follows that
\[
( \Sigma \ot I_\init ) \col{Q}{-Q^\pc} = L \qquad \text{and} \qquad %
( \Sigmatilde \ot I_\init ) \col{\, \Qtilde}{-\Qtilde^\pc} = \Ltilde,
\]
and Proposition~\ref{propn: jprime} implies that
$\Ltilde = L + \ket{z} \ot I_\init$ for some
$z = (z_1, \ol{z_2})$ in $\noise \op \kbar$.
If~$T := \tanh A$ and $\Ttilde := \tanh  \Atilde$ then
\begin{align*}
\bigl( ( T - \Ttilde ) \ot I_\init \bigr) \wt{L}_1 & = %
( T \ot I_\init ) \bigl( L_1 + \ket{z_1} \ot I_\init \bigr) - %
( \Ttilde  \ot I_\init ) \wt{L_1} \\
 & = -L_2^\pc + \ket{T z_1} \ot I_\init + \wt{L_2}^\pc \\
 & = \ket{z_2 + T z_1} \ot I_\init,
\end{align*}
so if $y \in \Ran ( T - \Ttilde )^*$ then
\[
\bigl( \bra{ (y,0) } \ot I_\init \bigr) \Ltilde = %
\bigl( \bra{ y } \ot I_\init \bigr) \wt{L_1} \in \Comp I_\init.
\]
Therefore, by Theorem~\ref{thm: 3.23}, the minimality of $j$ implies
that $\Ran ( T - \Ttilde )^* = \{ 0 \}$, so $\Ttilde = T$,
$\Atilde = A$ and $\Sigmatilde = \Sigma$.
\end{proof}

\section{Quantum random walks}\label{sec: QRW}

In this section we first review the basic theory of unitary quantum
random walks for particles in a vector state and their convergence to
quantum stochastic cocycles \cite{B1}; for an elementary treatment via
the semigroup decomposition of quantum stochastic cocycles,
see~\cite{BGL}. Stronger theorems for more general walks may be found
in~\cite{B2}, for particles in a faithful normal state, and
in~\cite{B3}, for particles in a general normal state. We then
construct quantum random walks in the repeated-interactions model for
particles in a faithful normal state~$\rho$. Thus let $\rho$ be such a
state on $B(\particle)$, for a Hilbert space $\particle$. Under the
assumption that the interaction Hamiltonian $\Hint$ has no diagonal
component with respect to the eigenspaces of the density matrix
of~$\rho$, we demonstrate convergence to HP~cocycles of the
form~$U \ot I$ where $I$ is the identity operator of the Fock space
over $L^2(\Rplus; \Kil_0)$ for a subspace~$\Kil_0$ of the GNS space
of~$\rho$. The construction yields a quasifree noise dimension space
$\noise$ together with natural conjugate space $\kbar$ and, under the
assumption of exponential decay of the eigenvalues of the density
matrix corresponding to $\rho$, a gauge-invariant AW~amplitude
$\Sigma(\rho)$ for $\noise$. We then show that~$U$ is
$\Sigma(\rho)$-quasifree, assuming only that~$\Hint$ is
$\particle$-conjugatable.  We also show that if the lower-triangular
matrix components of $\Hint$ are strongly linearly independent then
$\Sigma(\rho)$ is the unique gauge-invariant AW~amplitude with respect
to which~$U$ is quasifree.

\subsection*{Particles in a vector state}
For this subsection, we fix a noise dimension space $\Kil$.

\begin{defn}
The \emph{toy Fock space} $\Upsilon$ over $\Kil$ is the
tensor product of a sequence of copies of~$\Khat := \Comp \op \Kil$
with respect to the constant stabilising sequence given by
$\vac := %
\left(\begin{smallmatrix} 1 \\[0.5ex] 0 \end{smallmatrix}\right)$:
\[
\Upsilon := \bigotimes_{n=0}^\infty \big( \Khat, \vac \big).
\]
We also set
\[
\Upsilon_{[m} := \bigotimes_{n=m}^\infty \big( \Khat, \vac \big)
\qquad \text{for all } m \ges 1
\]
and denote the identity operator on $\Upsilon_{[m}$ by $I_{[m}$.
\end{defn}

As is readily verified \cite{B1,BGL}, toy Fock space over $\Kil$
approximates Boson Fock space over $\Kil$ in the following sense. Let
$\Fock_J = \Gamma\bigl( L^2( J; \Kil ) \bigr)$ for any subinterval
$J \subseteq \Rplus$, with $\Vac_J$ its vacuum vector, and, for all
$\tau > 0$, let
\[
D_\tau: \Upsilon \to \bigotimes_{n=0}^\infty %
\bigl( \Fock_{ [n\tau, (n+1)\tau) }, %
\Vac_{ [n\tau, (n+1)\tau) } \bigr) = \Fock
\]
be the isometric linear operator such that
\begin{align*}
\left( \binom{ 1 }{ x_n } \right)_{n \ges 0} \mapsto %
\bigotimes_{n=0}^\infty %
\bigl( 1, \tau^{-1/2} x_n \indf{ [n \tau, (n+1)\tau) } %
\bigr)
\end{align*}
for any finitely-supported sequence $( x_n )$ in $\Kil$. Then
$D_\tau D_\tau^* \to I_\Fock$ in the strong operator topology as
$\tau \to 0+$.

\begin{defn}
\label{defn: QRWgen}
For any $G \in U( \Khat \ot \init)$, the \emph{quantum random walk
generated by $G$} is the sequence
$( U_n )_{n\ges 0}$ in $B( \init \ot \Upsilon)$ defined recursively
as follows:
\[
U_0 = I_{\init \ot \Upsilon} \quad \text{and} \quad %
U_{n+1} = ( \sigma_n \circ \iota )( G ) U_n %
\qquad \text{for all } n \ges 0,
\]
where the normal $*$-monomorphism
\[
\iota : B(\Khat \ot \init) \to B( \init \ot \Upsilon ); \ %
A \ot X \mapsto X \ot A \ot I_{[1}
\]
and $\sigma_n := \id_{B(\init)} \uwot \sigma_n^\Upsilon$
is the ampliation of the right shift $*$-endomorphism of $B(\Upsilon)$
with range~$I_{\Khat^{\otimes n}} \ot B(\Upsilon_{[n})$.

Scaling maps on $B( \Khat \ot \init )$  are defined by setting
\[
\tauscale \left( \mat{ A}{B}{C}{D} \right) = %
\mat{ \tau^{-1} A }{  \tau^{-1/2} B }{ \tau^{-1/2} C  }{  D }
\qquad \text{for all } \tau > 0.
\]
\end{defn}

\begin{rems}
If the generator is an elementary tensor $A \ot X$
then the quantum random walk
takes the simple form
\[
\big( X^n \ot A^{\ot n} \ot I_{[n} \big)_{n \ges 0}.
\]
For us here, generators are of the form $\exp{\ri H}$ for operators
$H \in B( \Khat \ot \init)_{\sa}$.

In~\cite{BGL} we worked with \emph{left} QRW's
and generators in $B( \init \ot \Khat)$ instead;
the two are, of course, equivalent.
\end{rems}

Henceforth we focus on the repeated-interactions model of~\cite{AtP}.
Recall that in this model one has a family of discrete-time evolutions
of an open quantum system consisting of a system $\mathsf{S}$ coupled
to a heat reservoir modeled by an infinite chain of identical particles
in some (thermal) state $\rho$, repeatedly interacting with the system
over a short time period of length $\tau$. The corresponding discrete-time
evolution has unitary generator
\[
\exp{\ri \tau \Htot( \tau)},
\]
where the \emph{total Hamiltonian}
decomposes as
\[
\Htot( \tau) =
I_{\wh{\Kil}} \otimes \Hsys + %
\Hpar \otimes I_\init + \tau^{-1 / 2} \Hint
\in B(\Khat \ot \init)
\]
for a \emph{system Hamiltonian}
$\Hsys \in B(\init)_{\sa}$,
a \emph{particle Hamiltonian}
$\Hpar \in B(\wh{\Kil})_{\sa}$ and
an \emph{interaction Hamiltonian} $\Hint \in B(\Khat \ot \init)_{\sa}$.
The continuous limit of this model (embedded into Boson Fock space)
at zero temperature is captured by the following theorem
in which, for each $\tau > 0$,
$( U(\tau)_n )_{n \ges 0}$ denotes the quantum random walk with unitary generator
$\exp{\ri \tau \Htot( \tau)}$.

\begin{thm}\label{thm: 5.5}
Suppose that
$( \bra{ \vac } \ot I_\init ) \Hint ( \ket{ \vac } \ot I_\init ) = 0$,
so that
$\ri \Hint ( \ket{ \vac } \ot I_\init ) \in B( \init; \Khat \ot \init )$
has block-matrix form $\smallcol{0}{L}$ for some
$L \in B( \init; \Kil \ot \init )$. For all $\tau > 0$, set
\[
U^\tau := %
\bigl( ( I_\init \ot D_\tau ) U(\tau)_{\lfloor t/\tau \rfloor } %
( I_\init \ot D_\tau )^* \bigr)_{t\ges 0}
\]
and
\[
F := %
\gaussintegrand{\ri \Hsys + \ri %
\ip{ \vac }{ \Hpar \, \vac } I_\init - \half L^* L}%
{ L }{ - L^* } \in  B( \Khat \otimes \init)_0.
\]
Then $F^* + F + F^* \Delta F = 0$ and, as $\tau \to 0+$,
\[
\sup_{t \in [0,T]} \norm{ ( U^\tau_t - Y^F_t ) \xi } \to 0
\qquad \text{for all } \xi \in \init \ot \Fock %
\text{ and } T \in \Rplus,
\]
where $Y^F$ is a Gaussian HP~cocycle with stochastic generator $F$, as
in Definition~\ref{defn: 2.10}.
\end{thm}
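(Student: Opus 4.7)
The statement has two parts: the structural identity $F^*+F+F^*\Delta F=0$ and the convergence $U^\tau_t\to Y^F_t$. For the first part, the hypothesis that $\Hint$ has no vacuum-diagonal component forces its $\Khat$-block structure to be $\smallmat{0}{\ri L^*}{-\ri L}{N}$ for some self-adjoint $N\in B(\Kil\ot\init)$. A direct block computation then gives
\[
F+F^*=\mat{K+K^*}{0}{0}{0}=\mat{-L^*L}{0}{0}{0}
\quad\text{and}\quad
F^*\Delta F=\mat{L^*L}{0}{0}{0},
\]
using that $\Hsys$ and $\langle\vac,\Hpar\vac\rangle I_\init$ are self-adjoint, so their contribution to $K+K^*$ is zero. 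The identity $F^*+F+F^*\Delta F=0$ follows, and Theorem~\ref{thm: 3.19}(c) then produces a Gaussian HP cocycle $Y^F$ generated by $F$.

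For the convergence statement, the strategy is to identify the scaled generator of the walk as $\tau\to 0+$ and invoke an abstract convergence theorem for quantum random walks, as developed in \cite{B1,BGL}. The key calculation is a Taylor expansion of
\[
V_\tau:=\exp\bigl(\ri\tau\Htot(\tau)\bigr)=I+\ri\tau(I_{\wh{\Kil}}\ot\Hsys)+\ri\tau(\Hpar\ot I_\init)+\ri\tau^{1/2}\Hint+\half(\ri\tau\Htot(\tau))^2+\cdots
\]
in powers of $\tau^{1/2}$. Writing $C:=\ri\Hint$, the square contributes $\half\tau C^2$, whose $(0,0)$ block equals $-\half\tau L^*L$, while cubic and higher terms are $O(\tau^{3/2})$. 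Applying the scaling $\tauscale$ block by block then yields
\[
\tauscale(V_\tau-I)\to
\gaussintegrand{\ri\Hsys+\ri\langle\vac,\Hpar\vac\rangle I_\init-\half L^*L}{L}{-L^*}=F
\quad\text{in norm as }\tau\to 0+,
\]
as the $(0,0)$ entry picks out first-order contributions from $\tau(A+B)$ and the second-order contribution $\half\tau C^2$, the off-diagonal blocks pick up $\sqrt\tau\,C$, and the $(1,1)$ block vanishes since $C_{11}=\ri N$ enters only at order $\sqrt\tau$.

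With norm convergence of the scaled generators to $F\in B(\Khat\ot\init)_0$, and $F$ being an HP generator in Gaussian form, one is in the regime of the quantum random walk invariance principle. Concretely, the embedded walks $U^\tau$ satisfy the hypotheses of the convergence theorem of \cite{B1} (see also the treatment via semigroup decompositions in \cite{BGL}), which guarantees strong uniform convergence on compact time intervals to the unique Gaussian HP cocycle generated by the limit $F$. Thus $\sup_{t\in[0,T]}\norm{(U^\tau_t-Y^F_t)\xi}\to 0$ for all $\xi\in\init\ot\Fock$ and $T\in\Rplus$.

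\textbf{Main obstacle.} The delicate point is controlling the higher-order Taylor tail of $\exp(\ri\tau\Htot(\tau))$ uniformly in $\tau$: because $\Htot(\tau)$ itself blows up like $\tau^{-1/2}$ through $\Hint$, the naive remainder bound $\norm{\Htot(\tau)}^n\tau^n/n!$ cannot be applied directly to conclude that the series after the quadratic truncation is $o(\tau)$. This is why the expansion must be organised by powers of $\tau^{1/2}$ with the $\Hint$-powers tracked carefully, using the boundedness of all Hamiltonians to collect a geometric-series bound on the remainder; this is also the technical heart of the cited convergence theorems, and the reason for the boundedness assumption $\Hint\in B(\Khat\ot\init)$. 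All other steps—verifying the algebraic identity for $F$, reading off each block in the scaling limit, and applying the invariance principle—are essentially routine manipulations.
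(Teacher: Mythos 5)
Your proposal is correct and follows essentially the same route as the paper: verify the structure relation $F^*+F+F^*\Delta F=0$ by direct block computation, identify $\lim_{\tau\to 0+}\tauscale\bigl(\exp(\ri\tau\Htot(\tau))-I\bigr)=F$ via the Taylor expansion in powers of $\tau^{1/2}$, and then cite the quantum random walk convergence theorems of \cite{B1,BGL}. One small remark: the remainder control you flag as the main obstacle is in fact routine here, since $\norm{\ri\tau\Htot(\tau)}=O(\tau^{1/2})$ makes the tail beyond the quadratic term $O(\tau^{3/2})=o(\tau)$, which survives even the strongest scaling $\tau^{-1}$ applied to the $(0,0)$ block.
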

\begin{proof}
That $F$ satisfies the structure relation is readily verified.
The final claim holds
by \cite[Theorem 7.6 and Remarks 4.8 and~5.10]{B1}
(see also \cite[Theorem 4.3]{BGL}), since
\[
\lim_{\tau \to 0+~} %
\tauscale\bigl( \exp{\ri \tau \Htot( \tau)} - I_{\init\ot\Fock} \bigr) = F.
\qedhere
\]
\end{proof}

\subsection*{Particles in a faithful state}

We now fix a non-zero Hilbert space $\particle$, referred to as the
\emph{particle space}, and a faithful normal state $\rho$ on
$B(\particle)$. Let $(\gamma_\alpha)_{\alpha \ges 0}$ be the
eigenvalues of its density matrix~$\vrho$,
ordered to be strictly decreasing,
 the index set being either~$\Zplus$
or~$\{ 0, 1, \cdots , N \}$ for some non-negative integer $N$.
For any index $\alpha$,  let $P_\alpha \in B(\particle)$
be the orthogonal projection with range $\ka$, the eigenspace of
$\vrho$ corresponding to the eigenvalue $\gamma_\alpha$. Thus
$\vrho = \sum_{\alpha \ges 0} \gamma_\alpha P_\alpha$
and
$\sum_{\alpha \ges 0} \gamma_\alpha d_\alpha  = 1$,
where
$d_\alpha := \dim \kil_\alpha = \tr( P_\alpha )$.

Let $( \Khat, \pi, \eta)$ denote the GNS representation of $\rho$.
Thus $( \pi, \Khat)$ is a normal unital $*$-representation of
$B(\particle)$, $\eta$ is an operator from $B(\particle)$ to $\Khat$
with dense range,
\[
\pi(X) \eta(Y) = \eta(XY)
\quad \text{and} \quad
\ip{ \eta(Z) }{ \pi(X) \eta(Y) } = \rho (Z^* X Y )
\quad \text{for all } X,Y,Z \in B(\particle).
\]
In particular, $\rho(X) = \ip{ \omega }{ \pi(X) \omega }$ and
$\eta(X) = \pi(X) \vac$ for all $X \in B(\particle)$, where
$\omega := \eta(I_\particle)$. As is well known, the GNS
representation is unique up to isomorphism; here, we take the triple
defined as follows:
\[
\Khat := HS( \particle ), \qquad
\pi(X) := L_X \quad \text{and} \quad %
\eta(X) := X \vrho^{1/2} =
\sum_{\alpha \ges 0} \sqrt{ \gamma_\alpha } \, X P_\alpha
\qquad \text{for all } X \in B(\particle),
\]
where $HS( \particle )$ denotes the Hilbert--Schmidt class of
operators on~$\particle$ and $L_X$ denotes the operator of left
multiplication by~$X$. In particular, $\omega = \vrho^{1/2}$. Now let
\[
\Kil := \Khat \ominus \Comp \omega, \qquad %
\pitilde := \pi \uwot \id_{B( \init )} \quad \text{and} \quad %
\rhotilde := \rho \uwot \id_{B( \init )},
\]
so that
$( \pitilde, \Khat \ot \init )$
is a normal unital $*$-representation of $B( \particle \ot \init )$
and
$\rhotilde$ is a normal unital completely positive map from
$B( \particle \ot \init )$ to $B( \init )$. For all
$\alpha, \beta \ges 0$, let
\[
\kab := \Lin\bigl\{ \dyad{ x }{ y }: \, x \in \ka, y \in \kb \bigr\},
\]
and let
\[
\noise := \bigoplus_{\alpha > \beta \ges 0} \kab, \qquad
\kbar := \bigoplus_{0 \les \alpha < \beta} \kab, \qquad
\Kil_1 := \noise \op \kbar \quad \text{and} \quad %
\Kil_0 := \Khat \ominus ( \Comp \vac \op \Kil_1 ).
\end{equation*}
Let $k$ be the anti-unitary operator from $\noise$ to $\kbar$  obtained
by restricting the adjoint operation on $\Khat = HS( \particle )$.
Then
\[
\Khat = \Comp \omega \op \Kil_1 \op \Kil_0,
\]
and $( \kbar, k )$ is a realisation of the conjugate Hilbert space
of~ $\noise$. Note also that
\begin{equation}\label{eqn: C vac zero}
\Comp \omega  \op \Kil_0 = \bigoplus_{\alpha \ges 0} \kaa.
\end{equation}
We now identify the one-dimensional subspace $\Comp \vac$ of $\Khat$
with $\Comp$, so that
\[
\Khat = \wh{ \Kil_1 } \op \Kil_0, \quad \text{where } %
\wh{ \Kil_1 } = \Comp \op \Kil_1, \qquad \text{and} \qquad
\vac = \begin{pmatrix} 1 \\ 0 \\ 0 \end{pmatrix}.
\]

\begin{thm}\label{thm: 5.11}
Let the operators $\Hsys \in B(\init)$, $\Hpar \in B( \particle )$
and $\Hint \in B( \particle \ot \init )$ be self-adjoint, and assume
that $( \Pa \ot I_\init ) \Hint  ( \Pa \ot I_\init ) = 0$ for all
$\alpha \ges 0$. Then we have the following.
\begin{alist}
\item The operator
$\pitilde( \ri \Hint ) ( \ket{ \vac } \ot I_\init ) \in %
B( \init; ( \wh{ \Kil_1 } \op \Kil_0 ) \ot \init )$
has the block-matrix form
$\left[\begin{smallmatrix}0 \\ L \\ 0 \end{smallmatrix}\right]$
for some $L \in B( \init;  \Kil_1  \ot \init )$.
\item  For all $\tau > 0$, let
$\Utilde^\tau := %
\bigl( ( I_\init \ot D_\tau ) %
\Utilde(\tau)_{\lfloor t/\tau \rfloor } ( I_\init \ot D_\tau )^* %
\bigr)_{t \ges 0}$, where $( \Utilde( \tau )_n )_{n \ges 0}$
is the quantum random walk generated by
$\pitilde\bigl( \exp \ri \tau \Htot ( \tau ) \bigr)$ and
\begin{equation*}
\Htot ( \tau )  :=  %
I_{\particle} \otimes \Hsys + \Hpar \otimes I_\init + %
\tau^{-1 / 2} \Hint \in  B( \particle \otimes \init),
\end{equation*}
and let $\Ftilde := F \op 0_{\Kil_0 \ot \init }$,
where
\[
F := \gaussintegrand{ K }{ L }{ - L^* } \in  %
B( \wh{ \Kil_1 } \otimes \init ) \qquad \text{with } %
K := \ri \Hsys + \ri \rho( \Hpar ) I_\init - %
\half \rhotilde ( \Hint^2 ).
\]
Then
$\Ftilde^* + \Ftilde + \Ftilde^* \Delta \Ftilde = 0$
and, as $\tau \to 0+$,
\[
\sup_{t \in [0,T]} %
\bigl\| ( \Utilde^\tau_t - Y^{\Ftilde}_t ) \xi \bigr\| \to 0 %
\qquad \text{for all } \xi \in \init \ot \Fock \text{ and } T \in
\Rplus.
\]
\end{alist}
\end{thm}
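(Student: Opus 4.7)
The plan is to reduce both parts to Theorem~\ref{thm: 5.5}, applied to the GNS-dilated system, with enlarged particle space $\Khat = HS(\particle)$ and distinguished vector $\omega = \vrho^{1/2}$.

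For part~(a), I would work in the concrete $HS(\particle)$ picture. Since $\omega = \sum_\alpha \sqrt{\gamma_\alpha} \Pa$ lies in $\bigoplus_\alpha \kaa$, and $\pitilde$ acts on the $\particle$-factor of $HS(\particle)$ by left multiplication, a direct evaluation of the inner product $\ip{T \ot v}{\pitilde(\ri \Hint) ( \omega \ot u )}$ for $T \in \kab$ and $u, v \in \init$ produces a scalar multiple of the matrix element of $(P_\beta \ot I_\init) \Hint (\Pa \ot I_\init)$ between $u$ and $v$. The diagonal-free hypothesis therefore annihilates the component in $\kaa \ot \init$ for every $\alpha$. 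Since $\Comp \omega \op \Kil_0 = \bigoplus_\alpha \kaa$ by~(\ref{eqn: C vac zero}), both the $\Comp$-component and the $\Kil_0$-component of $\pitilde(\ri \Hint) ( \ket\omega \ot I_\init )$ vanish, yielding the asserted block form.

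For part~(b), the key point is that $\pitilde$ is a normal unital $*$-homomorphism, so that the walk generator rewrites as
\[
\pitilde\bigl( \exp \ri \tau \Htot(\tau) \bigr) =
\exp \bigl( \ri \tau \bigl[ I_\Khat \ot \Hsys + \pi(\Hpar) \ot I_\init +
\tau^{-1/2} \pitilde(\Hint) \bigr] \bigr).
\]
This is exactly the form of generator covered by Theorem~\ref{thm: 5.5}, with new particle space $\Khat$ and bounded self-adjoint Hamiltonians $\pi(\Hpar) \in B(\Khat)$ and $\pitilde(\Hint) \in B(\Khat \ot \init)$. The no-diagonal hypothesis needed by Theorem~\ref{thm: 5.5} reads
\[
( \bra\omega \ot I_\init ) \pitilde(\Hint) ( \ket\omega \ot I_\init ) = \rhotilde(\Hint),
\]
which vanishes by part~(a), and the operator $L'$ provided by that theorem coincides, in the block decomposition over $\Kil = \Kil_1 \op \Kil_0$, with $L$ extended by zero on the $\Kil_0$-summand.

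It then remains only to identify the generator $F$ produced by Theorem~\ref{thm: 5.5} with $\Ftilde$. The GNS identity gives $\ip{\omega}{\pi(\Hpar)\omega} = \rho(\Hpar)$, while the homomorphism property of $\pitilde$ yields
\[
L'^* L' = \bigl( \bra\omega \ot I_\init \bigr) \pitilde(\Hint^2) \bigl( \ket\omega \ot I_\init \bigr) = \rhotilde(\Hint^2).
\]
Together with the fact that $L'$ vanishes on the $\Kil_0$-summand, the $3 \times 3$ block-matrix form of the Theorem~\ref{thm: 5.5} limit on $( \Comp \op \Kil_1 \op \Kil_0 ) \ot \init = \Khat \ot \init$ then coincides with $\Ftilde$, and both the structure relation and the uniform-on-compacta convergence are inherited directly. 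I do not anticipate a serious obstacle; the main care needed is keeping the identifications between the concrete $HS(\particle)$ model, the abstract GNS triple $(\Khat, \pi, \eta)$, and the orthogonal decomposition $\Khat = \Comp \omega \op \Kil_1 \op \Kil_0$ consistently aligned with the relevant block decompositions.
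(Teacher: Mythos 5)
Your proposal is correct and follows essentially the same route as the paper: part (a) by direct evaluation of matrix elements against $\kab \ot \init$ in the concrete $HS(\particle)$ model together with the decomposition~(\ref{eqn: C vac zero}), and part (b) by using the homomorphism property of $\pitilde$ to rewrite the walk generator as $\exp \ri\tau \wt{\Htot}(\tau)$ and then invoking Theorem~\ref{thm: 5.5} for the enlarged particle space $\Khat$ with vacuum $\omega$, identifying $\ip{\omega}{\pi(\Hpar)\omega} = \rho(\Hpar)$ and $L^*L = \rhotilde(\Hint^2)$. No gaps.
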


\begin{proof}
(a) It must be shown that
\begin{equation}\label{eqn: RanTilde} 
\Ran \pitilde ( \Hint ) ( \ket{ \vac } \ot I_\init )
\perp ( \Comp \vac \op \Kil_0 ) \ot \init.
\end{equation}
If $u$, $v \in \init$ and $T \in \kaa$ for some index $\alpha$, and
$\Huv := ( I_\particle \ot \bra{ u } ) \Hint %
( I_\particle \ot \ket{ v } )$,
then
\begin{align*}
\Bigl\langle T \ot u, %
\pitilde( \Hint ) ( \vac \ot v ) \Bigr\rangle = %
\bigl\langle T, \Huv \vac \bigr\rangle = %
\sqrt{ \gamma_\alpha } \, %
\bigl\langle T, \Huv P_\alpha \bigr\rangle = 0,
\end{align*}
and so~(\ref{eqn: RanTilde}) follows from~(\ref{eqn: C vac zero}).

(b) Note that
$\pitilde\bigl( \exp \ri \tau \Htot ( \tau ) \bigr) =  %
\exp \ri \tau  \wt{\Htot}( \tau ) \in %
B( \Khat \otimes \init)$
for all $\tau > 0$, where
\[
\wt{\Htot}( \tau ) := I_{\Khat} \otimes \Hsys + %
\pi( \Hpar ) \otimes I_\init + \tau^{-1 / 2} \pitilde( \Hint ).
\]
Furthermore
$\ip{ \vac }{ \pi ( \Hpar ) \vac } I_\init = \rho( \Hpar ) I_\init$
and it is straightforward to verify that
\[
L^*L = \begin{bmatrix} 0 \\ L \\ 0 \end{bmatrix}^* %
\begin{bmatrix} 0 \\ L \\ 0 \end{bmatrix} = %
( \bra{ \vac } \ot I_\init ) \pitilde( \Hint^2 ) %
( \ket{ \vac } \ot I_\init ) = \rhotilde ( \Hint^2 ),
\]
which implies that $F$ is as claimed. The conclusion
now follows from Theorem~\ref{thm: 5.5}, since $\vac$ is identified
with
$\left( \begin{smallmatrix} 1 \\ 0 \\ 0 \end{smallmatrix}\right) %
\in \wh{ \Kil_1 } \ot \Kil_0$
and $\Comp \vac$ with $\Comp$.
\end{proof}

\begin{rems}
Under the identification
$\init \ot \Fock^{\Kil} = %
\init \ot \Fock^{\Kil_1} \ot \Fock^{\Kil_0}$,
where $\Fock^\Hil := \Gamma\bigl( L^2( \Rplus; \Hil ) \bigr)$,
the limit process decomposes as
\[
Y^{\Ftilde}_t = \YF_t \ot I_{\Fock^{\Kil_0}}
\qquad \text{for all } t \in \Rplus.
\]

The condition on $\Hint$ has the following physical interpretation:
there is no contribution from the interaction Hamiltonian unless the
particle undergoes a transition.
\end{rems}

\begin{ass}\label{ass: 5.13}
We impose an exponential-decay condition on the
eigenvalues of the density matrix $\vrho$, by insisting that
\[
m_\rho := %
\inf \{ \gamma_\alpha / \gamma_{\alpha + 1}: \alpha \ges 0 \} > 1.
\]
\end{ass}

This ensures that the following lemma yields an AW amplitude for
$\noise$. To avoid it would require more of the general theory
developed in [$\text{LM}_{1,2}$].

For all indices $\alpha$ and $\beta$, let $\Pab$ denote the orthogonal
projection with range $\kab$.

\begin{lemma}\label{lemma: 5.14}
Suppose the state $\rho$ satisfies Assumption~\ref{ass: 5.13}.
Define an operator
\begin{equation}\label{eqn: PreT} 
S( \rho ) := \st.\sum_{\alpha > \beta \ges 0} %
\sqrt{\tfrac{ \gamma_\alpha }{ \gamma_\beta - \gamma_\alpha }} \, %
\Pab \in B( \noise )_+,
\end{equation}
where the series converges in the strong sense, and, let
$C(\rho) := \sqrt{ I_\noise + S(\rho)^2 }$, then
\begin{equation}\label{eqn: RootTbar}
C(\rho) = \st.\sum_{\alpha > \beta \ges 0} %
\sqrt{\tfrac{ \gamma_\beta }{\gamma_\beta - \gamma_\alpha}} \, %
\Pab \quad \text{and} \quad  \ol{S(\rho)} = %
\st.\sum_{\alpha > \beta \ges 0} %
\sqrt{\tfrac{ \gamma_\alpha }{ \gamma_\beta - \gamma_\alpha }} \, %
\Pba.
\end{equation}
\end{lemma}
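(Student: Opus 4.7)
The plan is to reduce the lemma to one analytic input---a uniform bound on the coefficients appearing in $S(\rho)$---after which the three claims follow from routine computations with mutually orthogonal projections. The first (and essentially only) serious step is to apply Assumption~\ref{ass: 5.13}: writing $c_{\alpha\beta} := \sqrt{\gamma_\alpha / (\gamma_\beta - \gamma_\alpha)}$ for $\alpha > \beta$ and using the identity $c_{\alpha\beta}^2 = (\gamma_\beta / \gamma_\alpha - 1)^{-1}$, the telescoping estimate
\[
\gamma_\beta / \gamma_\alpha =
\prod_{i = \beta}^{\alpha - 1} \gamma_i / \gamma_{i + 1}
\ges m_\rho^{\alpha - \beta} \ges m_\rho
\]
yields $c_{\alpha\beta}^2 \les (m_\rho - 1)^{-1}$, uniformly in the pair $(\alpha, \beta)$. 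This is the main obstacle; without Assumption~\ref{ass: 5.13}, pairs of eigenvalues could cluster, $S(\rho)$ might fail to be bounded (or even densely defined), and one would have to invoke the more general machinery of [LM$_{1,2}$].

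Once the uniform bound is in hand, the second step exploits the fact that, by the very definition $\noise = \bigoplus_{\alpha > \beta \ges 0} \kab$ as an orthogonal direct sum, the projections $\Pab$ (with $\alpha > \beta$) are mutually orthogonal and their strong sum is $I_\noise$. Combined with the bound, this gives unconditional strong convergence of the series defining $S(\rho) \xi$ for each $\xi \in \noise$, with
\[
\norm{ S( \rho ) \xi }^2 =
\sum_{\alpha > \beta \ges 0} c_{\alpha\beta}^2 \norm{ \Pab \xi }^2 \les
( m_\rho - 1 )^{-1} \norm{ \xi }^2,
\]
and $S(\rho) \ges 0$ because it is a strong sum of non-negative multiples of mutually orthogonal projections. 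The same orthogonality yields $S(\rho)^2 = \st.\sum c_{\alpha\beta}^2 \, \Pab$, so that $I_\noise + S(\rho)^2 = \st.\sum \gamma_\beta / ( \gamma_\beta - \gamma_\alpha ) \, \Pab$, and the unique non-negative square root is then obtained coefficient-wise, giving the claimed formula for $C(\rho)$.

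For the identity for $\ol{ S( \rho ) }$, I would unwind the definition $\ol{ S( \rho ) } = k S( \rho ) k^{-1}$, using that the conjugation $k : \noise \to \kbar$ is the restriction of the adjoint operation on $HS(\particle)$. Consequently $k$ sends $\kab$ (with $\alpha > \beta$) bijectively and anti-unitarily onto $\kba \subseteq \kbar$, and for $T \in \kab$,
\[
\ol{ S( \rho ) } \, \ol{ T } =
k \bigl( S( \rho ) T \bigr) =
k \bigl( c_{\alpha\beta} T \bigr) =
c_{\alpha\beta} \, \ol{ T },
\]
so that $\ol{ S( \rho ) }$ acts as the scalar $c_{\alpha\beta}$ on each subspace $\kba \subseteq \kbar$. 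Summing over the pairs $\alpha > \beta$ (with convergence again justified by mutual orthogonality of the $\Pba$ in $\kbar$) yields the claimed strong-sum representation.
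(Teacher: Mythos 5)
Your proposal is correct and follows essentially the same route as the paper: the uniform bound $\gamma_\alpha/(\gamma_\beta-\gamma_\alpha)=(\gamma_\beta/\gamma_\alpha-1)^{-1}\les(m_\rho-1)^{-1}$ from Assumption~\ref{ass: 5.13}, strong convergence and functional calculus via the mutual orthogonality of the $\Pab$, and the identification $\ol{\Pab}=\Pba$ through the adjoint operation on $HS(\particle)$. The only cosmetic difference is your telescoping estimate for $\gamma_\beta/\gamma_\alpha$, where the paper simply uses monotonicity of the eigenvalues; both are fine.
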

\begin{proof}
If $\alpha, \beta \ges 0$ with $\alpha > \beta$, and
$\zeta\in \kil_{\alpha'}$ and $\eta \in \kil_{\beta'}$ with
$\alpha'$, $\beta' \in I$, then
\begin{subequations}
\label{eqn: 5.6}
\begin{align}
& \label{eqn: 5.6a}
0 \les \tfrac{ \gamma_\alpha }{  \gamma_\beta - \gamma_\alpha } = %
\big( \tfrac{ \gamma_\beta }{ \gamma_\alpha } - 1 \big)^{-1} \les %
\big( \tfrac{ \gamma_\beta }{ \gamma_{\beta + 1} } - 1 \big)^{-1} %
\les( m_\rho - 1 )^{-1}, \\
&\label{eqn: 5.6b}
1 + \tfrac{ \gamma_\alpha }{ \gamma_\beta - \gamma_\alpha } = %
\tfrac{ \gamma_\beta }{ \gamma_\beta - \gamma_\alpha } \\
\text{and} \quad & \label{eqn: 5.6c}
\ol{ \Pab } \big( \dyad{ \eta }{ \zeta } \big) = %
\big( \Pab  \big( \dyad{ \zeta }{ \eta } \big) \big)^* = %
\delta_{\alpha \alpha'} \delta_{\beta \beta'} %
\dyad{ \eta }{ \zeta } = %
\Pba  \big( \dyad{ \eta }{ \zeta } \big),
\end{align}
\end{subequations}
where $\delta$ is the Kronecker delta. From~(\ref{eqn: 5.6a}) it
follows that~(\ref{eqn: PreT}) defines a non-negative bounded operator
$S(\rho)$ on $\noise$, and from~(\ref{eqn: 5.6c}) it follows that
$\ol{ \Pab } = \Pba$ for all $\alpha > \beta \ges 0$, so the
identities~(\ref{eqn: RootTbar}) follow from~(\ref{eqn: 5.6b})
and~(\ref{eqn: 5.6c}).
\end{proof}

Thus, under Assumption~\ref{ass: 5.13}, with $S(\rho)$ and $C(\rho)$
as in the preceding lemma,
\begin{equation}
\label{eqn: Srho}
\Sigma(\rho) := \mat{ C(\rho) }{ 0 }{ 0 }{ \ol{ S(\rho) } }
\in
B( \noise \op \kbar )
\end{equation}
defines a gauge-invariant AW~amplitude for $\noise$.

Our goal now is to prove that the HP~cocycle generated by $F$ in
Theorem~\ref{thm: 5.11} is $\Sigma(\rho)$-quasifree, provided that the
interaction Hamiltonian $\Hint$ is $\particle$-conjugatable.  To this
end, note first that, for all $T \in B( \particle )$ and all indices
$\alpha, \beta, \alpha'$ and $\beta'$,
$P_\alpha T P_\beta$ and $P_{\alpha'} T P_{\beta'}$ are orthogonal vectors in
$HS( \particle )$ unless $\alpha' = \alpha$ and $\beta' = \beta$, and
therefore
\[
\sum_{\alpha > \beta \ges 0} ( \gamma_\beta - \gamma_\alpha ) %
\norm{ P_\alpha T P_\beta }_2^2 \les %
\sum_{\beta \ges 0} \gamma_\beta \norm{ T P_\beta }_2^2 \les %
\norm{ T }^2 \sum_{\beta \ges 0} \gamma_\beta d_\beta = \norm{ T }^2
\]
and
\[
\sum_{\alpha > \beta \ges 0} ( \gamma_\beta - \gamma_\alpha ) %
\norm{ P_\beta T P_\alpha }_2^2 = \sum_{\alpha > \beta \ges 0} %
( \gamma_\beta - \gamma_\alpha ) \norm{ P_\alpha T^* P_\beta }_2^2 %
\les \norm{ T^* }^2 = \norm{ T }^2,
\]
so the following prescriptions define bounded operators:
\begin{align*}
\phi_\rho & : B( \particle ) \to \ket{ \noise }; \ %
T \mapsto \sum_{\alpha > \beta \ges 0} %
\sqrt{ \gamma_\beta - \gamma_\alpha } \ket{ P_\alpha T P_\beta } \\
\text{and} \qquad
\ol{\phi}_\rho & : B( \particle ) \to \ket{ \kbar }; \ %
T \mapsto \sum_{\alpha > \beta \ges 0} %
\sqrt{ \gamma_\beta - \gamma_\alpha } \ket{ P_\beta T P_\alpha }.
\end{align*}

For the next proposition we adopt the notation
\begin{equation}\label{eqn: adj}
\Bcparticle ( \particle \ot \init )^* := %
\bigl\{ A^*: A \in \Bcparticle ( \particle \ot \init ) \bigr\}.
\end{equation}
Recall that Theorem~\ref{thm: 3.28} gives the inclusion
$HS( \particle ) \otalg B( \init ) \subseteq %
\Bcparticle ( \particle \ot \init )$.
We will show that the maps
$\phi_\rho |_{HS( \particle )} \otalg \id_{ B( \init ) }$
and
$\ol{\phi}_\rho |_{HS( \particle )} \otalg \id_{ B( \init ) }$
extend to operators from
$\Bcparticle ( \particle \ot \init )^*$
to~$B(\init; \noise \ot \init)$ and from
$\Bcparticle ( \particle \ot \init )^*$ to
$B(\init; \kbar \ot \init)$, respectively, and that the resulting maps
are related via partial conjugation.

\begin{propn}\label{propn: phirho}
There are unique operators
\[
\phiinitrho: \Bcparticle( \particle \ot \init )^* \to %
B( \init; \noise \ot \init ) \qquad \text{and} \qquad %
\phibarinitrho: \Bcparticle( \particle \ot \init )^* \to %
B( \init; \kbar \ot \init )
\]
such that
\begin{subequations}
\begin{align}
&
\label{eqn: adj a}
\bigl\langle \dyad{ \zeta }{ \eta } \ot u, %
\phiinitrho ( A ) v \bigr\rangle = %
\sqrt{ \gamma_\beta - \gamma_\alpha } \,
\bigl\langle \zeta \ot u , A ( \eta \ot v ) \bigr\rangle \\
\text{and} \qquad &
\label{eqn: adj b}
\bigl\langle \dyad{ \eta }{ \zeta } \ot u,  %
\phibarinitrho ( A ) v \bigr\rangle = %
\sqrt{ \gamma_\beta - \gamma_\alpha } \, %
\bigl\langle \eta \ot u , A ( \zeta \ot v ) \bigr\rangle
\end{align}
\end{subequations}
for all
$A \in \Bcparticle( \particle \ot \init )^*$, $u$, $v \in \init$
and $\zeta \in \noise_\alpha$, $\eta \in \noise_\beta$ with
$\alpha > \beta$. Furthermore, we have that
\[
\norm{ \phiinitrho ( A ) } \les c( A^* )
\quad \text{ and } \quad %
\norm{ \phibarinitrho ( A ) } \les c( A^* ) %
\qquad \text{for all } A \in \Bcparticle ( \particle \ot \init )^*,
\]
and the following properties hold.
\begin{alist}
\item
If $A \in \Bcparticle ( \particle \ot \init )^* \cap %
\Bcparticle ( \particle \ot \init )$ then
$\phiinitrho ( A )$ is $\noise$-conjugatable,
$\phibarinitrho ( A^* )$ is $\ol{\noise}$-conjugatable
and~$\phiinitrho ( A )^{\pc} = \phibarinitrho ( A^* )$.
Thus
\[
c\bigl( \phiinitrho( A ) \bigr)  = %
\norm{ \phibarinitrho ( A^* ) } \les c( A ) %
\qquad \text{ and } \qquad
c\bigl( \phibarinitrho ( A^* ) \bigr) = %
\norm{ \phiinitrho ( A ) } \les c( A^* ).
\]
\item The maps $\phiinitrho$ and $\phibarinitrho$ are extensions
of $\phi_\rho |_{HS( \particle )} \otalg \id_{ B( \init ) }$ and
$\ol{\phi}_\rho |_{HS( \particle )} \otalg \id_{ B( \init ) }$,
respectively.
\end{alist}
\end{propn}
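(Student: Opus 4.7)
The plan is to construct $\phiinitrho$ and $\phibarinitrho$ directly from prescribed matrix elements in a basis adapted to the eigendecomposition of $\vrho$; the norm bound will then reduce to a Hilbert--Schmidt estimate, and parts~(a) and~(b) will follow by essentially mechanical manipulations with those matrix elements. I would begin by fixing orthonormal bases $(\eia)_{i}$ for each eigenspace $\ka$, so that $\eijab := \dyad{\eia}{\ejb}$ furnishes an orthonormal basis of $\kab$ and, running over $\alpha > \beta \ges 0$, an orthonormal basis of $\noise = \bigoplus_{\alpha > \beta \ges 0} \kab$. Since the rank-one vectors $\dyad{\zeta}{\eta}$ with $\zeta \in \ka$, $\eta \in \kb$ and $\alpha > \beta$ are total in $\noise$, identities~(\ref{eqn: adj a}) and~(\ref{eqn: adj b}) will determine $\phiinitrho(A)$ and $\phibarinitrho(A)$ uniquely once existence is established.

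For existence, given $A \in \Bcparticle(\particle \ot \init)^*$ and $v \in \init$, I would set $T_v := A(I_\particle \ot \ket{v})$, which by Theorem~\ref{thm: 3.28}(a) lies in $HS(\particle; \particle \ot \init)$ with $\norm{T_v}_2 \les c(A^*) \norm{v}$, and declare
\begin{equation*}
\phiinitrho(A) v := \sum_{\alpha > \beta \ges 0} \sum_{i,j} \sqrt{\gamma_\beta - \gamma_\alpha} \, \eijab \ot (\bra{\eia} \ot I_\init) T_v \ejb,
\end{equation*}
with $\phibarinitrho(A) v$ defined by the mirror expression summed over $\alpha < \beta$. Convergence and the stated bound would follow from
\begin{align*}
\norm{\phiinitrho(A) v}^2
&= \sum_{\alpha > \beta \ges 0} (\gamma_\beta - \gamma_\alpha) \norm{(P_\alpha \ot I_\init) T_v P_\beta}_2^2 \\
& \les \sum_{\alpha, \beta \ges 0} \gamma_\beta \norm{(P_\alpha \ot I_\init) T_v P_\beta}_2^2 = \sum_{\beta \ges 0} \gamma_\beta \norm{T_v P_\beta}_2^2 = \norm{T_v \vrho^{1/2}}_2^2 \les \norm{T_v}_2^2 \les c(A^*)^2 \norm{v}^2,
\end{align*}
using $\gamma_\beta - \gamma_\alpha \les \gamma_\beta$, the completeness relation $\sum_\alpha P_\alpha = I_\particle$, and the estimate $\norm{T\vrho^{1/2}}_2 \les \norm{\vrho^{1/2}} \norm{T}_2$. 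The identity~(\ref{eqn: adj a}) is then immediate by pairing with $\dyad{\zeta}{\eta} \ot u$, and the argument for $\phibarinitrho$ is identical.

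For part~(a), I would fix $y = \dyad{\zeta}{\eta} \in \kab$ with $\alpha > \beta$, so $\ol y = \dyad{\eta}{\zeta} \in \kba$. For $u, v \in \init$, applying~(\ref{eqn: adj b}) and~(\ref{eqn: adj a}) respectively shows that both
\begin{equation*}
\bigl\langle \dyad{\eta}{\zeta} \ot u, \phibarinitrho(A^*) v \bigr\rangle \quad \text{and} \quad \bigl\langle \phiinitrho(A) u, \dyad{\zeta}{\eta} \ot v \bigr\rangle
\end{equation*}
collapse to $\sqrt{\gamma_\beta - \gamma_\alpha} \bigl\langle A(\eta \ot u), \zeta \ot v \bigr\rangle$. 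By totality, this verifies the identity $(\bra{\ol y} \ot I_\init) \phibarinitrho(A^*) = \phiinitrho(A)^*(\ket{y} \ot I_\init)$ of Theorem~\ref{thm: 3.28}(a), yielding $\phiinitrho(A)^\pc = \phibarinitrho(A^*)$ and the stated norm identities. For part~(b), on an elementary tensor $A = T \ot X$ with $T \in HS(\particle)$, both sides of~(\ref{eqn: adj a}) unfold via the Hilbert--Schmidt pairing $\langle \dyad{\zeta}{\eta}, P_\alpha T P_\beta\rangle = \langle \zeta, T\eta\rangle$ to $\sqrt{\gamma_\beta - \gamma_\alpha} \langle \zeta, T\eta\rangle \langle u, Xv\rangle$, confirming agreement on the algebraic tensor product.

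The main obstacle is the norm estimate: one must spot that the weight $\sqrt{\gamma_\beta - \gamma_\alpha}$, once squared and dominated by $\gamma_\beta$, conspires with the completeness of the projections to yield the Hilbert--Schmidt norm $\norm{T_v \vrho^{1/2}}_2^2$, which is precisely the quantity controlled by the hypothesis $c(A^*) < \infty$. This exact matching is what makes partial-conjugatability of $A^*$, rather than of $A$, the natural hypothesis for the construction.
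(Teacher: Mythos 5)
Your proposal is correct and follows essentially the same route as the paper's proof: define $\phiinitrho(A)v$ and $\phibarinitrho(A)v$ by their matrix elements in bases adapted to the eigenspaces $\ka$, obtain boundedness from the estimate $\gamma_\beta - \gamma_\alpha \les \gamma_\beta$ together with completeness of the projections and the hypothesis $c(A^*) < \infty$, deduce (a) by matching the matrix elements required in Theorem~\ref{thm: 3.28}(a), and verify (b) on elementary tensors. Your packaging of the existence step via $T_v = A(I_\particle \ot \ket{v}) \in HS(\particle; \particle \ot \init)$ and the identity $\sum_\beta \gamma_\beta \norm{T_v P_\beta}_2^2 = \norm{T_v \vrho^{1/2}}_2^2$ is only a cosmetic reformulation of the paper's computation.
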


\begin{proof}
Uniqueness is clear. For existence, let
$A \in \Bcparticle ( \particle \ot \init )^*$.
For each $\alpha \ges 0$, choose an orthonormal basis
$\big( e^i_\alpha \big)_{i=1}^{d_\alpha}$ for $\noise_\alpha$, and
note that if $u \in \init$ then
\[
\sum_{\alpha > \beta \ges 0} ( \gamma_\beta - \gamma_\alpha ) \, %
\sum_{i=1}^{d_\alpha} \sum_{j=1}^{d_\beta} \bigl\| %
( \bra{ \eia } \ot I_\init ) A ( \ejb \ot u ) \bigr\|^2 \les %
\sum_{ \beta \ges 0 } \gamma_\beta \, \sum_{j=1}^{d_\beta}
\big\|  A ( \ejb \ot u ) \big\|^2 \les %
c( A^* ) \norm{ u }^2.
\]
Hence
\[
u \mapsto %
\sum_{\alpha > \beta \ges 0} \sqrt{ \gamma_\beta - \gamma_\alpha } \, %
\sum_{i=1}^{d_\alpha} \sum_{j=1}^{d_\beta} \ket{ \eia } \bra{ \ejb } %
\ot \bigl( \bra{ \eia } \ot I_\init \bigr) A ( \ejb \ot u )
\]
defines an operator  $\phiinitrho (A)$ from $\init$ to
$\noise \ot \init$ such that
$\norm{ \phiinitrho ( A ) } \les c( A^* ) $; it also
satisfies~(\ref{eqn: adj a}) since, for all $u$, $v \in \init$,
$\zeta \in \noise_\alpha$ and $\eta \in \noise_\beta$, where
$\alpha > \beta$,
\begin{align*}
\bigl\langle \dyad{ \zeta }{ \eta } \ot u, %
\phiinitrho ( A ) v \big\rangle & = %
\sqrt{ \gamma_\beta - \gamma_\alpha } \, %
\bigl\langle u , \bigl( \bra{ \zeta } \ot I_\init \bigr) %
A \bigl( \ket{ \eta } \ot I_\init \bigr) v  \bigr\rangle \\
 & = \sqrt{ \gamma_\beta - \gamma_\alpha } \, %
\bigl\langle \zeta \ot u , A ( \eta \ot v ) \bigr\rangle.
\end{align*}
In particular, the operator $\phiinitrho ( A )$ does not depend on the
choice of orthonormal bases made above.  Similarly, there is an
operator $\phibarinitrho (A)$ from $\init$ to $\kbar \ot \init$
such that $\norm{ \phibarinitrho ( A ) } \les c( A^* )$, the
identity~(\ref{eqn: adj b}) holds and, for any choice of orthonormal
bases $\bigl( e^i_\alpha \bigr)_{i=1}^{d_\alpha}$ for $\noise_\alpha$,
\[
\phibarinitrho ( A ) u = \sum_{\alpha > \beta} %
\sqrt{ \gamma_\beta - \gamma_\alpha } \, %
\sum_{i=1}^{d_\alpha} \sum_{j=1}^{d_\beta} \dyad{ \ejb }{ \eia } \ot %
\bigl( \bra{ \ejb } \ot I_\init \bigr) A ( \eia \ot u ).
\]
(a) If
$A \in \Bcparticle ( \particle \ot \init )^* \cap %
\Bcparticle ( \particle \ot \init )$, $u$, $v \in \init$ and
$\zeta \in \noise_\alpha$, $\eta \in \noise_\beta$ with
$\alpha > \beta$, then
\begin{align*}
\bigl\langle \phiinitrho ( A ) u, %
\dyad{ \zeta }{ \eta } \ot v \bigr\rangle & = %
\sqrt{ \gamma_\beta - \gamma_\alpha } \, %
\bigl\langle A ( \eta \ot u ), \zeta \ot v \bigr\rangle \\
 & = \sqrt{ \gamma_\beta - \gamma_\alpha } \, %
\bigl\langle \eta \ot u , A^* ( \zeta \ot v ) \bigr\rangle = %
\bigl\langle \dyad{ \eta }{ \zeta } \ot u, %
\phibarinitrho ( A^* ) v \bigr\rangle.
\end{align*}
Therefore, by linearity,
\[
\bigl\langle \phiinitrho ( A ) u, T \ot v \bigr\rangle = %
\bigl\langle \ol{ T } \ot u, \phibarinitrho ( A^* ) v \bigr\rangle %
\qquad \text{for all } u, v \in \init \text{ and } T \in \noise,
\]
so $\phiinitrho ( A )$ is $\noise$-conjugatable and
$\phiinitrho ( A )^\pc = \phibarinitrho ( A^* )$.

(b) Let $T \in HS( \particle )$ and $X \in B( \init )$. Then
$T \ot X \in \Bcparticle ( \particle \ot \init )^* \cap %
\Bcparticle ( \particle \ot \init )$,
by Theorem~\ref{thm: 3.28}. Comparing matrix elements, the identities
\[
\phi_\rho ( T ) \ot X = \phiinitrho ( T \ot X )  \quad \text{ and } \quad %
\ol{\phi}_\rho ( T ) \ot X = \phibarinitrho ( T \ot X )
\]
are readily verified, so (b) follows.
\end{proof}

Recall that a countable family of bounded operators $\mathcal{C}$ is
said to be \emph{strongly linearly independent} if there is no
non-zero function $\alpha: \mathcal{C} \to \Comp$ such that
$\sum_{T \in \mathcal{C} } \alpha ( T ) T$ converges to zero in the
strong sense.

\begin{thm}
\label{thm: 5.15}
Let $\Hsys \in B(\init)_{\sa}$, $\Hpar \in B( \particle )_{\sa}$ and
$\Hint \in B( \particle \ot \init )_{\sa}$, where
$( \Pa \ot I_\init ) \Hint  ( \Pa \ot I_\init ) = 0$ for all
indices $\alpha$ and, as in Theorem~\ref{thm: 5.11}, set
$F = \left[\begin{smallmatrix} K & - L^* \\
L  & 0 \end{smallmatrix}\right] \in
B\big( \wh{ \Kil_1  } \ot \init \big)_0$ where
\[
\Kil_1 = \noise \op \kbar, \quad %
K = \ri \Hsys + \ri \rho( \Hpar ) I_\init - %
\half \rhotilde ( \Hint^2 ) \quad \text{and} \quad %
L = \ri J^* \wt{ \pi }( \Hint ) %
\bigl( \ket{ \omega } \ot I_\init \bigr);
\]
$J$ being the natural isometry from $\noise \op \kbar$ to
$\Comp \op ( \noise \op \kbar ) \op \Kil_0$.

Suppose that the state $\rho$ satisfies Assumption~\ref{ass: 5.13},
and the operator $\Hint \in B( \particle \ot \init )$ is
$\particle$-conjugatable. Then the  HP~cocycle $U := \YF$ is
$\Sigma(\rho)$-quasifree with  $\Sigma(\rho)$-generator
$\smallgaussintegrand{K}{ Q }{ -Q^* }$, where
$Q = \ri \phiinitrho ( \Hint )$.

Suppose further that, with respect to some orthonormal bases
$\big( e^i_\alpha \big)_{i=1}^{d_\alpha}$ for each $\noise_\alpha$,
the family
$\bigl\{ ( \bra{ \eia } \ot I_\init ) \Hint %
( \ket{ \ejb } \ot I_\init ) : \alpha > \beta \ges 0 \ %
i = 1, \ldots , d_\alpha, \  j = 1, \cdots , d_\beta \bigl\}$
is strongly linearly independent. Then $\Sigma(\rho)$ is the unique
gauge-invariant AW~amplitude with respect to which the HP~cocycle $U$ is quasifree.
\end{thm}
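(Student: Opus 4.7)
The plan is to establish $\Sigma(\rho)$-quasifreeness of $U = \YF$ via Corollary~\ref{cor: 3.11}, and then to deduce the uniqueness claim from Corollary~\ref{cor: XiAQ}.

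For the quasifreeness, Proposition~\ref{propn: phirho}(a), applied to the self-adjoint, $\particle$-conjugatable operator $\Hint$, shows at once that $Q := \ri\, \phiinitrho(\Hint)$ is $\noise$-conjugatable, with $Q^\pc = -\ri\, \phibarinitrho(\Hint)$. In view of Corollary~\ref{cor: 3.11}, the quasifreeness then reduces to verifying the identity $L = (\Sigma(\rho) \otimes I_\init)\col{Q}{-Q^\pc}$, since the structural relation $K + K^* + L^* L = 0$ required to exhibit $\smallgaussintegrand{K}{Q}{-Q^*}$ as a $\Sigma(\rho)$-generator is immediate from the corresponding property of $F$ already established in Theorem~\ref{thm: 5.11}. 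This identity splits as $L_1 = (C(\rho) \otimes I_\init)\, Q$ and $L_2 = -(\ol{S(\rho)} \otimes I_\init)\, Q^\pc$, and I would verify each through a matrix-element computation. Expanding $L = \ri\, J^* \wt{\pi}(\Hint)(\ket{\omega} \otimes I_\init)$, writing $\Hint$ as an ultraweak limit of elementary tensors, and using $\vrho^{1/2} \eta = \sqrt{\gamma_\beta}\, \eta$ for $\eta \in \kb$ together with cyclicity of the Hilbert--Schmidt inner product, one obtains
\[
\bigl\langle \dyad{\zeta}{\eta} \otimes u,\, L_1 v \bigr\rangle = \ri\, \sqrt{\gamma_\beta}\, \bigl\langle \zeta \otimes u,\, \Hint(\eta \otimes v) \bigr\rangle
\]
for $\zeta \in \ka$, $\eta \in \kb$, $u, v \in \init$ and $\alpha > \beta$, together with the analogous formula for $L_2$ when $\alpha < \beta$. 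Combining this with Lemma~\ref{lemma: 5.14}, which identifies the scaling factors produced by $C(\rho)$ and $\ol{S(\rho)}$ on the relevant eigenspaces, and with the defining identities~(\ref{eqn: adj a}) and~(\ref{eqn: adj b}) for $\phiinitrho$ and $\phibarinitrho$, each of the required equalities reduces to the scalar cancellation $\sqrt{\gamma_\beta/(\gamma_\beta - \gamma_\alpha)}\, \sqrt{\gamma_\beta - \gamma_\alpha} = \sqrt{\gamma_\beta}$ and its sibling obtained by exchanging the roles of $\alpha$ and $\beta$.

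For the uniqueness claim, Corollary~\ref{cor: XiAQ} reduces the task to establishing $\noise^{L_1} = \{ 0 \}$. Given $x \in \noise^{L_1}$, I would expand $x = \sum_{\alpha > \beta,\, i,\, j} c_{\alpha\beta ij}\, \dyad{\eia}{\ejb}$ in the orthonormal basis of $\noise$ provided by the chosen bases $(\eia)_{i = 1}^{d_\alpha}$. The matrix-element formula for $L_1$ derived above then converts the condition $(\bra{x} \otimes I_\init) L_1 = 0$ into the strong-operator identity
\[
\sum_{\alpha > \beta,\, i,\, j} \ol{c_{\alpha\beta ij}}\, \sqrt{\gamma_\beta}\, (\bra{\eia} \otimes I_\init)\, \Hint\, (\ket{\ejb} \otimes I_\init) = 0.
\]
Strong linear independence of the lower-triangular matrix components of $\Hint$, together with the positivity of each $\gamma_\beta$, then forces every coefficient $c_{\alpha\beta ij}$ to vanish, whence $x = 0$.

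The main obstacle I anticipate is the matrix-element computation underpinning the first part: one must correctly interpret $\wt{\pi}(\Hint)(\ket{\omega} \otimes I_\init)$ as an element of $HS(\particle) \otimes \init$ built from the action of $\Hint$ on $\particle \otimes \init$, reconcile the various strong-operator sums defining $\phiinitrho$, $\phibarinitrho$, $C(\rho)$ and $\ol{S(\rho)}$ when these are combined, and keep careful track of the relabellings of $\alpha$ and $\beta$ required to pass between the upper-triangular and lower-triangular halves of $\Khat$ (and hence between the formulas for $Q$ and $Q^\pc$).
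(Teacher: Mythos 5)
Your proposal is correct and follows essentially the same route as the paper: reduction via Corollary~\ref{cor: 3.11} to the identity $L = (\Sigma(\rho) \otimes I_\init)\smallcol{Q}{-Q^\pc}$, verified componentwise by the matrix-element computation combining Proposition~\ref{propn: phirho} with Lemma~\ref{lemma: 5.14} and the cancellation $\sqrt{\gamma_\beta/(\gamma_\beta-\gamma_\alpha)}\,\sqrt{\gamma_\beta-\gamma_\alpha}=\sqrt{\gamma_\beta}$, followed by the deduction of $\noise^{L_1}=\{0\}$ from strong linear independence and an appeal to Corollary~\ref{cor: XiAQ}. The anticipated obstacles you list are exactly the bookkeeping the paper carries out, and none of them hides a genuine difficulty.
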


\begin{proof}
Since $U$ is a Gaussian HP~cocycle with stochastic generator
$\smallgaussintegrand{K}{ L}{ - L^* }$,
Corollary~\ref{cor: 3.11} implies that, for the first part, it
suffices to verify the identity
\[
L = ( \Sigma ( \rho ) \ot I_\init ) %
\begin{bmatrix} Q \\ - Q^\pc \end{bmatrix}.
\]
Since $\Hint$ is a self-adjoint $\particle$-conjugatable operator, by
assumption, Proposition~\ref{propn: phirho} ensures that the
operators~$Q$ and~$R := \ri \phibarinitrho ( \Hint )$ are well defined
and conjugatable, with
$Q^\pc = -\ri \phiinitrho ( \Hint )^\pc = -R$. Thus,
for all $\zeta \in \noise_\alpha$ and
$\eta \in \noise_\beta$ with $\alpha > \beta$, and
all $u$, $v \in \init$,
\[
\bigl\langle \dyad{ \zeta }{ \eta } \ot u, Q v \bigr\rangle = %
\ri \sqrt{ \gamma_\beta - \gamma_\alpha } \, %
\bigl\langle \zeta , \Huv \eta  \bigr\rangle %
\quad \text{and} \quad %
\bigl\langle \dyad{ \eta }{ \zeta } \ot u, Q^\pc v \bigr\rangle = %
-\ri \sqrt{ \gamma_\beta - \gamma_\alpha } \, %
\bigl\langle \eta , \Huv \zeta  \bigr\rangle,
\]
where
$\Huv := %
( I_\particle \ot \bra{ u } ) \Hint ( I_\particle \ot \ket{ v } )$.
Hence,
by Lemma~\ref{lemma: 5.14},
\[
\bigl\langle \dyad{ \zeta }{ \eta } \ot u, %
C( \rho ) Q v \bigr\rangle = \ri \sqrt{ \gamma_\beta } \, %
\bigl\langle \zeta , \Huv \eta \bigr\rangle %
\quad \text{and} \quad %
\bigl\langle \dyad{ \eta }{ \zeta } \ot u, %
\ol{ S(\rho) } Q^\pc v \bigr\rangle = %
-\ri \sqrt{ \gamma_\alpha } \, %
\bigl\langle \eta , \Huv \zeta \bigr\rangle.
\]
On the other hand, by definition, the operator $L$ is such that
\[
\ip{ \chi \ot u }{ L v } = \ri \ip{ \chi }{ \Huv \omega } \qquad %
\text{for all } u, v \in \init \text{ and } %
\chi \in \noise \op \kbar.
\]
Thus, in terms of the block-matrix decomposition
$L = \smallcol{ L_1 }{ L_2 } \in %
B( \init; ( \noise \op \kbar ) \ot \init )$,
\[
L = %
\col{ ( C( \rho ) \ot I_\init ) Q }%
{ - ( \ol{ S( \rho ) } \ot I_\init ) Q^\pc } = %
( \Sigma ( \rho ) \ot I_\init ) \col{Q}{ - Q^\pc },
\]
as required.

Finally, for each index $\alpha$,
let $( \eia )_{i=1}^{d_\alpha}$ be an orthonormal
basis for $\ka$
and,
for indices $\alpha$ and $\beta$, set
$\eijab := \dyad{ \eia }{ \ejb }$ for  all
$i = 1$, \ldots, $d_\alpha$ and $j = 1$, \ldots, $d_\beta$. Then,
for all~$x \in \noise \setminus \{ 0 \}$, the family
$\bigl\{ \ip{ x }{ \eijab } : %
\alpha > \beta \ges 0, \ i = 1, \ldots, d_\alpha, \ %
j = 1, \ldots , d_\beta \bigr\}$ is not identically zero
and so, under the strong linear independence assumption,
\[
( \bra{ x } \ot I_\init ) L_1 = \st.\sum_{\alpha > \beta \ges 0} %
\sum_{i=1}^{d_\alpha} \sum_{j=1}^{d_\beta} \sqrt{ \gamma_\beta } \, %
\ip{ x }{ \eijab } \bigl( \bra{ \eia } \ot I_\init \bigr) %
\ri H_I \bigl( \ket{ \ejb } \ot I_\init \bigr)  \neq 0.
\]
In other words $\noise^{L_1} = \{ 0 \}$ and therefore,
by Corollary~\ref{cor: XiAQ}, there is no other gauge-invariant
AW~amplitude $\Sigma$ with respect to which the HP~cocycle $U$ is
$\Sigma$-quasifree.
\end{proof}

\begin{rem}
Theorems~\ref{thm: 5.11} and~\ref{thm: 5.15} comprise a significant
generalisation of the main result of~\cite[Theorem 7]{AtJ}. The
restriction to finite-dimensional noise or particle space, is removed,
and the interaction Hamiltonian is of a more general form. In
\cite{AtJ}, the operator~$\Hint$ is taken to have the form
$\smallgaussintegrand{0}{V}{V^*}$ so that $\ri \Hint$ is of the above form
with $L = \ri V$. This assumption corresponds to
the $\Sigma(\rho)$-quasifree generator $\smallQFintegrand{K}{Q}{-Q^*}$
satisfying
\[
\bigl( \bra{ \ejk } \ot I_\init \bigr) Q = 0 \qquad
\text{for all } j > k > 0.
\]
In conclusion, a large class of unitary quantum random walks, with
particles in a faithful normal state, converge to HP~cocycles governed
by a quasifree quantum Langevin equation.

The results in this section could be applied to bipartite systems,
as studied in \cite{ADP}, in non-zero temperature.
In this model, two non-interacting quantum systems
are both coupled to an environment comprising an infinite chain of
identical and independent particles, with each particle now in the same
non-zero temperature state.
For the zero temperature case see
\cite[Theorem 3.1]{ADP} and \cite[Theorem 8.2]{BGL};
the methods developed in \cite{BGL} adapt nicely to the quasifree context.
\end{rem}

\setcounter{section}{0}
\renewcommand{\theequation}{\Alph{section}.\arabic{equation}}
\renewcommand{\thesection}{\Alph{section}}
\renewcommand{\thepropn}{\Alph{section}.\arabic{propn}}
\section*{Appendix}
\setcounter{section}{1} \setcounter{propn}{0} \setcounter{equation}{0}

In this appendix, we prove that symplectic automorphisms of a Hilbert
space~$\hil$ are necessarily bounded, and give a parameterisation for
the elements of the group $S(\hil)^\times$.  For the convenience
of the reader, this is a streamlined version of the proof given
in~\cite{HoR}, which also covers the case of unbounded symplectic
automorphisms of separable pre-Hilbert spaces.

\begin{propn}\label{propn: B bdd}
Let $B \in S(\hil)^\times$. Then $B$ is bounded.
\end{propn}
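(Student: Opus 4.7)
The plan is to apply the closed graph theorem, viewing $\hil$ as a real Banach space and $B$ as a real-linear map between real Banach spaces. Since both real-linear maps $B$ and its inverse exist, and the closed graph theorem is valid in the real-Banach-space setting, it suffices to show that the graph of $B$ is closed.

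So suppose $(x_n)$ is a sequence in $\hil$ with $x_n \to x$ and $B x_n \to y$. I want to show $y = Bx$. The defining symplectic identity gives
\[
\im \ip{ B x_n }{ B z } = \im \ip{ x_n }{ z } \qquad \text{for every } z \in \hil.
\]
Letting $n \to \infty$ and applying the symplectic identity again on the right-hand side yields
\[
\im \ip{ y }{ B z } = \im \ip{ x }{ z } = \im \ip{ B x }{ B z },
\]
so $\im \ip{ y - Bx }{ B z } = 0$ for all $z \in \hil$. This is the step where surjectivity of $B$ plays a role: as $z$ ranges over $\hil$, $Bz$ ranges over all of $\hil$, so $\im \ip{ v }{ w } = 0$ for all $w \in \hil$, where $v := y - Bx$.

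Specialising to $w = \ri v$ and using that the inner product is linear in the second argument, I obtain $\im \ip{ v }{ \ri v } = \im ( \ri \norm{v}^2 ) = \norm{v}^2 = 0$, so $v = 0$ and thus $Bx = y$. The graph is therefore closed, and the closed graph theorem gives boundedness of $B$.

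The conceptually delicate point is merely to recall that the closed graph theorem applies to the real-linear operator $B$ on the real Banach space underlying $\hil$; beyond that, the argument is a short chain of manipulations of the symplectic identity combined with the surjectivity of $B$ (needed both to conclude $\im\ip{v}{w}=0$ for all $w$, and tacitly via the hypothesis that $B\in S(\hil)^\times$). No appeal to the inverse $B^{-1}$ is required other than its existence, so the proof does not rely on any a priori regularity of $B^{-1}$.
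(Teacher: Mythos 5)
Your argument is correct, and it reaches the closed graph theorem by a genuinely more direct route than the paper. The paper first splits $B$ into its linear and conjugate-linear parts $L$ and $A$, and then shows by an explicit computation with the symplectic identity that each of $L$ and $A$ admits an everywhere-defined adjoint expressed in terms of $B^{-1}$, namely $x \mapsto \half \bigl( B^{-1} x \mp \ri \, B^{-1}( \ri x ) \bigr)$; closedness, and hence boundedness, of $L$ and $A$ --- and so of $B$ --- then follows. You instead verify closedness of the graph of $B$ itself: if $x_n \to x$ and $B x_n \to y$, then the symplectic identity and continuity of the inner product give $\im \ip{ y - Bx }{ Bz } = 0$ for all $z$, surjectivity of $B$ upgrades this to $\im \ip{ v }{ w } = 0$ for all $w$, where $v := y - Bx$, and the choice $w = \ri v$ forces $\norm{v}^2 = 0$. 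Both arguments are applications of the closed graph theorem for real-linear operators on the underlying real Banach space, and both use the invertibility of $B$ in an essential way. What your version buys is brevity and the avoidance of the decomposition $B = L + A$ and of adjoint computations; what the paper's version buys is the explicit identification of $L^*$ and $A^*$ in terms of $B^{-1}$, which is reused verbatim in part (b) of the parameterisation theorem that follows, where one needs to know that $L^*$ and $-A^*$ are the linear and conjugate-linear parts of $B^{-1}$. As a self-contained proof of the proposition, yours is complete.
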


\begin{proof}
Let $L$ and $A$ be the linear and conjugate-linear parts of $B$, as in
(\ref{eqn: LA}). For all $z$, $x \in \hil$,
\begin{align*}
2 \ip{ Lz }{ x } & = %
\bigl\langle B z - \ri B( \ri z), x \bigr\rangle \\
 & = \re \ip{ Bz }{ x } + \ri \im \ip{ Bz }{ x } + %
\ri \re \ip{ B( \ri z ) }{ x } - \im \ip{ B(\ri z ) }{ x } \\
 & = \im \ip{ Bz }{ \ri x } + \ri \im \ip{ z }{ B^{-1} x } + %
\ri \im \ip{ B( \ri z ) }{ \ri x } - \im \ip{ \ri z }{ B^{-1} x } \\
 & = \im \ip{ z }{ B^{-1}(\ri x) } + \ri \im \ip{ z }{ B^{-1} x } + %
\ri \im \ip{ \ri z }{ B^{-1}( \ri x ) } + \re \ip{ z }{ B^{-1} x } \\
 & = \ip{ z }{ B^{-1} x } - \ri \bigl( %
\re \ip{ z }{ B^{-1}( \ri x ) } +  \ri \im \ip{ z }{ B^{-1}(\ri x) } %
\bigr) \\
 & = \ip{ z }{ B^{-1} x } - \ri  \ip{ z }{ B^{-1}( \ri x ) }.
\end{align*}
Thus $L$ has everywhere-defined adjoint
$x \mapsto \half \bigl( B^{-1} x - \ri B^{-1} ( \ri x ) \bigr)$,
and so is closed, and therefore bounded, by the closed graph theorem.
Similarly, the conjugate-linear operator $A$ has everywhere-defined
adjoint
$x \mapsto -\half \bigl( B^{-1} x + \ri B^{-1} ( \ri x ) \bigr)$,
and so is also bounded. Thus $B$ is bounded.
\end{proof}

For a triple $(V,C,P)$ consisting of a unitary operator $V$ on $\hil$,
a bounded non-negative operator~$P$ on $\hil$ and a conjugation
(a self-adjoint anti-unitary operator) $C$ on $\hil$, such that $P$
and $C$ commute, we define the following bounded real-linear operator
on~$\hil$:
\begin{equation}\label{eqn: BUCP}
\BVCP :=  V( \cosh P - C \sinh P  )
\end{equation}

\begin{rem}
Since, with $(V,C,P)$ as above, the map $-C$ is also a conjugation
on~$\hil$ that commutes with $P$, a deliberate choice is being made
here. The reason for this particular choice is that it eliminates
minus signs elsewhere.
\end{rem}

\begin{thm}\label{thm: A1}
\mbox{}\par
\begin{alist}
\item  Let $(V,C,P)$ be a triple as above.
\begin{rlist}
\item The operator $\BVCP$ is a symplectic automorphism,  with bounded
inverse
\[
( \cosh P + C \sinh P ) V^* = \BVCPinverse.
\]
\item   Suppose that $\BVCP = \BVCPprime$ for another such triple
$(V',C',P')$. Then
\[
V' = V, \quad P' = P \quad %
\text{and $C'$ agrees with $C$ on $\Ran P$}.
\]
\end{rlist}
\item Conversely, let $B \in S( \hil )^\times$. Then there is a
triple $(V,C,P)$ as above, such that $B = \BVCP$.
\end{alist}
\end{thm}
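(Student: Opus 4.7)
The proof splits naturally into (a)(i), (a)(ii), and (b), with the real work in (b).

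For (a)(i), I will verify the explicit inverse formula by the computation
\[
(\cosh P - C\sinh P)(\cosh P + C\sinh P) = \cosh^2 P - (C\sinh P)^2 = \cosh^2 P - \sinh^2 P = I,
\]
where the cross terms $\pm\cosh P \cdot C\sinh P$ cancel because $C$ commutes (via functional calculus) with $\cosh P$ and $\sinh P$, while $(C\sinh P)^2 = C^2 \sinh^2 P = \sinh^2 P$ using anti-linearity of $C$ and $C^2 = I$. Composing with $V$ gives the inverse $(\cosh P + C\sinh P)V^*$, which rearranges to $B_{V^*, -VCV^*, VPV^*}$; the new triple inherits conjugacy, non-negativity and commutation from $(V, C, P)$. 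Symplecticity of $\cosh P - C\sinh P$ (hence of $B_{V,C,P}$, since $V$ is unitary) follows from a direct expansion: the diagonal contributions to $\im\langle(\cosh P - C\sinh P)x, (\cosh P - C\sinh P)y\rangle$ give $(\cosh^2 P - \sinh^2 P)\im\langle x, y\rangle = \im\langle x, y\rangle$ (using anti-unitarity of $C$), while the mixed terms cancel using self-adjointness of $C$ and its commutation with $P$.

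For (a)(ii), the decomposition $B = L + A$ into complex-linear and conjugate-linear parts via (\ref{eqn: LA}) is unique. For $B = B_{V,C,P}$ it gives $L = V\cosh P$ and $A = -VC\sinh P$. Then $L^*L = \cosh^2 P$, so matching $\cosh^2 P = \cosh^2 P'$ and using that $\cosh$ is a bijection from $[0, \infty)$ to $[1, \infty)$ yields $P = P'$; the invertibility $\cosh P \ges I$ then gives $V = V'$. Finally $V(C - C')\sinh P = 0$ forces $C = C'$ on $\Ran \sinh P = \Ran P$, the latter equality by continuous functional calculus.

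For (b), given $B \in S(\hil)^\times$, write $B = L + A$ (both bounded by Proposition B.1). With $J$ multiplication by $\ri$, the symplectic condition $B^*JB = J$ splits via the rules $JL = LJ$, $JA = -AJ$ into the complex-linear part $L^*L - A^*A = I$ and the anti-linear part $L^*A = A^*L$, where $A^*$ denotes the anti-linear adjoint. A short calculation yields $B^{-1} = -JB^*J = L^* - A^*$, and then $BB^{-1} = I$ provides the companion pair $LL^* - AA^* = I$ and $LA^* = AL^*$. The bounds $L^*L \ges I$ and $LL^* \ges I$ make $L$ bounded below with dense range, hence invertible; polar decompose $L = V|L|$ with $V$ unitary and set $P := \cosh^{-1}|L| \in B(\hil)_+$ (well-defined since $|L| = \sqrt{I + A^*A} \ges I$). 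Then $L = V\cosh P$ and $A^*A = \sinh^2 P$. Define $C$ on $\Ran \sinh P$ by $C(\sinh P\, x) := -V^*Ax$; this is well-defined, anti-linear, and isometric since $\|\sinh P \, x\|^2 = \langle x, A^*Ax\rangle = \|Ax\|^2$, so it extends by continuity to an anti-linear isometry on $\overline{\Ran \sinh P}$. The identity $AA^* = V\sinh^2 P\, V^*$ shows $\overline{\Ran A} = V\overline{\Ran \sinh P}$, so $\Ran C \subseteq \overline{\Ran \sinh P}$ and $C$ is anti-unitary on $\overline{\Ran \sinh P}$. The remaining conjugation properties $C^2 = I$ and $CP = PC$ on this subspace are extracted from the anti-linear identities $L^*A = A^*L$ and $LA^* = AL^*$, which after substituting $L = V\cosh P$ and the tentative $C\sinh P = -V^*A$ encode precisely these two facts. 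Finally, $C$ is extended to $\Ker P = (\overline{\Ran P})^\perp$ by an arbitrary conjugation there (available from any orthonormal basis), which trivially commutes with $P = 0$ on this subspace; the resulting $C$ is a conjugation on $\hil$ commuting with $P$, and $B = V\cosh P - VC\sinh P = B_{V,C,P}$ by construction.

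The main obstacle is the last step of (b), namely the extraction of the involution property $C^2 = I$ and the commutation $CP = PC$ on $\overline{\Ran \sinh P}$ from the anti-linear symplectic identities. The calculation is not deep but requires careful bookkeeping because it mixes complex-linear and anti-linear operators; the key is that each of the two identities $L^*A = A^*L$ and $LA^* = AL^*$ translates, via the already-established $L = V\cosh P$ and $A^*A = \sinh^2 P$, into one half of the required conjugation structure.
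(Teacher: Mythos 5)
Your proposal is correct and follows essentially the same route as the paper: decompose $B$ into its linear and conjugate-linear parts, derive the identities $L^*L - A^*A = I$, $LL^* - AA^* = I$ and the mixed relations, polar-decompose $L$ to obtain $V$ and $P$, define $C$ on $\overline{\Ran\sinh P}$ via $-V^*A$, and extend by an arbitrary conjugation on $\Ker P$. The only (harmless) variation is organisational: the paper obtains the commutation of $C$ with $\sinh P$ from $LL^* = AA^* + I$ together with the polar isometry of $A$, whereas you propose to extract both the commutation and the self-adjointness of $C$ from the pair of mixed identities $L^*A = A^*L$ and $LA^* = AL^*$, which also works.
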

\begin{proof}
(a) (i)
This is readily verified.

(ii) Set $B = \BVCP$, and let $L$ and $A$ be its linear and
conjugate-linear parts. Then
\[
V \cosh P = L = V' \cosh P' \quad \text{and } \quad %
-V C \sinh P = A = -V' C' \sinh P'.
\]
Since the bounded operators $\cosh P$ and $\cosh P'$ are non-negative
and invertible, and $V$ and $V'$ are unitary, the uniqueness of polar
decompositions implies that $V' = V$ and $\cosh P' = \cosh P$.
The non-negativity of $P'$ and $P$ therefore implies that $P' = P$, and thus also
$C' \sinh P' = C \sinh P$. It follows that $C' f( P' ) = C f( P )$ for
all continuous functions $f: \Rplus \to \Comp$ satisfying $f(0) = 0$;
in particular $C' P = C P$, so $C'$ and $C$ agree on $\Ran P$.

(b) Let $L$ and $A$ denote the linear and conjugate-linear parts
of~$B$. It follows from the proof of Proposition~\ref{propn: B bdd}
that $L^*$ and $- A^*$ are respectively the linear and
conjugate-linear parts of $B^{-1}$, so
\[
I_\hil = ( L^* - A^* ) ( L + A ) = %
L^*L - A^*A + L^*A - A^*L.
\]
Therefore, taking linear and conjugate-linear parts,
\begin{equation}
\label{eqn: LIAone}
L^* L = A^*A + I_\hil \qquad\text{and} \qquad L^* A = A^*L.
\end{equation}
Applying the first of these identities to the symplectic automorphism
$B^{-1}$, we see that
\begin{equation}
\label{eqn: LIAtwo}
L L^* = A A^* + I_\hil.
\end{equation}
Let $V | L |$ and $W | A |$ be the polar decompositions of $L$
and~$A$, respectively, and set $\Kil := \Ker | A |$
and~$\Kil^* := \Ker | A^* |$. The conjugate-linear partial isometry
$W$ has initial space $\Kil^\perp$ and final space~$\Kil^{* \perp}$,
and the identities~(\ref{eqn: LIAone}) and~(\ref{eqn: LIAtwo}) imply
that $L$ is invertible, so $V$ is unitary, and~$| L | \ges I_\hil$.
Thus there exists a unique non-negative operator
$P \in B(\hil)$ such that~$| L | = \cosh P$
and~$| A | = ( | L |^2 - I_\hil )^{1/2} = \sinh P$. Now
$| L^* | = V | L | V^*$ and $| L | = V^* | L^* | V$ so,
for all $x \in \Kil$ and~$z \in \Kil^*$,
\[
| L^* | V x =  V | L | x = Vx
\qquad \text{and} \qquad | L | V^* z =  V^* | L^* | z = V^* z,
\]
which implies that
$V \Kil \subseteq \Kil^*$ and $V^* \Kil^* \subseteq \Kil$. Hence
$V \Kil = \Kil^*$, and therefore also $V \Kil^\perp = \Kil^{* \perp}$.
It follows that, on $\hil = \Kil \op \Kil^\perp$,
$V^*W$ has the form $\{ 0 \} \op D_1$
for an anti-unitary operator~$D_1$ on~$\Kil^\perp$.
Therefore, setting $D := D_0 \oplus D_1$
for an arbitrary conjugation $D_0$ on $\Kil$,
\[
B = L + A = V | L | + W | A | = V ( \cosh P + D \sinh P ),
\]
$|A| D = |A| V^*W$ and $V^*W |A| = D |A|$.
Thus,
using the
identities~(\ref{eqn: LIAone}) and~(\ref{eqn: LIAtwo}) once more,
\[
| A | V^* W = ( | L |^2 - I )^{1/2} V^* W = %
V^* ( | L^* |^2 - I )^{1/2} W = V^* | A^* | W = V^* W | A |.
\]
Therefore $D$ commutes with $| A | = \sinh P$ and so commutes with all
continuous functions of $\sinh P$ such as $P$ itself and
$|L|$. The second identity in~(\ref{eqn: LIAone}) now
implies that
\begin{align*}
\ip{ |L| x }{ D |A| y } = \ip{ L x }{ A y } = %
\ip{ L y }{ A x } & = \ip{ |L| y }{ D |A| x } \\
 & = \ip{ |A| y }{ D |L| x } = \ip{ |L| x }{ D^* |A| y } \quad %
\text{for all } x, y \in \hil,
\end{align*}
so $D$ and $D^*$ agree on $\Ranbar |A| = \Kil^\perp$, and thus
$D^*_1 = D_1$. But $D^*_0 = D_0$, since $D_0$ is a conjugation on
$\Kil$, therefore $D^* = D$ and so the anti-unitary operator $D$ is a
conjugation on $\hil$. The proof is now completed by letting $C$ be
the conjugation $-D$.
\end{proof}

{\bf Acknowledgement.} This work was supported by the Leverhulme Trust
Research Project Grant RPG-2014-196
\emph{Quantum random walks and quasi-free quantum stochastic calculus}.

 \end{document}